\theoremstyle{plain}
\newtheorem{theorem}{Theorem}[section]
\newtheorem{lemma}[theorem]{Lemma}
\newtheorem{proposition}[theorem]{Proposition}
\theoremstyle{definition}
\newtheorem{definition}[theorem]{Definition}
\newtheorem{assumption}[theorem]{Assumption}
\theoremstyle{remark}
\begin{document}


\title{A Learning- and Scenario-based MPC Design for Nonlinear Systems in LPV Framework with Safety and Stability Guarantees}

\author{
\name{Yajie Bao\textsuperscript{a}\thanks{Corresponding author: Yajie Bao (yb18054@uga.edu)}, Hossam S. Abbas\textsuperscript{b}, and Javad Mohammadpour Velni\textsuperscript{c}}
\affil{\textsuperscript{a}School of Electrical \& Computer Engineering, The University of Georgia, GA, USA}
\affil{\textsuperscript{b}Institute for Electrical Engineering in Medicine, University of L\"ubeck, L\"ubeck, Germany}
\affil{\textsuperscript{c}Department of Mechanical Engineering, Clemson University, SC, USA}
}

\maketitle

\begin{abstract}
This paper presents a learning- and scenario-based model predictive control (MPC) design approach for systems modeled in the linear parameter-varying (LPV) framework. Using input-output data collected from the system, a state-space LPV model with uncertainty quantification is first learned through the variational Bayesian inference Neural Network (BNN) approach. The learned probabilistic model is assumed to contain the true dynamics of the system with a high probability and is used to generate scenarios that ensure safety for a scenario-based MPC. Moreover, to guarantee stability and enhance the performance of the closed-loop system, a parameter-dependent terminal cost and controller, as well as a terminal robust positive invariant set  are designed. Numerical examples will be used to demonstrate that the proposed control design approach can ensure safety and achieve desired control performance.
\end{abstract}

\begin{keywords}
Safe scenario-based model predictive control; learning-based control design; linear parameter-varying framework; Bayesian neural networks
\end{keywords}

\section{INTRODUCTION}
Model predictive control has been widely used to control a given process while satisfying a set of constraints and found applications in various domains including vehicular technology \citep{9374094} and chemical processes \citep{ellis2017economic}. Furthermore, learning-based model predictive control (L-MPC) has increasingly received interest for control of complex safety-critical systems operating in uncertain and hard-to-model environments \citep{aswani2013provably,koller2018learning,hewing2020learning}. Uncertainties and hard-to-model dynamics of the environments are learned from data for L-MPC to improve the control performance and guarantee constraints satisfaction. However, the statistical nature of learning-based methods brings new challenges including the generalizability of the learned models and the computational complexity involved in the MPC design \citep{mesbah2018stochastic,9304310}.

Suitable and sufficiently accurate model representations of the system dynamics are crucial to MPC performance, and distributional information on the uncertainties reduces the conservativeness of compact uncertainty sets. Gaussian process (GP) regression is a commonly used non-parametric learning method to identify residual model uncertainty, which provides a point-wise approximation of unknown errors with mean and covariance characterization. However, GP suffers from high computational complexity which grows with the number of recorded data points. Additionally, GP-based MPC design faces the challenge of uncertainty propagation (i.e., the propagation of the resulting stochastic state distributions) over the MPC prediction horizon. This problem becomes even more exacerbated when the known nominal part of the system model is nonlinear. To address this, \citet{koller2019learningbased} linearized the nominal part for uncertainty propagation.

Linear parameter-varying (LPV) models use a linear structure to capture time-varying and nonlinear dynamics of a system with system matrices dependent on so-called \textit{scheduling variable(s)}, and this allows developing computationally efficient design methods \citep{hanema2018anticipative}. Nonlinear systems can be embedded in LPV representations \citep{abbas2021lpv}; as an example, linear switching systems and Markov jump linear systems can be viewed as particular cases of LPV systems with the scheduling variables being a switching sequence and a Markov chain, respectively. Moreover, learning-based methods for the global identification of state-space LPV (LPV-SS) models with arbitrary scheduling dependency using input-output data have been developed \citep{bao2020ifac}, and a variational Bayesian inference Neural Network (BNN) approach \citep{bao2020cdc} has been proposed to quantify uncertainties in state-space LPV model identification of nonlinear systems, which provides a posterior density estimation of the system model parameters given an input-output dataset. 

Different LPV-MPC design schemes given system models have been recently surveyed in \citep{morato2020model}. One challenge with the MPC design in the LPV framework lies in the unknown future evolution of the LPV scheduling variables over the prediction horizon. Two main approaches have been considered in the literature to handle this difficulty: min-max MPC formulation over all possible scheduling trajectories \citep{LEE1997763}, and tube-based design, where possible future trajectories are exploited to reduce the uncertainty in the scheduling variables evolution. \citet{hanema2020heterogeneously} proposed a heterogeneously parameterized tube-based MPC approach with recursive feasibility and stability guarantees for LPV systems without considering uncertainty in system models and disturbance. Moreover, \citet{calafiore2013stochastic} proposed a scenario-based MPC for constrained discrete-time LPV models with bounded scheduling dependency and stochastic scheduling variables. However, \citet{calafiore2013stochastic} assumed that the terminal control law associated with the terminal set of the system model is given, which is not practical. Therefore, the existing LPV-MPC approaches are not directly applicable to learning-based LPV models, due to the high complexity of the learned models with arbitrary scheduling dependency and the complex joint uncertainty in the learned models and the scheduling variables.

In this paper, we assume no true system model is available, but input-output data are given. For data-driven LPV-SS model identification using only input-output data, the majority of the current LPV identification methods, including direct prediction-error minimization (PEM) methods, as well as global subspace and realization-based techniques (SID), assume an affine scheduling dependency with known basis functions, which restricts the complexity of a representation \citep{cox2018towards}. \citet{rizvi2018state} used kernelized canonical correlation analysis (KCCA) to estimate the state sequence and then a least-squares support vector machine (LS-SVM) to capture the dependency structure, which suffers from the kernel function selection and computational complexity. The expectation-maximization algorithms estimate states and matrices alternatively \citep{wills2012system}. To simultaneously estimate states and explore LPV model structural dependency, \citet{bao2020ifac} presented an integrated architecture of artificial neural networks (ANNs). However, the aforementioned methods focus on estimating a set of deterministic parameters rather than characterizing the statistical properties of the estimation, which typically produce good models in the sense of minimizing the expected loss. However, the accuracy under a few operating points can be poor, which can later result in a low-performing controller and safety violation. Furthermore, robust control techniques cannot be employed without quantifying the uncertainty of the estimated model. Gaussian process (GP) has been used to quantify model uncertainty but suffers from cubic complexity to data size and assumes joint Gaussian distributions to describe uncertainties \citep{liu2020gaussian}. Instead, BNNs can provide a fast evaluation of uncertainties after training and approximate arbitrary posterior distributions. \citet{bao2020cdc} proposed a BNN training approach based on \citep{bao2020ifac} to quantify the epistemic uncertainty in the ANN model. In this paper, we employ the proposed BNN architecture to identify an LPV model with uncertainty quantification for robust estimation and control purposes.

Moreover, the quantified uncertainties in the learned LPV model and the future scheduling trajectory will be considered simultaneously for L-MPC design. In particular, based on the characterization of the joint uncertainties, we construct tubes of linear models that contain the true system dynamics almost surely. To ensure safety, the system constraints are enforced by considering the worst case in the model tube. Additionally, chance constraints can be handled by adjusting the tube based on the BNN model. Furthermore, we use the expectation of the cost over the possible system trajectories as the cost function. \textit{Scenario-based MPC} (SMPC) is adopted here to approximate analytically intractable evolution of uncertainties and improve online computational efficiency. Several methods of scenario generation for SMPC have been proposed in the literature, including Monte Carlo (MC) sampling methods \citep{shapiro2003monte}, moment matching methods \citep{hoyland2003heuristic}, and even machine learning techniques \citep{defourny2010machine}. Despite these efforts, the existing methods are typically only applied to convex problems and assume full recourse. Since BNNs are evaluated using MC methods, a straightforward approach for scenario generation is to use models drawn from the posterior distributions as scenarios. However, the number of models required for safety guarantees can be too large for online optimization of the SMPC design. To reduce the number of scenarios, \citet{bao2022acc} used $\hat{\mu}_{\mathrm{\hat{y}}(k)}$, $\hat{\mu}_{\mathrm{\hat{y}}(k)}+a^{j}\hat{\sigma}_{\mathrm{\hat{y}}(k)}$, $\hat{\mu}_{\mathrm{\hat{y}}(k)}-a^{j}\hat{\sigma}_{\mathrm{\hat{y}}(k)},j=1,\cdots,\frac{n_{s}-1}{2}$ where $\hat{\mu}_{\mathrm{\hat{y}}(k)}$ and $\hat{\sigma}_{\mathrm{\hat{y}}(k)}$ are the sample mean and standard deviation of the predictions $\mathrm{\hat{y}}(k)$ of uncertainties at time step $k$ by BNNs, and $a^{j}$'s are the tuning multipliers and $n_s$ is the number of scenarios at each node of a stage. However, \citet{bao2022acc} used fixed $a^{j}$'s for each time step, which may not well represent the distributions of uncertainties. In this work, we use K-means \citep{lloyd1982least}, a popular clustering method in machine learning with convergence guarantees, to quantize the sample models. In particular, we apply K-means clustering to possible system matrices at each time step to construct scenarios. Additionally, the distributions of the system matrices are estimated using the identified LPV model and the knowledge of the scheduling variables by Monte Carlo methods. With the scenarios generated using K-means, to maintain the original statistical properties of the system matrices distributions, the probability of the scenarios is estimated by a moment-matching optimization method for matching the first four central moments.

Furthermore, to guarantee the stability of the closed-loop system and the recursive feasibility of the associated MPC optimization problem, we present a learning-based approach for terminal ingredients design. In particular, we transform the BNN model with nonlinear scheduling dependency into an LPV form with affine scheduling dependency and compute a terminal constraint as a robust positive invariant set and a terminal cost as a parameter-dependent poly-quadratic Lyapunov function using related LPV tools \citep{pandey2017quadratic} based on the transformed model. The latter is computed by solving a linear matrix inequality (LMI) problem corresponding to the extreme realizations of the scenarios, which provide a parameter-dependent terminal controller based on the generated scenarios that can improve the control performance. To the best of the authors' knowledge, this is the first work on learning-based terminal control design in the LPV framework that is applicable to general nonlinear systems using only the input-output data. Additionally, the BNN model can be updated online using the framework developed in \citep{bao2020online} with new observations collected by applying the MPC law to the real system. The updated model is anticipated to better characterize the uncertainty of the system, which in turn reduces the conservativeness required to ensure safety and hence improve the control performance. Fig. \ref{fig:flow} shows the flow chart of the overall learning-based SMPC design procedure. 
\begin{figure}
    \centering
    \includegraphics[width=\textwidth]{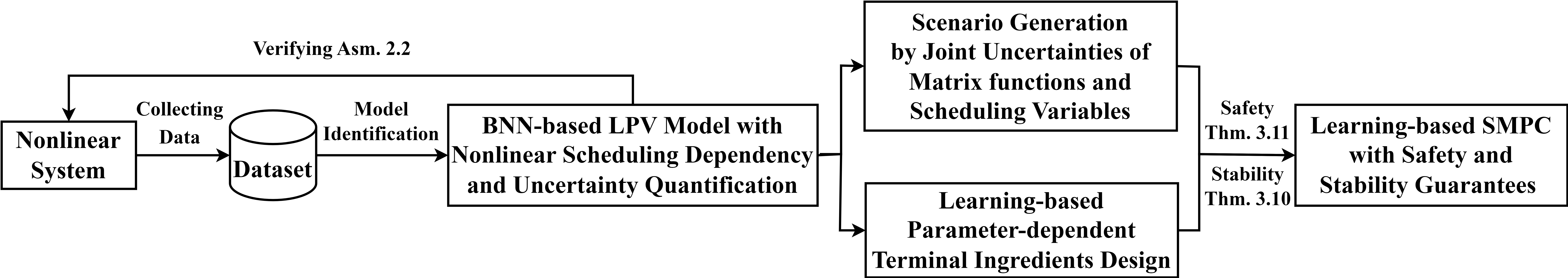}
    \caption{The flow chart of the overall learning-based MPC design procedure.}
    \label{fig:flow}
\end{figure}
 
The contributions of this paper are three-fold: 
\begin{enumerate}
    \item  We present a learning- and scenario-based robust MPC design approach. The proposed approach learns an LPV model with generally nonlinear scheduling dependency from data and thus is applicable to a broader class of nonlinear systems than the existing works that assume a given LPV model with affine scheduling dependency. The proposed approach tackles the challenges of model bias for learning-based control by uncertainty quantification using BNNs and robust control using scenario optimization. The proposed approach can also handle the joint uncertainties of the matrix functions and scheduling variables in the LPV model while most existing LPV-MPC works only consider the uncertainty of scheduling variables.
    \item  We present a learning-based terminal ingredient design for scenario-based MPC using BNN models in the LPV framework. The proposed design reduces conservativeness by considering parameter-dependent terminal ingredients while most existing works consider static terminal ingredients. The proposed design is applicable to LPV models with nonlinear scheduling dependency while most existing works only assume affine scheduling dependency.
    \item  We provide safety and stability guarantees for the proposed MPC scheme.
\end{enumerate}The challenges of the proposed approach lie in learning a sufficiently accurate model from data and reducing the conservativeness of uncertainty quantification and scenario generation for control design purposes. The remainder of the paper is organized as follows: Section 2 describes the problem formulation and preliminaries. Scenario-based MPC design approach using identified BNN models is presented in Section 3. Section 4 then presents numerical results to validate the proposed learning-based control design method. Concluding remarks are finally made in Section 5.


\section{Preliminaries}
\subsection{Basic Definitions}
A set with a non-empty interior that contains the origin is called a proper set, and a proper set that is also compact and convex is called a PC-set. In data-driven methods, a dataset is randomly split into a \textit{training set} for training a model and a \textit{testing set} for testing the generalization of the trained model. In probability theory, an event is said to happen \textit{almost surely} if it happens with probability 1 (or Lebesgue measure 1). A function $f: \mathbb{R}_{+} \xrightarrow[]{} \mathbb{R}_{+}$ is of class $\mathcal{K}_{\infty}$ if it is continuous, strictly increasing, $f (0) = 0$, and $\lim_{\xi\xrightarrow[]{}\infty} f (\xi) = \infty$. A variable $\theta$ is said to evolve according to a bounded rate-of-variation (ROV) if for all time samples $k \in \mathbb{N}$, there exists a $\delta$ such that $|\theta(k+1)-\theta(k)|\leq \delta$. 

\subsection{Problem Formulation}

We consider a constrained discrete-time nonlinear system represented by
\begin{subequations}\label{eq:nonlinear}
\begin{align}
        x(k+1) = f\left(x(k),u(k)\right) \\
        x(k)\in \mathbb{X},~~u(k)\in \mathbb{U}, ~~k\in \mathbb{N},
\end{align}
\end{subequations}
where $f(\cdot)$ is an unknown nonlinear function, $x(k)$ and $u(k)$ denote the states and control inputs at time sample $k$, respectively. $\mathbb{X} \subseteq \mathbb{R}^{n_{x}}$ and $\mathbb{U} \subseteq \mathbb{R}^{n_{u}}$ are the input and state constraint sets.
We can embed the nonlinear representation \eqref{eq:nonlinear} into the following discrete-time state-space LPV representation 
\begin{align}
        x(k+1) = A\left(\theta(k)\right)x(k)+B\left(\theta(k)\right)u(k), \label{eq:lpv} \\
        x(k)\in \mathbb{X},~~u(k)\in \mathbb{U}, ~~k\in \mathbb{N},\label{eq:con} 
\end{align}
where $\theta(k) \in \Theta \subseteq \mathbb{R}^{n_{\theta}}$ denotes the scheduling variables at time sample $k$. The scheduling variables are (nonlinear) functions of inputs/states, but are converted into an exogenous signal by confining the values of $\theta$ to some suitable set $\Theta$ such that the associated set of admissible trajectories (i.e., the set of input and output signals that are compatible with the dynamics) of \eqref{eq:lpv} is a superset of the set of trajectories of the original nonlinear system \eqref{eq:nonlinear} \citep{hanema2018anticipative}. Furthermore, $A$ and $B$ are smooth nonlinear matrix functions of $\theta(k)$. $x(k)$ and $\theta(k)$ can be measured at every time instant $k$. $\mathbb{X}$ and $\mathbb{U}$ are assumed to be PC-sets. Additionally, it is assumed that the future behavior of $\theta$ is not known exactly at time instant $k$ and that the matrix functions $A(\cdot)$ and $B(\cdot)$ are unknown.

Given an initial state $x_{0}$, a scheduling signal $\theta:\mathbb{N}\rightarrow\Theta$, and a control law $\kappa:\mathbb{X}\times\Theta\times\mathbb{N} \rightarrow\mathbb{U}$, the closed-loop system can be described by
\begin{equation}
\label{eq:lpv_closed}
x(k+1)=A(\theta(k))x(k)+B(\theta(k))\kappa(x(k),\theta(k),k)\triangleq\Phi_{\kappa}(x(k),\theta(k),k).
\end{equation}
Additionally, we use $\mathbf{x}(k|\theta,x_{0})$ (resp. $\hat{\mathbf{x}}(k|\theta,x_{0})$) to denote the solution $x(k)$ (resp. $\hat{x}(k)$) to (\ref{eq:lpv_closed}) with the representation (\ref{eq:lpv}) (resp. a data-driven model).
\begin{definition}\label{def:safe}
Given an initial state $x_{0}\in \mathbb{X}$, the system (\ref{eq:lpv}) is said to be \textbf{safe} under a control law $\kappa$ if
\begin{equation}
\label{cond:safe}
\forall k \in \mathbb{N}: \Phi_{\kappa}(x(k),\theta(k),k)\in\mathbb{X},~~~ \kappa(x(k),\theta(k),k)\in \mathbb{U}.
\end{equation}
Moreover, the system (\ref{eq:lpv}) is said to be \textbf{$\delta$-safe} under the control law $\kappa$ if 
\begin{equation}
\label{cond:delta_safe}
\text{Pr}\left[\forall k \in \mathbb{N}:\Phi_{\kappa}(x(k),\theta(k),k)\in\mathbb{X}, \kappa(x(k),\theta(k),k)\in \mathbb{U}\right]\geq \delta
\end{equation}
where $0\leq\delta\leq 1$, and $\text{Pr}[\cdot]$ denotes the probability of an event.
\end{definition}
In general, (\ref{cond:safe}) cannot be enforced without additional assumptions \citep{koller2018learning} especially when (\ref{eq:lpv}) is unknown. Furthermore, $\delta$-safety relaxes the requirements of safety to \textit{safety with a high probability}.
The problem addressed in this paper is to design a learning-based model predictive controller $\kappa: \mathbb{X}\times\Theta\times\mathbb{N}~\xrightarrow[]{}\mathbb{U}$ using a dataset $\mathcal{D}=\{\left(\theta(k),x(k),u(k)\right),x(k+1)\}_{k=1}^{N_{\mathcal{D}}}$ collected from the system, which yields $x(k)~ \xrightarrow[]{} 0$ as $k~\xrightarrow[]{}\infty$ with the constraints \eqref{cond:delta_safe} to be satisfied. First, we briefly describe the proposed probabilistic approach to identify the state-space LPV (LPV-SS) model of the system using the available dataset $\mathcal{D}$. 

\subsection{LPV-SS Model Identification Using BNN}

The data-driven LPV model identification problem is to learn matrix functions $A(\cdot)$ and $B(\cdot)$ from the dataset $\mathcal{D}$. To model arbitrary scheduling dependency and have parametric representations of the system, \citet{bao2020ifac} used fully-connected ANNs to represent $A(\cdot)$ and $B(\cdot)$, and learned the parameters of the ANNs by minimizing the mean squared error (MSE) of the predictions of the ANN model. To quantify the epistemic uncertainty in the ANN model for robust estimation and control, \citet{bao2020cdc} treated the parameters of the matrix functions represented by ANNs as random variables and learned the posterior distributions of the parameters by BNNs \citep{blundell2015weight} composed of DenseVariational layers to represent the matrix functions. 

A BNN approximates the posterior density of the parameters by variational inference given a prior density. In particular, a scaled mixture of two Gaussian densities \citep{blundell2015weight}
\begin{equation}
\label{eq:priors}
    p(w_{j})= \rho_{\text{mix},j}\mathcal{N}(w_{j}|0, \sigma_{j,1}^{2})+(1-\rho_{\text{mix},j})\mathcal{N}(w_{j}|0, \sigma_{j,2}^{2}), 
\end{equation}
with the tuning parameter $\rho_{\text{mix},j}$, is used as the prior density of the parameters $w_{j}$ (including the weights and bias if exists) in the $j$-th layer. Eq. (\ref{eq:priors}) can represent both a heavy tail by a large $\sigma_{j,1}$ and concentration by a small $\sigma _{j,2}$. Furthermore, $\rho_{\text{mix},j}$, $\sigma_{j,1}$, and $\sigma_{j,2}$ are determined using cross validation \citep{hastie2009model}. Variational inference (VI) approximates difficult-to-compute probability density functions by finding a member from a family of densities that is closest to the target in the sense of Kullback–Leibler (KL) divergence \citep{Blei_2017}. To approximate the posterior $p(w_j|\mathcal{D})$, VI solves 
\begin{align}
 ~~~~~~& \min _{\vartheta_{j}} \textup{KL}\Big(q(w _j;\vartheta_{j})\|p(w_j|\mathcal{D})\Big) \label{eq:cst}  \\ 
\Leftrightarrow & ~~~\min _{\vartheta_{j}} \textup{KL}\Big(q(w_j;\vartheta_{j})\|p(w_j)\Big)-\mathbb{E}_{q(w_j;\vartheta_{j})}\left[\log p(\mathcal{D}|w_j) \right] \nonumber \\
\Leftrightarrow  & ~~~\min _{\vartheta_{j}} \Big(\mathbb{E}_{q(w_j;\vartheta_{j})}\left[\log q(w_j;\vartheta_{j})\right]-\mathbb{E}_{q(w_j;\vartheta_{j})}\left[\log p(w_j) \right]  -\mathbb{E}_{q(w_j;\vartheta_{j})}\left[\log p(\mathcal{D}|w_j) \right]\Big), \label{eq:cost}
\end{align}
where $q(w_j;\vartheta_{j})$ denotes a family of densities with parameters $\vartheta_{j}$. The function in (\ref{eq:cost}) is known as the evidence lower bound (ELBO) \citep{Blei_2017}. To solve (\ref{eq:cost}) by Monte Carlo (MC) methods and backpropagation, a reparameterization trick is used to parameterize $q(w_j;\vartheta_{j})$,  i.e., $w_j=\mu_{w{_j}} + \sigma_{w_{j}} \bigodot \epsilon _{w_j}$ where $\bigodot$ denotes the element-wise multiplication, $\epsilon _{w_j} \sim \mathcal{N}(0, I)$, and thus $\vartheta_{j}=(\mu_{w_{j}},\sigma_{w_{j}})$ here. Compared with a Dense layer (i.e., a fully-connected layer with parameters $w_{j}$), a DenseVariational layer (with parameters $\mu_{w_{j}}$ and $\sigma_{w_{j}}$) doubles the number of parameters and requires minimizing ELBO in (\ref{eq:cost}) for uncertainty quantification of $w_{j}$. Similar to ANNs, a BNN can be composed of multiple fully-connected DenseVariational layers. 
\begin{figure}[!htbp]
    \centering
    \includegraphics[width=4in]{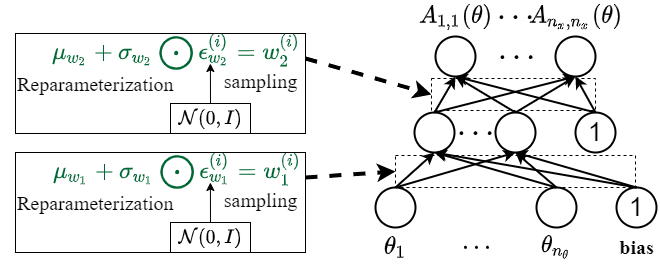}
    \caption{\label{fig:bnn}Using a BNN composed of multiple (here, two) DenseVariational layers to represent $A(\cdot)$ with reparameterization trick. Here, the input to the BNN is $\theta$ and the output is the vectorized $A(\theta)$, which once reshaped, provides the full matrix $A$. BNNs use data to learn the parameters $\mu_{w}$ and $\sigma_{w}$ of the posterior density function.}
\end{figure}

Fig. \ref{fig:bnn} shows how a BNN is used to represent $A(\theta)$; $B(\theta)$ is represented similarly by another BNN. Using $f_{A}^{w}$ and $f_{B}^{w}$ to denote the BNNs representing $A$ and $B$ respectively, the BNN model of the system is described by
\begin{equation}
\label{eq:bnn}
\hat{x}(k+1)=f^{w}(\theta(k),x(k),u(k))=f_{A}^{w}(\theta(k))x(k)+f_{B}^{w}(\theta(k))u(k),
\end{equation}
where $f^{w}$ can be learned by minimizing
\begin{equation}
    \frac{1}{N_\text{BNN}}\sum _{i=1}^{N_\text{BNN}}\left[\log q(w^{(i)};\vartheta)-\log p(w^{(i)})-\log p(\mathcal{D}|w^{(i)}) \right],
\end{equation}
over $\vartheta$ using the dataset $\mathcal{D}$ where $w^{(i)}$ is the $i$-th sample generated by MC for approximating the ELBO, and $N_\text{BNN}$ is the MC sample size determined such that \eqref{eq:bnn} is convergent. Furthermore, as the discussion of the trade-off between bias and variance \citep{bao2020cdc}, using BNNs to represent both $A$ and $B$ increases not only the expressiveness of the LPV model but also the computational cost and compromises the convergence efficiency of the BNN training. It is hence reasonable to only represent $A$ with BNNs and still represent $B$ with ANNs, as $A$ has a larger impact on the system description than other matrix functions. Therefore, in this paper, we consider using BNNs only to represent the matrix $A$, but the proposed approaches can be easily extended to the case where both $A$ and $B$ are represented by BNNs.

Using the trained BNNs, the density of the matrix functions at a given scheduling variable can be evaluated by drawing samples from the posteriors of weights and calculating the possible matrices with each set of sampled weights. Rather than directly estimating the density from samples, we calculate the statistics such as the mean and standard deviation of each element of the matrices, which is efficient and sufficient for constructing a confidence interval of $x(k+1)$ to check \eqref{cond:delta_safe}. The number of samples is determined to guarantee a stable estimation. To provide safety guarantees, we need reliable estimates of the state $x$ inside the operating region $\mathbb{X}\times\mathbb{U}$, which is similar to \citep{bao2022safe} and formally described in the following assumption:
\begin{assumption} \label{assp:lpv}
For a confidence level $\delta_{p} \in (0,1]$, there exists a scaling factor $\beta$ such that with probability greater than $1-\delta_{p}$,
\begin{equation} \label{cond:state}
\forall k \in \mathbb{N}: |x_{j}(k+1) - \hat{\mu} _{x_{j}(k+1)}|\leq \beta _{j} \hat{\sigma} _{x_{j}(k+1)}<|\mathbb{X}_{j}|, ~~~ j = 1, 2, \cdots , n_{x}, 
\end{equation}
given $(x(k), \theta(k), u(k))$ where $\hat{\mu} _{x_{j}(k+1)}$ and $\hat{\sigma} _{x_{j}(k+1)}$ respectively denote the estimated mean and standard deviation of the $j$-th entry of $x(k+1)$ using the learned BNN model with Monte Carlo methods, and $|\mathbb{X}_{j}|$ is used to denote the range of valid $x_{j}$.  
\end{assumption} 

By Assumption \ref{assp:lpv}, the learned model is sufficiently accurate such that the values of $x(k+1)$ of the system are contained in the confidence intervals of our statistical model. It is noted that a larger $\beta_{j}\hat{\sigma}_{x_{j}(k+1)}$ means larger uncertainties of the model and gives a more conservative estimate of $x_{j}(k+1)$ which overestimates the probability of constraints violation, reduces the feasible region of control inputs, and thus degrades control performance. If $\beta_{j}\hat{\sigma} _{x_{j}(k+1)}\geq|\mathbb{X}_{j}|$, the estimate is worse than random guess of $x_{j}(k+1)$, which is not useful for control. The above assumption can be enforced by a well-designed BNN trained on a sufficient dataset and empirically verified on the testing set after model training. Moreover, $\delta_{p}$ can be estimated as the relative frequency of the testing data that violates (\ref{cond:state}) given $\beta$. If Assumption \ref{assp:lpv} does not hold, the architecture of the BNN should be adjusted or more data should be collected for training to improve the accuracy of the BNN until the hypotheses of Assumption \ref{assp:lpv} are satisfied.

\begin{lemma} 
\label{lemma:contain}
Given $x_{0}$, a scheduling signal $\theta$, a BNN model that satisfies Assumption \ref{assp:lpv}, and a confidence level $\delta_{c}$, there exists a scalar $N_{\text{MC}}$ such that 
\begin{equation}
\label{eq:multi}
\begin{split}
&\textup{Pr}\left[\forall k \in \mathbb{N}: \mathbf{x}_{j}(k|\theta, x_{0})\in \left[\min_{i}\hat{\mathbf{x}}_{j}^{(i)}(k|\theta, x_{0}), \max_{i}\hat{\mathbf{x}}_{j}^{(i)}(k|\theta, x_{0}) \right]\right]\geq 1- \delta_{c}, \\
&~~~~~~~~i=1,\cdots, N_{\text{MC}},~~j=1,2,\cdots,n_{x},
\end{split}
\end{equation}
where $N_{\text{MC}}$ is the number of models drawn from the BNN model using MC methods. 
\end{lemma}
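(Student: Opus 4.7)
The plan is to reduce the claim to the confidence-interval guarantee of Assumption~\ref{assp:lpv} and then show that enough Monte Carlo draws from the learned BNN posterior enlarge the empirical range $[\min_i \hat{\mathbf{x}}_j^{(i)}, \max_i \hat{\mathbf{x}}_j^{(i)}]$ so that it contains the Gaussian-type confidence band $[\hat{\mu}_{x_j} - \beta_j \hat{\sigma}_{x_j},\, \hat{\mu}_{x_j} + \beta_j \hat{\sigma}_{x_j}]$ at every time step. Once the empirical band contains the analytic band, Assumption~\ref{assp:lpv} places the true state inside it, and the conclusion follows.

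First, I would invoke Assumption~\ref{assp:lpv} one-step recursively along the trajectory to conclude that, with probability at least $1-\delta$ (after absorbing the per-step failure probabilities into a single $\delta$ by a union bound, shrinking the per-step tolerance if necessary), the true trajectory component $\mathbf{x}_j(k|\theta,x_0)$ lies within $\beta_j \hat{\sigma}_{x_j(k)}$ of $\hat{\mu}_{x_j(k)}$ for every $k\in\mathbb{N}$ and every $j$. In closed-loop under a stabilizing law the states decay to the origin, so the variances $\hat{\sigma}_{x_j(k)}$ shrink and only finitely many $k$ contribute non-trivial mass, making the union bound summable.

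Next, I would use the reparameterization $w = \mu_w + \sigma_w \odot \epsilon$ of the BNN to show that the MC draws $\hat{\mathbf{x}}_j^{(i)}(k|\theta,x_0)$ are (approximately) samples from the BNN-induced predictive distribution whose mean and standard deviation are exactly $\hat{\mu}_{x_j(k)}$ and $\hat{\sigma}_{x_j(k)}$. A standard order-statistic calculation then gives $\mathrm{Pr}[\max_i \hat{\mathbf{x}}_j^{(i)} \geq \hat{\mu}_{x_j(k)} + \beta_j \hat{\sigma}_{x_j(k)}] = 1-(1-p_\beta)^{N_{\text{MC}}}$, where $p_\beta$ is the one-sided tail mass beyond $\beta_j$; the analogous bound holds for the minimum. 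Choosing $N_{\text{MC}}$ large enough that $(1-p_\beta)^{N_{\text{MC}}} \leq \delta/(2 n_x k_{\max})$ per step (or summable in $k$) ensures that the empirical band dominates the analytic band simultaneously for every $k$ and $j$ with probability at least $1-\delta$.

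Finally, combining the two high-probability events by a union bound and relabeling the confidence level yields \eqref{eq:multi}. The main obstacle is the uniformity in $k\in\mathbb{N}$: since the statement requires a single $N_{\text{MC}}$ to work for the entire infinite horizon, I would lean on the shrinking of $\hat{\sigma}_{x_j(k)}$ along stabilizing trajectories (or restrict attention to a sufficiently long but finite prediction horizon, the regime of interest in MPC) to keep the union bound finite; otherwise, a careful summable allocation of failure probabilities $\delta_k$ with $\sum_k \delta_k \leq \delta$ is needed, and this is the step most sensitive to the regularity assumed of the BNN predictive distribution.
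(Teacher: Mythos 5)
Your proposal follows the same skeleton as the paper's proof --- apply Assumption \ref{assp:lpv} at each time step to get containment of the true state in the empirical range of the MC draws, then propagate forward by induction and take a single $N_{\text{MC}}$ that works for all steps --- but you fill in two pieces the paper leaves implicit, and in doing so you both add value and create a difficulty the paper does not have. First, the paper simply asserts that at each step there exists an $N_{\text{MC}}(k)$ making the containment hold \emph{almost surely}, explicitly declining to analyze the probability; your order-statistics argument is the missing mechanism, but note that it silently requires the BNN predictive distribution to place strictly positive tail mass beyond $\hat{\mu}_{x_j(k)} \pm \beta_j\hat{\sigma}_{x_j(k)}$ (your $p_\beta > 0$). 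Neither Assumption \ref{assp:lpv} nor the reparameterization guarantees this: if the predictive distribution has bounded support strictly inside the $\pm\beta_j\hat{\sigma}$ band, no number of draws will ever cover it, so this hypothesis should be stated. Second, your union bound over $k \in \mathbb{N}$ is self-inflicted: Assumption \ref{assp:lpv} already places the quantifier ``$\forall k \in \mathbb{N}$'' \emph{inside} a single event of probability at least $1-\delta$, so the paper never needs to allocate failure probabilities across time steps; re-deriving a per-step guarantee and re-aggregating forces you into the summability discussion, and the proposed escape via shrinking $\hat{\sigma}_{x_j(k)}$ along stabilizing trajectories is circular at this point in the development (stability is established only later, using this lemma). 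Finally, be aware that the remaining weak point --- whether $N_{\text{MC}} = \max_k N_{\text{MC}}(k)$ is finite over an infinite horizon --- is present in the paper's proof as well; your proposal at least names it, which is a genuine improvement in transparency even though neither argument fully resolves it.
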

\begin{proof}
Let $k=0$, $\hat{x}_{0}=x_{0}$, as $x_{0}$ and $\theta(0)$ are known. Then, using Assumption \ref{assp:lpv}, there exists an $N_{\text{MC}}(0)$ at time step $0$ such that
\begin{equation*}\mathbf{x}_{j}(1|x_{0},\theta(0))\in \left[\min_{1\leq i\leq N_{\text{MC}}(0)}\hat{\mathbf{x}}_{j}^{(i)}(1|x_{0},\theta(0)),\max_{1\leq i\leq N_{\text{MC}}(0)}\hat{\mathbf{x}}_{j}^{(i)}(1|x_{0},\theta(0))\right]
\end{equation*}
almost surely, $j=1,\cdots,n_{x}$. Using induction, (\ref{eq:multi}) is obtained as $N_{\text{MC}}=\max_{k}N_{\text{MC}}(k)$. \end{proof}
It is noted that \textit{almost surely} is used in the proof to avoid the analysis of $\textrm{Pr}$ in (\ref{eq:multi}) which involves the analysis of the closed-loop system and the BNN models and is unnecessary for the proposed approach, although using confidence level can decrease $N_{\text{MC}}$. Lemma \ref{lemma:contain} guarantees that, with a high probability, the system state trajectory is always contained in the multiple trajectories simulated by the BNN model. The uncertainties in the evolution of scheduling variables will be addressed in Section 3.


\subsection{Scenario-based MPC Design Approach}

Given the distribution of the uncertainties described by the BNN model, stochastic MPC can be used to stabilize the system at the origin. In particular, the objective of the stochastic MPC problem is to minimize
\begin{align}
\label{eq:stochastic-mpc}
    \mathbb{E}\left\{\sum_{i=0}^{N-1}\ell(x(i|k),u(i|k))+V_{f}(x(N|k))\right\},
\end{align}
where $\mathbb{E}$ denotes the expected value operator over the random matrix functions and scheduling variables, $\ell(\cdot)$ is the stage cost function, and $V_{f}(\cdot)$ is the terminal cost function. It is noted that the joint uncertainties of matrix functions and scheduling signals are propagated forward through the prediction model \eqref{eq:bnn} over the prediction horizon and thus the closed-form probability density function of $x$ is hard to derive. Therefore, the problem of optimizing \eqref{eq:stochastic-mpc} with the BNN model is not directly solvable. 

Scenario-based MPC (SMPC) assumes that the uncertainty of a system may be represented by a tree of discrete scenarios which facilitates multi-step ahead predictions and feasibility guarantees. As a sufficiently large number of independent uncertainty realization paths by sampling and simulation can represent system uncertainty, applying reduction techniques to the paths can obtain representative scenarios while preserving statistical properties \citep{xu2012scenario} and reduce the computational complexity of SMPC. Any particular branch stemming from a node represents a particular scenario of an unknown, uncertain influence (e.g., from a disturbance or model error) \citep{lucia2013multi}. To represent the trajectories generated by some number $C$ scenarios, we adopt the notation $\left(x^{j}(i), u^{j}(i)\right)$, where the addition of the superscript $j$ indicates the particular scenario $j\in \{1,\ldots,C\}$.

The scenario-based optimal control problem for an uncertain system at time step $k$ can then be formulated as follows
\begin{subequations} \label{eq:ocp-general}
\begin{align}
    \min_{x^{j},u^{j}} &~~ \sum_{j=1}^{C} p^{j} \left[ \sum_{i=0}^{N-1} \ell\left(x^{j}(i|k),u^{j}(i|k)\right) + V_{f}\left(x^{j}(N|k)\right) \right] \label{eq:mpc-cost}\\
    \text{s.t.} &~~ x^{j}(i+1|k)=f_{A}^{w}(\theta^{j}(i|k))x^{j}(i|k)+f_{B}^{w}(\theta^{j}(i|k))u^{j}(i|k), \\
    &~~ \left(x^{j}(i|k),u^{j}(i|k)\right) \in \mathbb{X} \times \mathbb{U}, \\
    &~~ x^{j}(0|k) = x(k), \\
    &~~ u^{j}(i|k) = u^{l}(i|k) ~ \text{if} ~ x^{p(j)}(i|k) = x^{p(l)}(i|k), \label{eq:non-anticipativity}
\end{align}
\end{subequations}
where $p^{j}$ is the probability of the $j$-th scenario, $\ell\left(x^{j}(i|k),u^{j}(i|k)\right)$ is the stage cost, and $V_f\left(x^{j}(N|k)\right)$ is the terminal cost for the trajectory of the $j$-th scenario, $N$ is the prediction horizon, and \eqref{eq:non-anticipativity} enforces a \emph{non-anticipativity} constraint, which represents the fact that each control input that branches from the same parent node must be equal ($x^{p(j)}(i|k)$ is the parent state of $x^{j}(i+1|k)$). The non-anticipativity constraint is crucial to accurately model the real-time decision-making problem such that the control inputs do not anticipate the future (i.e., decisions cannot realize the uncertainty). The solution to this optimization problem is used to generate the control law
\begin{equation}
\label{eq:ctrl-input}
    \kappa\left(x(k)\right) = u^{0*}(0|k).
\end{equation}

Given the structure of the scenario tree, it is crucial to generate appropriate scenarios at each stage of the optimization to accurately represent the uncertainty of the system under consideration. Additionally, the computational cost of SMPC is proportional to the number of scenarios which is positively correlated with the coverage of the uncertainty space. Hence, the objective of constructing a scenario tree is to accurately approximate \eqref{eq:stochastic-mpc} with a relatively small number of scenarios.

To express the joint uncertainties of matrix functions and scheduling signals by scenario trees, we generate model paths by sampling the scheduling signals and simulating the BNN model and apply reduction techniques to the model paths for generating representative scenarios while preserving the statistical properties of uncertainty quantified by the BNN model. In particular, we use MC sampling methods and K-means, a clustering method in machine learning, to generate scenarios, as the uncertainties are described by a BNN model such that the propagation of uncertainties is intractable to analyze. In particular, MC methods are employed to sample models from the BNN model for selected scheduling trajectories. While Lemma \ref{lemma:contain} claims there exists a scalar $N_{\text{MC}}$ such that the trajectories of the sampled $N_{\text{MC}}$ models contain the system trajectory, $N_{\text{MC}}$ can be too large for online optimization of the SMPC problems. Instead, we apply K-means clustering to the $N_{\text{MC}}$ models to reduce the number of scenarios. K-means clustering is a vector quantization method which partitions $N_{\text{s}}$ observations/samples $\{\mathrm{x}^{(i)}\}_{i=1}^{N_{\text{s}}}$ into $C$ disjoint clusters $\{S_{c}\}_{c=1}^{C}$ by minimizing the within-cluster sum-of-squares variances (squared Euclidean distances) $\sum_{c=1}^{C}\sum_{\mathrm{x}\in S_{c}}\|\mathrm{x}-\mu_{c}\|^{2}$, and each cluster is described by the mean (a.k.a. \textit{centroid}) of the samples in the cluster. We use the cluster centroids of the models sampled from the BNN model as the models of scenarios. However, the $C$ scenarios may lose the property of the $N_{\text{MC}}$ models in Lemma \ref{lemma:contain}. 

To incorporate a probabilistic safety certificate into the scenario generation, we add extra scenarios corresponding to the worst cases based on the $N_{\text{MC}}$ models. Then, the system is safe under \eqref{eq:ctrl-input} if \eqref{eq:ocp-general}, where all the scenarios are subject to the constraints, is feasible. Details of the scenario generation with safety guarantees will be provided in the next section.


\section{Scenario-based MPC Design Using the Learned BNN Models}

In this section, we present the techniques employed to design learning-based SMPC for nonlinear systems in the LPV framework with safety and stability guarantees. First, K-means clustering for scenario generation based on the BNN model is presented. Then, the use of the moment-matching method to compute the probability of scenarios is described. Next, the SMPC problem and terminal ingredients are presented, and finally, conditions for the stability and safety guarantees are provided.

\subsection{Proposed Method for Scenario Generation}
\label{sec:scenario_generate}

In this work, we consider both the uncertainty in the evolution of $\theta$ and the epistemic uncertainty from the learning-based modeling. The scenario tree is designed to cover the joint uncertainty space while considering the computational cost. Considering that the matrix functions given a $\theta$ are evaluated using MC methods, we generate multi-stage scenario trees by applying K-means to the models drawn from the BNN model, which is summarized by the following procedure. 
\vspace{2mm}
\begin{algorithmic}[1]
\label{alg:cluster}
\Procedure{Scenarios Generation Using K-means}{}
\State Generate $L$ scheduling trajectories $\{\theta^{(l)}(k),k=1,...,K\}_{l=1}^{L}$ with $K$ time steps using the knowledge of $\theta$. 
\State Evaluate $A(\theta^{(l)}(k)), l=1,...,L, k=1,...,K$ for each time instant and each scheduling trajectory.
\State For each time instant $k$, apply K-means to $\{\text{vec}\left(A(\theta^{(l)}(k))\right)\}_{l=1}^{L}$ to cluster the $L$ evaluations of the matrix function $A$ at time instant $k$ into $C$ clusters.
\State Use the cluster centers as scenarios at time instant $k$.
\EndProcedure
\vspace{2mm}
\end{algorithmic}
It is noted that any knowledge of the scheduling variable can be easily incorporated into the scheduling trajectory generation (line 2 of the above procedure) to reduce the conservativeness of the generated scenarios. When no knowledge except $\Theta$ exists, the future $\theta$ (within a prediction horizon) is assumed to be uniformly distributed over $\Theta$ for scheduling trajectory generation. Additionally, the number of clusters is related to the number of scenarios. Using a larger number $C$ of clusters can better describe the distribution of the matrices and thus improve the control performance but also increases the computational cost of multi-stage MPC \citep{lucia2013multi}.

Moreover, to ensure safety with a given confidence level $\delta$, we add $2$ extra scenarios which correspond to the worst cases and thus have $C+2$ scenarios at each branching node. Specifically, we estimate the mean $\mu_{\text{M}}$ and standard deviation $\sigma_{\text{M}}$ for each element of the identified system matrices $A(\theta)$ and $B(\theta)$ over the range of the scheduling variables and determine $\beta_{\text{M}},\text{M}=A,B$ such that $P\left(x(k+1)\in\mathbb{X}\right) \geq 1-\delta$ when using $\hat{\mu}_{\text{M}}\pm \beta_{\text{M}} \hat{\sigma}_{\text{M}}$ as the worst-case scenarios. Since the elements in the matrices are bounded and the trained BNN is assumed to contain the true dynamics of the system by Assumption \ref{assp:lpv}, there must exist a $\beta_{\text{M}}$ such that the behaviors of the scenarios contain those of the system. A larger $\beta_{\text{M}}$ indicates a more conservative estimation of the uncertainty and can degrade control performance, which is verified by our experiments. Specifically, $\beta_{\text{M}}$ can be determined using probabilistic safety methods for BNNs \citep{wicker2020probabilistic,bao2023learning}. In particular, using $f^{\omega}$ to denote the BNN, probabilistic safety calculates the lower bound of the probability $P_{\text{safe}}(T,S)=P_{\omega\sim q(\omega;\theta)}(\forall \mathrm{x} \in T, f^{\omega}(\mathrm{x})\in S)$ guaranteeing that for all inputs in $T$, the output of the BNN is in the safety set $S$ by estimating the maximal safe sets of weights $H=\{\omega|~\forall \mathrm{x}\in T, f^{\omega}(\mathrm{x})\in S\}$. Additionally, $H$ is approximated by continuously combining safe sets of weights for a given number of iterations using Interval Bound Propagation (IBP) \citep{gowal2018effectiveness} or Linear Bound Propagation (LBP) \citep{zhang2018efficient}. In particular, IBP or LBP propagates the input interval, i.e., $T=[x^L,x^U]$, through the first layer, to find values $z^{(1),L}$ and $z^{(1),U}$ such that $z^{(1)}\in[z^{(1),L},z^{(1),U}]$, and then iteratively propagate the bound through each consecutive layer to obtain an interval in the output, which is guaranteed to contain the network output. In our case, we find $H=\{\beta_{\text{M}}|~\forall \theta \in \Theta, x\in\mathbb{X}, u\in \mathbb{U}, f^{\beta_{\text{M}}}(\theta, x, u)\in\mathbb{X}\}$ such that $P_{\text{safe}}([\Theta; \mathbb{X};\mathbb{U}];\mathbb{X})\geq1-\delta$.

Additionally, given the number of clusters at each time instant, the number of scenarios grows exponentially with respect to the horizon. To maintain the computational tractability, branching is only applied for the first $N_{b}<N$ steps (aka the robust horizon \citep{lucia2013multi}), and the realization of the uncertainty at step $N_{b}$ is used for the remaining $N-N_{b}$ steps, which results in $C^{N_{b}}$ scenarios and $C$ is the number of clusters. Fig. \ref{fig:tree} shows an illustrative example of the scenarios in the robust horizon and prediction horizon. It is noted that the number of scenarios $|r(j)|$ and the matrices  $A_{k}^{\cdot}$ at time $k$ reflect the joint uncertainty of epistemic uncertainty in the matrix functions from the system identification and the unknown evolution of the scheduling variables.

\begin{figure}[!htbp]
    \centering
    \includegraphics[width=0.6\textwidth]{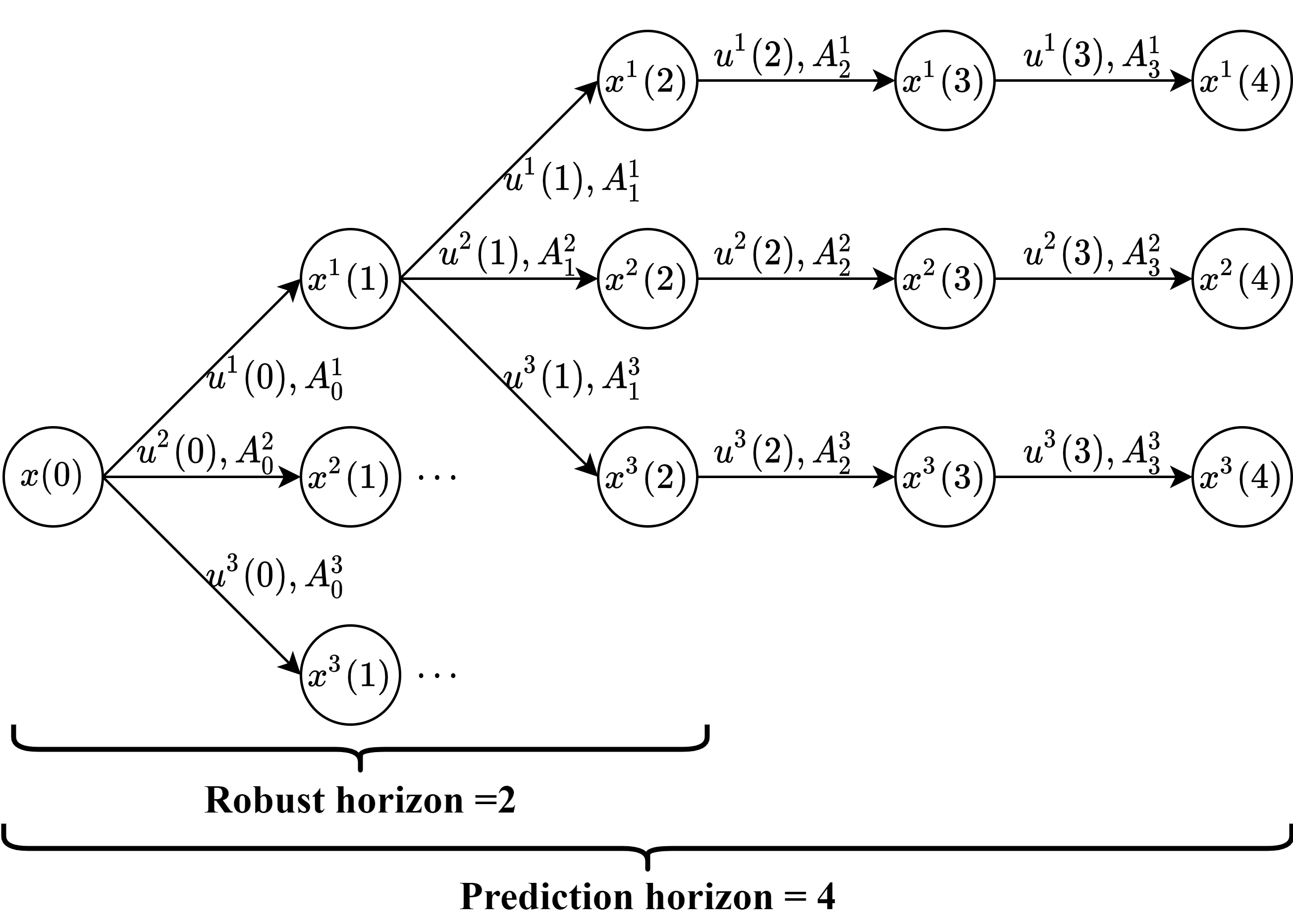}
    \caption{\label{fig:tree} Scenario tree representation of the joint uncertainty evolution for MPC. In the figure, $A_{k}^{r(j)}$ refers to the matrix at time $k$ in the $r(j)$-th scenario.}
\end{figure}

\subsection{Probability of Scenarios} \label{sec:pbs}

After generating the scenario tree, the probability of each scenario is calculated using the moment-matching method \citep{hoyland2001generating} to maintain the original statistical properties. Generally, it is sufficient to use the first four moments as the statistical features to be matched in scenario generation \citep{ji2005stochastic}. Specifically, the first four central moments are matched by solving the following optimization problem
\begin{subequations}
\begin{align}
     \min_{\mathbf{p}} & \sum_{i}^{m}\Big(w_{i}^{1}\left(M_{i}^{-}+M_{i}^{+}\right)+w_{i}^{3}\left(S_{i}^{-}+S_{i}^{+}\right) +w_{i}^{4}\left(Q_{i}^{-}+Q_{i}^{+}\right)\Big) + \\ \nonumber &\sum_{i,j=1}^{m}w_{i,j}^{1}\left(\Sigma_{i,j}^{-}+\Sigma_{i,j}^{+}\right), \\ 
     \text{s.t.} & ~~~~~~\mathbf{X}\mathbf{p}+M^{-}-M^{+}=M \\
     & \sum _{s=1}^{C+2}(\mathbf{X}^{s}-\mathbf{X}\mathbf{p})(\mathbf{X}^{s}-\mathbf{X}\mathbf{p})^{\mathrm{T}}p^{s}+\Sigma^{-}-\Sigma^{+}=\Sigma, 
     \\
     & \sum _{s=1}^{C+2}(\mathbf{X}^{s}-\mathbf{X}\mathbf{p})^{3}p^{s}+S^{-}-S^{+}=S, \label{eq:3rd}\\
    & \sum _{s=1}^{C+2}(\mathbf{X}^{s}-\mathbf{X}\mathbf{p})^{4}p^{s}+Q^{-}-Q^{+}=Q,\label{eq:4th}\\
    & \sum _{s=1}^{C+2}p^{s}=1, ~~p^{s}\geq0, ~~s=1,\cdots,C, C+1, C+2, \\
    & M_{i}^{+},M_{i}^{-},S_{i}^{+},S_{i}^{-},Q_{i}^{+},Q_{i}^{-}\geq 0, ~~i = 1, \cdots, m, \\
    &\Sigma_{ij}^{+},\Sigma_{ij}^{-}\geq 0,~~i,j = 1, \cdots, m.
\end{align}
\end{subequations}
where the third- and fourth-power operations in \eqref{eq:3rd} and \eqref{eq:4th} are defined on the elements of the vector $(\mathbf{X}^{s}-\mathbf{X}\mathbf{p})$, $M$, $\Sigma$, $S$, and $Q$ are the first four central moments estimated from samples with superscripts $+,-$ denoting the positive and negative parts of the associated variables, and $w_{i}^{0},w_{ij}^{1}, w_{i}^{3},w_{i}^{4}$ in the objective function are weighting coefficients. Furthermore, $\mathbf{p}=(p^{1},\cdots,p^{C},p^{C+1},p^{C+2})^{\mathrm{T}}$ and $p^{s}$ is the probability of the $s$-th scenario, $\mathbf{X}= (\mathbf{X}^{1},\cdots, \mathbf{X}^{C}, \mathbf{X}^{C+1}, \mathbf{X}^{C+2}) \in\mathbb{R}^{m\times (C+2)}$ where $\mathbf{X}^{s}=(X_{1}^{s},\cdots,X_{m}^{s})$ denotes the realization of the uncertainty in the $s$-th scenario and $m$ is the dimension of the realization. For example, $m=n_{x}^{2}$ when clustering vectorized matrix function value $A$. The optimal value of the cost function is greater than $0$ and indicates the degree to which the generated scenarios preserve the statistical properties of uncertainty quantified by BNNs. Therefore, we choose $C$ such that the optimal value is close to $0$ while satisfying the computational resource limitations of multi-stage MPC.

\subsection{Scenario-based MPC Problem}

Given the constructed tree, the MPC problem can be formulated at every time instant as 
\begin{subequations}
\label{eq:smpc}
\begin{align}
    \min_{x^{j}(i|k),u^{j}(i|k)}& \sum _{j=1}^{C^{N_{b}}}p^{j}J_{j}\left(\mathbf{x}^{j}(0:N|k),\mathbf{u}^{j}(0:N-1|k)\right) \label{eq:obj}\\
    \text{s.t.}~~ & x^{j}(i+1|k)=A_{i}^{r(j)}x^{p(j)}(i|k)+B_{i}^{r(j)}u^{j}(i|k), \\
    & x^{j}(i|k) \in \mathbb{X}, u^{j}(i|k) \in \mathbb{U}, \forall (j,i)\in I,\\ 
    & u^{j}(i|k)=u^{l}(i|k) \text{ if } x^{p(j)}(i|k)=x^{p(l)}(i|k), \forall (j,i),(l,i)\in I,  \label{eq:anti}\\
    & x^{j}(N|k) \in \mathbb{X}_{f}, \forall (j,N)\in I,\label{cons:term}
\end{align}
\end{subequations}
where the weight $p^{j}$ is the probability of the $j$-th scenario computed using the method described in Section \ref{sec:pbs}, the function $p(j)$ refers to the index of the parent node (the parent of the node indexed by $j$), and $r(j)$ gives the considered realization of the joint uncertainty via $A_{i}^{r(j)}$ and $B_{i}^{r(j)}$. Furthermore, $I$ denotes the set of all occurring index pairs $(j,i)$. The constraints in (\ref{eq:anti}) are non-anticipativity constraints to guarantee that control inputs from the same parent node are identical, and $\mathbb{X}_{f}$ is the terminal set. Since each realization of the joint uncertainty gives a linear system model at time instant $i$ in the prediction horizon, the constraint satisfaction can be guaranteed by only considering the extreme scenarios, which is employed to establish safety guarantees. The objective function in (\ref{eq:obj}) is the weighted sum of the cost for each scenario $J_{j}$ which is defined as
\begin{equation}
    J_{j} = V_{f}(x^{j}(N|k)) + \sum_{i=0}^{N-1}\ell(x^{p(j)}(i|k),u^{j}(i|k)),
\end{equation}
in which $V_f(\cdot)$ is the terminal cost and $\ell(\cdot)$ is the stage cost. The terminal cost $V_{f}(\cdot)$ will be discussed in the next subsection. In this paper, we consider 
\begin{equation}
\label{eq:stagecost}
    \ell(x,u)=x^{\mathrm{T}}Qx+u^{\mathrm{T}}Ru
\end{equation}
where $Q,R\succ 0$ are tuning parameters. 

\subsubsection{Terminal Ingredients} \label{sec:terminal} 

In this section, we show how to compute the three terminal ingredients \citep{mayne2000constrained}, i.e., a terminal cost, a terminal controller, and a terminal invariant set, that are required to achieve stability of the closed-loop system with the proposed MPC scheme.

First, we transform the BNN model into an LPV form with affine scheduling dependency described as
\begin{equation}
\label{eq:lpv-affine}
\begin{split}
    &\hat{A}(\hat{\theta}(k)) = \sum _{i=1}^{q}\hat{\theta}_{i}(k)\hat{A}_{i},~~ \hat{B}(\hat{\theta}(k)) = \sum _{i=1}^{q}\hat{\theta}_{i}(k)\hat{B}_{i},\\
    &\sum_{i=1}^{q}\hat{\theta}_{i}(k)=1, ~~\hat{\theta}_{i}(k)\geq 0,
\end{split}
\end{equation}
where $\hat{A}_{i}$ and $\hat{B}_{i}$ are extreme realizations of $A(\theta)$ and $B(\theta)$ in (\ref{eq:lpv}), respectively, and $\hat{\theta}$ is the new scheduling variable such that (\ref{eq:lpv}) and (\ref{eq:lpv-affine}) are equivalent. Additionally, Theorem 2.1 in \citep{Nguyen2014} shows that the LPV models with different numbers of extreme realizations of $A_{i}$ and $B_{i}$ can be transformed into the form of (\ref{eq:lpv-affine}). In particular, we use the scenarios including the worst-case scenarios in Section \ref{sec:scenario_generate} to obtain the extreme realizations of matrix functions $\hat{A}_{i}$ and $\hat{B}_{i}$. It is noted that the conservativeness of the extreme realizations is related to the accuracy of the learned BNN model. Additionally, the number of extreme realizations $\hat{A}_{i}$ is $2^{|A|}$ where $|A|$ denotes the number of elements in matrix $A$, and that number for $\hat{B}_{i}$ is $2^{|B|}$. 
However, we can only measure $\theta$ of the system but not $\hat{\theta}$. Moreover, we assume only input-output data exist without a true system model. Therefore, we learn a coordinate transformation $\mathcal{T}$ from $\theta$ in (\ref{eq:lpv}) to $\hat{\theta}$ in (\ref{eq:lpv-affine}) from data, which can be formulated as a regression problem. While lots of regression algorithms can be used to learn the transformation, ANN can approximate arbitrary nonlinear functions and learn features automatically from data, and thus we use a fully-connected ANN to parameterize $\hat{\theta}(k) =\mathcal{T}(\theta)$ and build the ANN model of $\hat{x}(k+1) =\left( \sum_{i=1}^{q}\hat{\theta}_{i}(k)\hat{A}_{i}\right)x(k)+\left( \sum_{i=1}^{q}\hat{\theta}_{i}(k)\hat{B}_{i}\right)u(k)$ where $\hat{A}_{i},\hat{B}_{i}$ are estimated extreme realizations. Then, the optimal transformation $\mathcal{T}^{\star}$ is obtained by minimizing the Mean Squared Error (MSE) loss function $\frac{1}{N_{\mathcal{D}}-1}\sum_{k=1}^{N_{\mathcal{D}}-1}\left(x(k+1)-\hat{x}(k+1)\right)^{2}$ via stochastic gradient descent (SGD) with respect to the parameters of $\mathcal{T}$ on the dataset $\mathcal{D}$. Additionally, the softmax activation function is used in the last layer to satisfy the constraints of $\hat{\theta}$. The advantage of this approach for coordinate transformation is to further moderate the negative effect of the scenario generation by constraining the scenarios to be compatible with the existing dataset. It is noted that the performance of the proposed approach depends on the sufficiency of the dataset, as well as the architecture design and training of ANNs.

Based on the above formulation, we show how to compute the terminal cost and the related terminal controller. We consider parameter-dependent poly-quadratic terminal cost functions in the form of 
\begin{equation}\label{eq:term_cost}
\begin{split}
    V(x(k),\hat{\theta}(k))&=x(k)^{\mathrm{T}}P(\hat{\theta}(k))x(k),\\
    P(\hat{\theta}(k))&=\sum _{i=1}^{q}\hat{\theta}_{i}(k)P_{i}\succ 0.
\end{split}
\end{equation} 
Note that using such a parameter-dependent formulation can reduce conservativeness significantly in comparison with the parameter-independent counterpart. In general, the closed-loop system can be asymptotically stabilized by the MPC law if there exists a terminal feedback controller $u_{k}=K_{f}(x(k))$ such that the following sufficient conditions are satisfied \citep{mayne2000constrained}:
\begin{enumerate}
    \item $V_{f}(\cdot)$ is a Lyapunov function on a terminal set $\mathbb{X}_{f}$ under the terminal controller $K_{f}(\cdot)$ and
    \begin{equation}
    \label{cond:lstage}
        \begin{split}
        &V_{f}(x(k+1))-V_{f}(x(k))\leq - \ell(x(k),K_{f}(x_{k}))<0, \\
        &\forall x(k)\in \mathbb{X}_{f}.
        \end{split}
        \end{equation}
    \item If $x(k)\in \mathbb{X}_{f}$, then $x(k+1)=\hat{A}(\hat{\theta}(k))x(k)+\hat{B}(\hat{\theta}(k))K_{f}(x(k))\in \mathbb{X}_{f}, \forall \hat{\theta}(k)\in\hat{\Theta}$, i.e., $\mathbb{X}_{f}$ is positively invariant under $K_{f}$.
    \item $K_{f}(x)\in \mathbb{U}, \forall x\in \mathbb{X}_{f}\subseteq \mathbb{X}$, i.e., the input and state constraints are satisfied under the control law.
\end{enumerate}
   
\noindent To enlarge the terminal region, we consider the following parameter-dependent state-feedback terminal controller
\begin{equation}
\label{eq:gain}
    K_{f}(x(k);\hat{\theta}(k))=\left(\sum_{i=1}^{q}\hat{\theta}_{i}(k)K_{i}\right)x(k).
\end{equation}
Based on the condition for the stability of discrete-time LPV systems \citep{pandey2017quadratic}, we compute the terminal cost function and controller by the following proposition:

\begin{proposition} For the discrete-time LPV systems described by (\ref{eq:lpv-affine}), condition (\ref{cond:lstage}) is satisfied if there exist matrices $Q_{i}\succ 0, X_{i}\in\mathbb{R}^{n_{x}\times n_{x}}, L_{i}\in \mathbb{R}^{n_{u}\times n_{x}}, Y_{i}\in \mathbb{R}^{n_{u}\times n_{x}}, Z_{i}\in \mathbb{R}^{n_{u}\times n_{x}}, i=1,\cdots,q$ such that 
\begin{equation}
\label{eq:LMI}
\begin{split}
    &\begin{bmatrix}
    X_{i}+X_{i}^{\mathrm{T}}-Q_{i}&X_{i}^{\mathrm{T}}\hat{A}_{i}^{\mathrm{T}}&-L_{i}^{\mathrm{T}}&(Q^{1/2}X_{i})^{\mathrm{T}}& (R^{1/2}L_{i})^{\mathrm{T}} \\
    \star &Q_{j}-R_{i,j} & \hat{B}_{i}Z_{j}-Y_{j}^{\mathrm{T}}&\mathbf{0}&\mathbf{0}\\
    \star &\star &Z_{j}+Z_{j}^{\mathrm{T}}&\mathbf{0}&\mathbf{0}\\
     \star &\star & \star &I &\mathbf{0}\\
     \star &\star & \star & \star &I 
    \end{bmatrix}\succ 0 \\
    &\text{for}~\forall i,j = 1,\cdots q, 
\end{split}
\end{equation}
where \textit{$\star$ represents the symmetric blocks omitted for brevity,} and $R_{i,j}=\hat{B}_{i}Y_{j}+(\hat{B}_{i}Y_{j})^{\mathrm{T}}$, $P_{i}=Q_{i}^{-1}$, using the terminal controller with gain in the form of (\ref{eq:gain}) and $K_{i}=L_{i}X_{i}^{-1}$.
\end{proposition}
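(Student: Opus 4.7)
The plan is to show that the vertex LMIs (\ref{eq:LMI}), combined via the convex-combination structure of $\hat{\theta}$, reduce through Schur complement and congruence to the Lyapunov decrease inequality (\ref{cond:lstage}) along the closed-loop LPV trajectories. Substituting the controller (\ref{eq:gain}) into (\ref{eq:lpv-affine}) gives $x(k+1)=A_{cl}(\hat{\theta}(k))x(k)$ with $A_{cl}(\hat{\theta})=\hat{A}(\hat{\theta})+\hat{B}(\hat{\theta})K_{f}(\hat{\theta})$. Inserting the poly-quadratic candidate (\ref{eq:term_cost}) and the stage cost (\ref{eq:stagecost}) and stripping the common factor $x(k)$, condition (\ref{cond:lstage}) is equivalent to
\begin{equation*}
A_{cl}(\hat{\theta})^{\mathrm{T}} P(\hat{\theta}^{+}) A_{cl}(\hat{\theta}) - P(\hat{\theta}) + Q + K_{f}(\hat{\theta})^{\mathrm{T}} R K_{f}(\hat{\theta}) \prec 0
\end{equation*}
required for every $\hat{\theta},\hat{\theta}^{+}$ on the unit simplex, where $\hat{\theta}^{+}$ abbreviates $\hat{\theta}(k+1)$.

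Next I would form the weighted sum $\Psi(\hat{\theta},\hat{\theta}^{+})=\sum_{i,j}\hat{\theta}_{i}\hat{\theta}_{j}^{+}\Psi_{ij}\succ 0$, where $\Psi_{ij}$ denotes the matrix on the left of (\ref{eq:LMI}). Because $\hat{A}(\hat{\theta})$, $\hat{B}(\hat{\theta})$ and $P(\hat{\theta})$ all depend affinely on $\hat{\theta}$, each block of $\Psi$ is the natural $(\hat{\theta},\hat{\theta}^{+})$-dependent image of the corresponding vertex block once $P_{i}=Q_{i}^{-1}$ and $K_{i}=L_{i}X_{i}^{-1}$ are imposed. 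I would then invoke the elementary inequality $X(\hat{\theta})^{\mathrm{T}} P(\hat{\theta}) X(\hat{\theta})\succeq X(\hat{\theta})+X(\hat{\theta})^{\mathrm{T}} - Q(\hat{\theta})$, with $X(\hat{\theta})=\sum_{i}\hat{\theta}_{i}X_{i}$, which follows from $(X(\hat{\theta})-Q(\hat{\theta}))^{\mathrm{T}} P(\hat{\theta})(X(\hat{\theta})-Q(\hat{\theta}))\succeq 0$, to relate the $(1,1)$ block to $P(\hat{\theta})^{-1}$ and thereby decouple the Lyapunov matrix from the system matrices.

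The decisive reductions come from successive Schur complements on the resulting block matrix: eliminating the $(4,4)$ and $(5,5)$ identity blocks absorbs the stage-cost terms $Q$ and $R$; eliminating the $(2,2)$ block recovers the quadratic form in $A_{cl}(\hat{\theta})$ against $P(\hat{\theta}^{+})$; and eliminating the $(3,3)$ block using the slack $Z(\hat{\theta}^{+})=\sum_{j}\hat{\theta}_{j}^{+}Z_{j}$ together with $R_{i,j}=\hat{B}_{i}Y_{j}+Y_{j}^{\mathrm{T}}\hat{B}_{i}^{\mathrm{T}}$ collapses the controller contribution into $\hat{B}(\hat{\theta})K_{f}(\hat{\theta})X(\hat{\theta})+(\hat{B}(\hat{\theta})K_{f}(\hat{\theta})X(\hat{\theta}))^{\mathrm{T}}$. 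A final congruence by $\mathrm{diag}(X(\hat{\theta})^{-\mathrm{T}}, I, I, I, I)$ reproduces the target inequality, and since positivity of $\Psi_{ij}$ holds at every vertex pair $(i,j)$, the weighted sum is positive for every admissible $\hat{\theta},\hat{\theta}^{+}$, yielding (\ref{cond:lstage}).

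The hard part is the bilinearity in $\hat{\theta}$ induced by the product $\hat{B}(\hat{\theta})K_{f}(\hat{\theta})$, which is precisely why (\ref{eq:LMI}) is indexed by the pair $(i,j)$ and why both $Y_{j}$ and $Z_{j}$ are needed: $Y_{j}$ absorbs the cross term $\hat{B}_{i}K_{j}$ into an affine expression in the decision variables, while the slack $Z_{j}$ supplies the Schur-complement block that, upon elimination, recovers the correct quadratic contribution from the controller. Without these slack variables one would be forced either to adopt a common quadratic Lyapunov function, losing the parameter-dependent refinement, or to accept bilinear matrix inequalities that are not directly tractable; everything else in the argument is routine poly-quadratic LPV manipulation.
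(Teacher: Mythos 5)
Your high-level reading of the LMI -- slack variables $X_i, Z_i, Y_i$ to decouple the Lyapunov matrices from the system matrices, Schur complements to absorb the stage cost, and a double index $(i,j)$ to track $\hat{\theta}(k)$ versus $\hat{\theta}(k+1)$ -- matches the paper's intent. But the order of operations in your plan contains a genuine gap. You propose to form the convex combination $\Psi(\hat{\theta},\hat{\theta}^{+})=\sum_{i,j}\hat{\theta}_{i}\hat{\theta}_{j}^{+}\Psi_{ij}$ of the vertex LMIs \emph{first} and claim that ``each block of $\Psi$ is the natural $(\hat{\theta},\hat{\theta}^{+})$-dependent image of the corresponding vertex block.'' This is false for every block in the first row/column: the $(1,2)$ block of $\Psi_{ij}$ is $X_{i}^{\mathrm{T}}\hat{A}_{i}^{\mathrm{T}}$, a product of two quantities carrying the \emph{same} index $i$, so $\sum_{i}\hat{\theta}_{i}X_{i}^{\mathrm{T}}\hat{A}_{i}^{\mathrm{T}}\neq X(\hat{\theta})^{\mathrm{T}}\hat{A}(\hat{\theta})^{\mathrm{T}}$ (the cross terms $X_{i}^{\mathrm{T}}\hat{A}_{l}^{\mathrm{T}}$, $i\neq l$, are missing). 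The same failure occurs for $-L_{i}^{\mathrm{T}}=-(K_{i}X_{i})^{\mathrm{T}}$ and $(R^{1/2}L_{i})^{\mathrm{T}}$. Consequently your final congruence by $\mathrm{diag}(X(\hat{\theta})^{-\mathrm{T}},I,I,I,I)$ does not produce $\hat{A}(\hat{\theta})^{\mathrm{T}}$ or $K(\hat{\theta})^{\mathrm{T}}$ in the relevant blocks, and the argument does not close. The paper avoids this precisely by applying the congruence $\mathrm{diag}(X_{i}^{-\mathrm{T}},I,I,I,I)$ \emph{vertex-wise before} any summation (yielding (\ref{eq:rewrit})), and by further vertex-level manipulations arriving at (\ref{eq:convex}), in which every product pairs an $i$-indexed factor with a $j$-indexed factor (e.g.\ $H_{j}K_{i}$, $P_{j}\hat{A}_{i}$, $P_{j}\hat{B}_{i}$); only then does the double convex combination factor correctly into parameter-dependent products.

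A second, related problem is your parameter-dependent completion-of-squares inequality $X(\hat{\theta})^{\mathrm{T}}P(\hat{\theta})X(\hat{\theta})\succeq X(\hat{\theta})+X(\hat{\theta})^{\mathrm{T}}-Q(\hat{\theta})$: the derivation you cite requires $P(\hat{\theta})Q(\hat{\theta})=I$, but the proposition defines $P_{i}=Q_{i}^{-1}$ so that $P(\hat{\theta})=\sum_{i}\hat{\theta}_{i}Q_{i}^{-1}\neq\bigl(\sum_{i}\hat{\theta}_{i}Q_{i}\bigr)^{-1}$; the inverse of a convex combination is not the convex combination of the inverses. Again the paper sidesteps this by using $X_{i}^{\mathrm{T}}Q_{i}^{-1}X_{i}\succeq X_{i}+X_{i}^{\mathrm{T}}-Q_{i}$ at each vertex, so that after the vertex-wise congruence the $(1,1)$ block is exactly $P_{i}$, and the subsequent convex combination produces $P(\hat{\theta})$ as defined in (\ref{eq:term_cost}). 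Your diagnosis of why $Y_{j}$ and $Z_{j}$ are needed is sound, but the proof as proposed would need to be restructured around vertex-level congruences to be valid.
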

\begin{proof}
The proof is based on the proof of Theorem 2 in \citep{pandey2017quadratic}.
Since $X_{i}+X_{i}^{\mathrm{T}}\succ Q_{i}\succ 0$, $X_{i}^{\mathrm{T}}Q_{i}^{-1}X_{i} \succeq X_{i}+X_{i}^{\mathrm{T}} - Q_{i}$. Additionally, substituting  $K_{i}=L_{i}X_{i}^{-1}$ into (\ref{eq:LMI}), we have
\begin{equation}
\label{eq:xi}
    \begin{bmatrix}
    X_{i}^{\mathrm{T}}Q_{i}^{-1}X_{i}&X_{i}^{\mathrm{T}}\hat{A}_{i}^{\mathrm{T}}&-(K_{i}X_{i})^{\mathrm{T}}&(Q^{1/2}X_{i})^{\mathrm{T}}& (R^{1/2}K_{i}X_{i})^{\mathrm{T}} \\
    \star &Q_{j}-R_{i,j} & \hat{B}_{i}Z_{j}-Y_{j}^{\mathrm{T}}&\mathbf{0}&\mathbf{0}\\
    \star &\star &Z_{j}+Z_{j}^{\mathrm{T}}&\mathbf{0}&\mathbf{0}\\
     \star &\star & \star &I &\mathbf{0}\\
     \star &\star & \star & \star &I 
    \end{bmatrix}\succ 0.
\end{equation}
Applying the following congruent transformation \begin{equation*}
   S_{i}^{\mathrm{T}} = \begin{bmatrix}
   X_{i}^{-\mathrm{T}} & \mathbf{0} & \mathbf{0} & \mathbf{0} & \mathbf{0} \\
   \mathbf{0} & I & \mathbf{0} & \mathbf{0} & \mathbf{0} \\
   \mathbf{0} & \mathbf{0} & I & \mathbf{0} & \mathbf{0} \\
   \mathbf{0} & \mathbf{0} & \mathbf{0} & I & \mathbf{0} \\
   \mathbf{0} & \mathbf{0} & \mathbf{0} & \mathbf{0} & I 
   \end{bmatrix} 
\end{equation*}
to (\ref{eq:xi}) gives
\begin{equation}
\label{eq:rewrit}
    \begin{bmatrix}
    Q_{i}^{-1}&\hat{A}_{i}^{\mathrm{T}}&-K_{i}^{\mathrm{T}}&(Q^{1/2})^{\mathrm{T}}& (R^{1/2}K_{i})^{\mathrm{T}} \\
    \star &Q_{j}-R_{i,j} & \hat{B}_{i}Z_{j}-Y_{j}^{\mathrm{T}}&\mathbf{0}&\mathbf{0}\\
    \star &\star &Z_{j}+Z_{j}^{\mathrm{T}}&\mathbf{0}&\mathbf{0}\\
     \star &\star & \star &I &\mathbf{0}\\
     \star &\star & \star & \star &I 
    \end{bmatrix}\succ 0
\end{equation} which can be rewritten as 
\begin{equation}
\label{eq:rewritten}
    \begin{bmatrix}
    P_{i}&\hat{A}_{i}^{\mathrm{T}}&-K_{i}^{\mathrm{T}}&(Q^{1/2})^{\mathrm{T}}& (R^{1/2}K_{i})^{\mathrm{T}} \\
    \star &P_{j}^{-1}+M_{i,j} & \hat{B}_{i}H_{j}^{-\mathrm{T}}-P_{j}^{-\mathrm{T}}F_{j}H_{j}^{-1} &\mathbf{0} &\mathbf{0}\\
    \star &\star &H_{j}^{-\mathrm{T}}+H_{j}^{-1}&\mathbf{0}&\mathbf{0}\\
     \star &\star & \star &I &\mathbf{0}\\
     \star &\star & \star & \star &I 
    \end{bmatrix}\succ 0
\end{equation}by defining $P_{i}=Q_{i}^{-1}, H_{i}=Z_{i}^{-\mathrm{T}},F_{i}=P_{i}Y_{i}^{\mathrm{T}}H_{i}$
and $M_{i,j}= -\hat{B}_{i}H_{j}^{-\mathrm{T}}F_{j}^{\mathrm{T}}P_{j}^{-1}-(\hat{B}_{i}H_{j}^{-\mathrm{T}}F_{j}^{\mathrm{T}}P_{j}^{-1})^{\mathrm{T}}$. Then, applying another congruent transformation
\begin{equation*}
    S_{j}^{\mathrm{T}}=\begin{bmatrix}
   I & \mathbf{0} & \mathbf{0} & \mathbf{0} & \mathbf{0} \\
   \mathbf{0} & \mathbf{0} & H_{j} & \mathbf{0} & \mathbf{0} \\
   \mathbf{0} & P_{j} & F_{j} & \mathbf{0} & \mathbf{0} \\
   \mathbf{0} & \mathbf{0} & \mathbf{0} & I & \mathbf{0}  \\
   \mathbf{0} & \mathbf{0} & \mathbf{0} & \mathbf{0} & I 
    \end{bmatrix}
\end{equation*} to (\ref{eq:rewritten}) produces 
\begin{equation}
\label{eq:convex}
    \begin{bmatrix}
    P_{i}&-(H_{j}K_{i})^{\mathrm{T}}&(P_{j}\hat{A}_{i}-F_{j}K_{i})^{\mathrm{T}}&(Q^{1/2})^{\mathrm{T}}& (R^{1/2}K_{i})^{\mathrm{T}} \\
    \star &H_{j}+H_{j}^{\mathrm{T}} & (P_{j}\hat{B}_{i}+F_{j})^{\mathrm{T}}&\mathbf{0} & \mathbf{0}\\
    \star &\star &P_{j}&\mathbf{0}&\mathbf{0}\\
     \star &\star & \star &I &\mathbf{0}\\
     \star &\star & \star & \star &I 
    \end{bmatrix}\succ 0.
\end{equation}
Taking convex combinations of (\ref{eq:convex}) over $i$ and $j$ gives
\begin{equation}
\label{eq:convexed}
\resizebox{\hsize}{!}{$
    \begin{bmatrix}
    P(\hat{\theta}(k))&-(H(\hat{\theta}(k+1))K(\hat{\theta}(k)))^{\mathrm{T}}&(P(\hat{\theta}(k+1))\hat{A}(\hat{\theta}(k)))-F(\hat{\theta}(k+1))K(\hat{\theta}(k))))^{\mathrm{T}}&(Q^{1/2})^{\mathrm{T}}& (R^{1/2}K(\hat{\theta}(k)))^{\mathrm{T}} \\
    \star &H(\hat{\theta}(k+1))+H(\hat{\theta}(k+1))^{\mathrm{T}} &(P(\hat{\theta}(k+1))\hat{B}(\hat{\theta}(k))+F(\hat{\theta}(k+1)))^{\mathrm{T}} &\mathbf{0}&\mathbf{0}\\
    \star &\star &P(\hat{\theta}(k+1))&\mathbf{0}&\mathbf{0}\\
     \star &\star & \star &I &\mathbf{0}\\
     \star &\star & \star & \star &I 
    \end{bmatrix}\succ 0$}.
\end{equation}
Finally, multiplying (\ref{eq:convexed}) by
\begin{equation*}
    S(\hat{\theta}(k)) = \begin{bmatrix}
    I & K(\hat{\theta}(k))^{\mathrm{T}}&\mathbf{0}&\mathbf{0}&\mathbf{0} \\
    \mathbf{0} & \mathbf{0} & I & \mathbf{0} & \mathbf{0} \\
    \mathbf{0} & \mathbf{0} & \mathbf{0} & I & \mathbf{0} \\
    \mathbf{0} & \mathbf{0} & \mathbf{0} & \mathbf{0} & I
    \end{bmatrix}
\end{equation*}
from the left and by its transpose from the right yields
\begin{equation}
\label{eq:proof}
\begin{split}
    &\begin{bmatrix}
    P(\hat{\theta}(k)) & P(\hat{\theta}(k+1))\hat{A}_{c}(\hat{\theta}(k))^{\mathrm{T}} & (Q^{1/2})^{\mathrm{T}} & (R^{1/2}K(\hat{\theta}(k)))^{\mathrm{T}} \\
    \star & P(\hat{\theta}(k+1)) &\mathbf{0}&\mathbf{0}\\
    \star&\star&I&\mathbf{0}\\
    \star&\star&\mathbf{0}&I
    \end{bmatrix}\succ 0,\\
    & \text{for}~\forall \hat{\theta}(k),\hat{\theta}(k+1) \in \hat{\Theta}
    \end{split}
\end{equation}
where $\hat{A}_{c}(\hat{\theta}(k))=\hat{A}(\hat{\theta}(k))+\hat{B}(\hat{\theta}(k))K(\hat{\theta}(k))$. Finally, it can be shown that (\ref{eq:proof}) is equivalent to (\ref{cond:lstage}) by applying the Schur complement, and this concludes the proof.
\end{proof}

\vspace{2mm}

Next, using the controller determined by solving (\ref{eq:LMI}), we can compute a terminal set as a maximal polyhedral robust positively invariant (RPI) set \citep{Nguyen2014}. The considered input and state constraints are in the form of 
\begin{equation}
    \mathbb{X}=\{x\in \mathbb{R}^{n_{x}}|F_{x}x\leq g_{x}\},\mathbb{U}=\{u\in \mathbb{R}^{n_{u}}|F_{u}x\leq g_{u}\}.
\end{equation}
Different from the Procedure 2.1 in \citep{Nguyen2014}, the state constraints of the system (\ref{eq:lpv-affine}) are 
\begin{equation}
\label{eq:cons}
    x_{c}\in \mathbb{X}_{c}, \mathbb{X}_{c}=\{x\in \mathbb{R}^{n_{x}}|F_{c}x\leq g_{c}\}
\end{equation}
where $F_{c} = \begin{bmatrix}
F_{x}^{\mathrm{T}}&
(F_{u}K_{1})^{\mathrm{T}}&
\cdots&
(F_{u}K_{q})^{\mathrm{T}}
\end{bmatrix}^{\mathrm{T}}$ and $g_c=\begin{bmatrix}
g_{x}^{\mathrm{T}}&
g_{u}^{\mathrm{T}}&
\cdots&
g_{u}^{\mathrm{T}}
\end{bmatrix}^{\mathrm{T}}$, as a parameter-dependent controller is used. Therefore, the number of constraints is increased by $(q-1)qn_{u}$, compared against a parameter-independent controller. Then, using Algorithm \ref{alg:rpi}, we can compute a maximal polyhedral RPI set $\Omega_{\max}$ as the terminal set $\Omega_{f}$.

\begin{algorithm}[H]
\begin{algorithmic}[1]
\Statex \textbf{Input:} $\{\hat{A}_{ci}\}_{i=1}^{q}$, $\mathbb{X}_{c}$ defined in (\ref{eq:cons}).
\Statex \textbf{Output:} The maximal RPI set $\Omega_{\max}$.
\State Set $i=0, F_{0}=F_{c}, g_{0}=g_{c}$ and $\mathbb{X}_{0}=\{x\in \mathbb{R}^{n_{x}}:F_{0}x \leq g_{0} \}$.
\State Set $\mathbb{X}_{1}=\mathbb{X}_{0}$.
\State Eliminate redundant inequalities of the following polytope,
\begin{equation*}
    P=\left\{x\in \mathbb{R}^{n_{x}}: \begin{bmatrix}
    F_{0}\\
    F_{0}\hat{A}_{c1}\\
    \vdots\\
    F_{0}\hat{A}_{cq}
    \end{bmatrix}x\leq \begin{bmatrix}
    g_{0}\\
    g_{0}\\
    \vdots\\
    g_{0}
    \end{bmatrix} \right\}
\end{equation*}
\State Set $\mathbb{X}_{0}=P$ and update consequently the matrices $F_{0}$ and $g_{0}$.
\State If $\mathbb{X}_{0}=\mathbb{X}_{1}$ then stop and set $\Omega=\mathbb{X}_{0}$. Else continue.
\State Set $i=i+1$ and go to step $2$.
\caption{(\citet{gilbert1991linear}, \textbf{Procedure 2.1}) Robustly controlled positively invariant set computation}\label{alg:rpi}
\end{algorithmic}
\end{algorithm}
Furthermore, we can compute the robustly $N$-step controlled positively invariant sets based on the maximal RPI set as the domain of attraction (DOA) using Algorithm \ref{alg:doa}. Different from Procedure 2.3 in \citep{Nguyen2014}, we allow control inputs to be different in Step 2 of Algorithm \ref{alg:doa} for different ($A_{i}, B_{i})$ when computing the expanded set, to enlarge the DOA in Step 2 of Algorithm \ref{alg:doa}. Therefore, the number of decision variables is increased by $q-1$ compared to the parameter-independent case. 

\begin{algorithm}[H]
\begin{algorithmic}[1]
\Statex \textbf{Input:} $\{\hat{A}\}_{i=1}^{q}$, $\{\hat{B}\}_{i=1}^{q}$ and the sets $\mathbb{X}$, $\mathbb{U}$ and $\Omega_{\max}$.
\Statex \textbf{Output:} The $N$-step robustly controlled invariant set $C_{N}$.
\State Set $i=0$ and $C_{0}=\Omega_{\max}$ and let the matrices $F_{0}$, $g_{0}$ be the half-space representation of $C_{0}$, i.e., $C_{0}=\{x\in \mathbb{R}^{n}:F_{0}x\leq g_{0}\}$.
\State Compute the expanded set $P_{i}\subset \mathbb{R}^{n_{x}+n_{u}}$
\begin{equation*}
    P_{i}=\left\{(x,u)\in \mathbb{R}^{n_{x}+n_{u}}: \begin{bmatrix}
    F_{i}(\hat{A}_{1}x+\hat{B}_{1}u_{1})\\
    F_{i}(\hat{A}_{2}x+\hat{B}_{2}u_{2})\\
    \vdots\\
    F_{i}(\hat{A}_{q}x+\hat{B}_{q}u_{q})
    \end{bmatrix}x\leq \begin{bmatrix}
    g_{i}\\
    g_{i}\\
    \vdots\\
    g_{i}
    \end{bmatrix} \right\}
\end{equation*}
\State Compute the projection $P_{i}^{(n)}$ of $P_{i}$ on $\mathbb{R}^{n_{x}}$
\begin{equation*}
    P_{i}^{(n)} = \{x\in\mathbb{R}^{n_{x}}:\exists u\in \mathbb{U}~\text{s.t.}~(x,u)\in P_{i}\}.
\end{equation*}
\State Set $C_{i+1}=P_{i}^{(n)} \cap \mathbb{X}$ and let $F_{i+1},g_{i+1}$ be the half-space representation of $C_{i+1}$, i.e.
\begin{equation*}
    C_{i+1} = \{x\in \mathbb{R}^{n_{x}}:F_{i+1}x\leq g_{i+1}\}.
\end{equation*}
\State If $C_{i+1}=C_{i}$, then stop and set $C_{N}=C_{i}$. Else continue.
\State If $i=N$, then stop else continue.
\State Set $i=i+1$ and go to step 2. 
\caption{(\citet{gilbert1991linear}, \textbf{Procedure 2.3}) Robustly $N$-step controlled invariant set computation}\label{alg:doa}
\end{algorithmic}
\end{algorithm}


\vspace{3mm}

\subsubsection{Recursive Feasibility, Stability and Safety}
In this section, we establish the recursive feasibility, stability, and safety of the proposed learning-based SMPC scheme.

In particular, the recursive feasibility and stability are established by adopting the work \citep{maiworm2015scenario} which considers a nonlinear discrete-time system represented by 
\begin{equation}
x(k+1)=f(x(k),u(k),p(k)),~\text{s.t.}~ x(k)\in\mathbb{X},~u(k)\in\mathbb{U},~p(k)\in\mathbb{P}
\end{equation}
where $p\in\mathbb{R}^{n_{p}}$ denotes the uncertain parameters and $\mathbb{P}$ is a discrete set of $s$ parameter values, under the following assumptions: 
\begin{assumption}[Continuity] \label{assp:rf1}
The functions $f(x,u,p)$, $\ell(x,u)$ and $V_{f}$ are continuous, with $f(0,0,p)=0$ $\forall p\in \mathbb{P}$, $\ell(0,0)=0$ and $V_{f}(0)=0$.
\end{assumption}
\begin{assumption}[Constraints] \label{assp:rf2}
The sets $\mathbb{X}$ and $\mathbb{X}_{f}\subseteq \mathbb{X}$ are closed, and $\mathbb{U}$ is compact. Each set contains
the origin.
\end{assumption}
Establishing recursive feasibility for SMPC is equivalent to requiring that the terminal state $x^{j}(N)$ of each scenario ends in a common control invariant terminal region $\Omega_{f}$ which ensures that the state stays in $\Omega_{f}$ for all system instances when $x(N)\in \Omega_{f}$.
\begin{assumption}[Common terminal region] \label{assp:rf3}
There exists a common terminal region $\Omega_{f}$ that is control invariant for $x(k+1)=f^{j}(x(k),u(k)),\forall j\in \{1,\cdots,s\}$ with $u\in\mathbb{U}$.
\end{assumption}
\begin{proposition} \label{propo}
Suppose that Assumptions \ref{assp:rf1}, \ref{assp:rf2} and \ref{assp:rf3} hold. Then, the SMPC is recursively feasible.
\end{proposition}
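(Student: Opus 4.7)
The plan is to prove recursive feasibility by the standard MPC shift-and-extend construction, adapted to the scenario-tree structure. Concretely, I will assume that at time $k$ the optimization problem \eqref{eq:smpc} admits a feasible (indeed, optimal) solution $\{x^{j}_{\star}(i|k),u^{j}_{\star}(i|k)\}$ over all scenarios $j\in\{1,\ldots,C^{N_{b}}\}$ and stages $i\in\{0,\ldots,N\}$, and then exhibit an explicit feasible candidate at time $k+1$ once the closed-loop system has evolved from $x(k)$ to $x(k+1)$ under the applied input $u(k)=u_{\star}^{1}(0|k)$ (which is common across scenarios by the non-anticipativity constraint \eqref{eq:anti}).

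The construction proceeds in two steps. First, for each scenario $j'$ in the new tree at time $k+1$, identify the parent scenario $j$ at time $k$ whose first-stage uncertainty realization is consistent with the observed transition; set the new predicted states and inputs for stages $i=0,\ldots,N-2$ to be the shifted optimum $x^{j'}(i|k+1)=x^{j}_{\star}(i+1|k)$ and $u^{j'}(i|k+1)=u^{j}_{\star}(i+1|k)$. Second, at the last prediction stage, append the terminal controller $u^{j'}(N-1|k+1)=K_{f}(x^{j'}(N-1|k+1))$ obtained from the LMI design in Section~\ref{sec:terminal}. All stage-wise state and input constraints for $i\leq N-2$ are inherited directly from the feasibility of the original solution, and $u^{j'}(N-1|k+1)\in\mathbb{U}$ follows from property~3 of the terminal ingredients ($K_{f}(x)\in\mathbb{U}$ for every $x\in\mathbb{X}_{f}$). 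The non-anticipativity constraints at the new time step are preserved because the shifted inputs inherit the equality structure, and the terminal controller $K_{f}$ is a deterministic function of the (common) predicted state at the node it is applied to.

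The crucial closing step is the terminal constraint $x^{j'}(N|k+1)\in\mathbb{X}_{f}$. Here I invoke Assumption~\ref{assp:rf3}: the terminal set $\mathbb{X}_{f}=\Omega_{f}$ is a \emph{common} control-invariant region for every realization $f^{j}$ of the uncertain dynamics under the law $K_{f}$. Since $x^{j}_{\star}(N|k)\in\mathbb{X}_{f}$ for every $j$ by the feasibility of the original problem, the starting point $x^{j'}(N-1|k+1)=x^{j}_{\star}(N|k)$ of the appended step lies in $\mathbb{X}_{f}$, and by control invariance under $K_{f}$ the subsequent state $x^{j'}(N|k+1)=f^{r(j')}(x^{j'}(N-1|k+1),K_{f}(x^{j'}(N-1|k+1)))$ again lies in $\mathbb{X}_{f}$ for \emph{every} realization $r(j')$ of the uncertainty at that stage, as guaranteed by the RPI construction in Algorithm~\ref{alg:rpi}. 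Continuity (Assumption~\ref{assp:rf1}) and closedness/compactness of the constraint sets (Assumption~\ref{assp:rf2}) ensure that the candidate trajectory is well-defined and that the constraints are indeed satisfied at the endpoint. Consequently the candidate $\{x^{j'}(i|k+1),u^{j'}(i|k+1)\}$ is feasible for \eqref{eq:smpc} at time $k+1$, which is the desired conclusion.

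The main obstacle I anticipate is bookkeeping rather than conceptual: rigorously matching scenario indices $j$ at time $k$ to the indices $j'$ at time $k+1$ once the scenario tree is re-rooted, and showing that the new branching at the tail of the horizon (which moves from the old robust horizon $N_{b}$ to a shifted position) still satisfies the tree structure and non-anticipativity constraints encoded in \eqref{eq:anti}. This is handled cleanly by the fact that the common terminal set is invariant under \emph{all} vertex realizations $(\hat{A}_{i},\hat{B}_{i})$ simultaneously, so every new branch starting from $x^{j'}(N-1|k+1)\in\mathbb{X}_{f}$ remains in $\mathbb{X}_{f}$ under the single terminal controller $K_{f}$, independently of which child scenario is considered.
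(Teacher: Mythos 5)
Your construction is the standard shift-and-append argument, which is exactly what the paper relies on: its own ``proof'' is a one-line deferral to Proposition~4 of the cited reference, and that proposition is proved precisely by shifting the optimal scenario-tree solution along the realized branch and closing the horizon with the terminal controller on the common invariant set $\Omega_f$. For a fully branched tree (robust horizon equal to the prediction horizon) your argument is complete and correct: the realized uncertainty selects a subtree of the old solution, every stage-wise constraint is inherited, non-anticipativity is preserved by the subtree structure, and Assumption~\ref{assp:rf3} guarantees $x^{j'}(N|k+1)\in\mathbb{X}_f$ for every child realization of the appended step.

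The one place where your write-up papers over a real difficulty is the passage you yourself flag as ``bookkeeping'': the re-branching when $N_b<N$. In the formulation \eqref{eq:smpc} the tree branches only for the first $N_b$ stages and then continues with a frozen realization. After shifting, the new problem at time $k+1$ must branch at its stage $N_b-1$, but the old solution's realized subtree is deterministic from that point on: it certifies feasibility of the continuation only for the \emph{one} realization it was built with, not for the $C+1$ new siblings whose states $A^{r(j')}x+B^{r(j')}u$ diverge from the shifted trajectory at the very next stage. Your claimed resolution --- invariance of $\Omega_f$ under all vertex realizations --- covers only the appended terminal step, not these newly introduced interior branches, so it does not close the gap. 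This is not a defect unique to your proof (the paper inherits it by citing a result proved for the fully branched tree while running experiments with $N_b=1$), but as written your argument establishes recursive feasibility only for $N_b=N$; for $N_b<N$ you would need an additional assumption, e.g.\ that the frozen tail of each scenario is itself robustly feasible against all realizations, or a terminal set reachable in $N-N_b$ steps from every branch.
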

\begin{proof}
The proof is similar to that of Proposition 4 in \citep{maiworm2015scenario} and hence omitted here.
\end{proof}
To establish stability, the following assumptions on the stage and terminal costs are made. 
\begin{assumption}[Basic stability assumption]\label{assp:s1}
For $\forall x \in \Omega_{f}$ and $\forall j\in \{1,\cdots,N_{s}\}$,
\begin{equation}
    \min_{\Tilde{u}(k)\in \mathbb{U}} \{V_{f}^{j}\left(f(x,u,p)\right)+\ell(x,u)|f(x,u,p)\in\Omega_{f}\}\leq V_{f}^{j}(x),
\end{equation}
where $N_{s}=s^{N_{b}}$ denotes the number of scenarios and $V_{f}^{j}$ denotes an individual terminal cost function to the $j$-th scenario, holds for all $p\in\mathcal{P}$.
\end{assumption}
Assumption \ref{assp:s1} ensures the descent property of $V_{f}^{j}$ and implies Assumption \ref{assp:rf3} if $V_{f}^{j}(x)$ is a control Lyapunov function. 
\begin{assumption}[Bounds on stage and terminal costs] \label{assp:s2}
The stage cost $\ell(x,u)$ and the terminal costs $V_{f}^{j}(x)$ satisfy 
\begin{align*}
&\ell(x,u) \geq\alpha_{1} (|x|)~~\forall x\in \Omega_{N},\forall u \in \mathbb{U} \\
&V_{f}^{j}(x)\leq\alpha_{2}^{j}(|x|)~~\forall x \in \Omega_{f}~~\text{and}~~\forall j\in \{1,\cdots,N_{s}\},
\end{align*}
in which $\alpha_{1}(\cdot)$ and $\alpha_{2}^{j}(\cdot)$ are $\mathcal{K}_{\infty}$ functions.
\end{assumption}
Assumptions \ref{assp:s1} and \ref{assp:s2} ensure that the value function is a Lyapunov function for $x(k+1)=f^{j}\left(x(k),\kappa_{N}(x(k))\right),~\forall j\in\{1,\cdots,s\}$ on the domain $C_{N}$. The following lemma and theorem are then given.
\begin{lemma}[SMPC stability \citep{maiworm2015scenario}] \label{lemma:smpc}
Suppose that Assumptions \ref{assp:rf1}-- \ref{assp:s2} hold and that $\Omega_{f}$ contains the origin in its interior. Then, the origin is asymptotically stable with a region of attraction $C_{N}$ for the system $x(k+1)=f^{j}(x(k),\kappa_{N}(x(k)))$ for all $j\in\{1,\cdots,s\}$.
\end{lemma}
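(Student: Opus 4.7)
The plan is to exhibit the optimal value function of the SMPC problem (\ref{eq:smpc}),
$V_N^{\star}(x) = \sum_{j=1}^{N_{s}} p^{j} J_{j}^{\star}(x)$,
as a Lyapunov function on the region of attraction $C_{N}$. Recursive feasibility has already been delivered by Proposition \ref{propo} under Assumption \ref{assp:rf3}, so $V_N^{\star}$ is well defined along every closed-loop trajectory starting in $C_{N}$. What remains is to verify three standard inequalities on $C_{N}$: a class-$\mathcal{K}_{\infty}$ lower bound, a class-$\mathcal{K}_{\infty}$ upper bound in a neighborhood of the origin, and strict decrease along $x^{+} = f^{j}(x,\kappa_{N}(x))$ for every realized scenario index $j \in \{1,\ldots,s\}$; the conclusion then follows from the standard discrete-time Lyapunov theorem.

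For the lower bound I would use Assumption \ref{assp:s2}: since $x^{j}(0) = x$ for every scenario and $\sum_{j} p^{j} = 1$, one has $V_{N}^{\star}(x) \geq \sum_{j} p^{j} \ell(x,u^{j,\star}(0)) \geq \alpha_{1}(|x|)$. For the upper bound I would exploit that $\Omega_{f}$ contains the origin in its interior: on a small neighborhood of the origin contained in $\Omega_{f}$, the input sequence that applies the terminal controller $K_{f}$ at every stage is feasible for every scenario by Assumptions \ref{assp:rf3} and \ref{assp:s1}, so
\begin{equation*}
V_{N}^{\star}(x) \;\leq\; \sum_{j} p^{j} V_{f}^{j}(x) \;\leq\; \alpha_{2}(|x|), \qquad \alpha_{2} := \max_{j}\alpha_{2}^{j},
\end{equation*}
which is again of class $\mathcal{K}_{\infty}$. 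Continuity of $\ell$, $f$, and $V_{f}^{j}$ from Assumption \ref{assp:rf1} is what keeps the bound meaningful on an open neighborhood of the origin.

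For the descent property I would adapt the usual MPC tail-extension trick to the scenario tree. Given $x \in C_{N}$ with optimal tree $\{(x^{j,\star}(\cdot),u^{j,\star}(\cdot))\}_{j=1}^{N_{s}}$ and a realized successor $x^{+} = f^{j}(x,\kappa_{N}(x))$, I would construct a suboptimal candidate tree at $x^{+}$ by (i) dropping the root stage of the optimal tree and re-rooting at the subtree corresponding to the realized scenario, and (ii) extending each leaf by one additional stage using the terminal controller $K_{f}$, whose feasibility in $\Omega_{f}$ is guaranteed by Assumption \ref{assp:s1}. Evaluating the SMPC cost on this candidate, invoking Assumption \ref{assp:s1} at every new leaf, and comparing with $V_{N}^{\star}(x)$ yields
\begin{equation*}
V_{N}^{\star}(x^{+}) - V_{N}^{\star}(x) \;\leq\; -\ell(x,\kappa_{N}(x)) \;\leq\; -\alpha_{1}(|x|),
\end{equation*}
which is strict descent on $C_{N}\setminus\{0\}$.

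The hard part will be step (i) of the candidate construction: the re-rooted subtree is a different combinatorial object than the tree freshly generated at $x^{+}$ by the K-means procedure of Section \ref{sec:scenario_generate}, so I must check that the non-anticipativity constraints (\ref{eq:anti}) survive the detaching and re-indexing, and that the probability weights $p^{j}$ renormalize consistently with the moment-matching procedure of Section \ref{sec:pbs} so that the bound $V_{N}^{\star}(x^{+}) \leq $ (shifted cost) is really an upper bound on the optimum at $x^{+}$. This bookkeeping is essentially the content of the proof of Proposition 4 in \cite{maiworm2015scenario} and the main reason the result inherits from that reference; once it is in place, the three Lyapunov inequalities above deliver asymptotic stability of the origin with region of attraction $C_{N}$ for every $j \in \{1,\ldots,s\}$.
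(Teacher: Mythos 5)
The paper does not actually prove this lemma --- it imports it verbatim from \cite{maiworm2015scenario} --- and your argument is precisely the standard proof used there: the expected optimal cost $V_N^{\star}$ as a Lyapunov function, lower-bounded via Assumption \ref{assp:s2}, locally upper-bounded on $\Omega_f$ via the terminal controller and Assumption \ref{assp:s1}, and decreased along every realization by the shifted-tree candidate. Your sketch is correct in outline and matches the cited reference's approach; you also rightly flag the one genuinely delicate point (that the re-rooted subtree must remain feasible, with consistent non-anticipativity and weights, for the freshly posed problem at $x^{+}$), which is exactly the bookkeeping carried out in \cite{maiworm2015scenario} and which the paper leaves to that reference.
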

Furthermore, the above Lemma \ref{lemma:smpc}, which holds for general nonlinear systems with discrete sets of uncertain parameter values, can be adopted for systems in the LPV form \eqref{eq:lpv-affine} with affine scheduling dependency and continuous set of scheduling variables, resulting from the following lemma.
\begin{lemma} \label{lemma:continuous}
$\forall k\in\mathbb{N}$, if a control input $u(k)$ is feasible for all the extreme realizations of \eqref{eq:lpv-affine} given $x(k)$, then $u(k)$ is feasible $\forall \hat{\theta}(k)\in\{\hat{\theta}|\sum_{i=1}^{q}\hat{\theta}_{i}=1$ and $\hat{\theta}_{i}\geq 0\}$ in \eqref{eq:lpv-affine}.
\end{lemma}
\begin{proof}
Since $\hat{A}_{i}x(k)+\hat{B}_{i}u(k)\triangleq x^{i}(k+1) \in \mathbb{X},i=1,\cdots,q$, and $\mathbb{X}$ is assumed to be a PC-set, then $\forall \hat{\theta}(k)\in\{\hat{\theta}|\sum_{i=1}^{q}\hat{\theta}_{i}=1$ and $\hat{\theta}_{i}\geq 0\}, x(k+1)=\hat{A}(\hat{\theta}(k))x(k)+\hat{B}(\hat{\theta}(k))u(k)=\left(\sum_{i=1}^{q}\hat{\theta}_{i}(k)\hat{A}_{i}\right)x(k)+\left(\sum_{i=1}^{q}\hat{\theta}_{i}(k)\hat{B}_{i}\right)u(k)=\sum_{i=1}^{q}\hat{\theta}_{i}(k)(\hat{A}_{i}x(k)+\hat{B}_{i}u(k))=\sum_{i=1}^{q}\hat{\theta}_{i}(k)x^{i}(k+1)\in \mathbb{X}$. 
\end{proof} 
Lemma \ref{lemma:continuous} shows that the feasibility of a control input for all the possible values of scheduling variables can be established by only considering the finite extreme realizations of \eqref{eq:lpv-affine}.

Based on the stability theorem of SMPC and Lemma \ref{lemma:continuous}, we present the following theorem on the learning-based SMPC.
\begin{theorem}[Learning-based SMPC stability and feasibility]
Suppose that Lemma 1 is fulfilled, and the terminal set $\Omega_{f}$ computed by Algorithm \ref{alg:rpi} contains the origin in its interior. Then, the SMPC with the terminal cost \eqref{eq:term_cost} and the terminal controller \eqref{eq:gain} is recursively feasible and the origin is asymptotically stable for \eqref{eq:lpv-affine} with a region of attraction $C_{N}$. Moreover, the original system \eqref{eq:lpv} is stable with a high probability that is at least $1-\delta_{c}$. 
\end{theorem}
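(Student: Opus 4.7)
The plan is to invoke Lemma~\ref{lemma:smpc} together with Proposition~\ref{propo} by verifying Assumptions~\ref{assp:rf1}--\ref{assp:s2} for the learning-based SMPC with the ingredients constructed in Section~\ref{sec:terminal}. In this setting the uncertain parameter $p$ plays the role of the transformed scheduling vector $\hat{\theta}$ taking values at the vertices of the simplex used in \eqref{eq:lpv-affine}, so the collection of extreme realizations $(\hat{A}_{i},\hat{B}_{i})$ replaces the discrete set $\mathbb{P}$. Once each assumption is checked, recursive feasibility follows directly from Proposition~\ref{propo} and asymptotic stability on $C_{N}$ from Lemma~\ref{lemma:smpc}.

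Assumption~\ref{assp:rf1} is straightforward: the affine LPV prediction $\hat{A}(\hat{\theta})x+\hat{B}(\hat{\theta})u$, the quadratic stage cost \eqref{eq:stagecost}, and the parameter-dependent quadratic terminal cost \eqref{eq:term_cost} are polynomial and evaluate to zero at the origin for every fixed $\hat{\theta}$. For Assumption~\ref{assp:rf2}, $\mathbb{X}$ and $\mathbb{U}$ are PC-sets by the problem statement and $\Omega_{f}$ is the polytope $\Omega_{\max}$ returned by Algorithm~\ref{alg:rpi}, which is closed and contains the origin by hypothesis of the theorem. For Assumption~\ref{assp:rf3} I would use the termination condition of Algorithm~\ref{alg:rpi}: at termination $\hat{A}_{c,i}\Omega_{\max}\subseteq\Omega_{\max}$ for each vertex closed-loop map $\hat{A}_{c,i}=\hat{A}_{i}+\hat{B}_{i}K_{i}$. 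Since $\Omega_{\max}$ is a convex polytope and $\hat{A}(\hat{\theta})+\hat{B}(\hat{\theta})K_{f}(x;\hat{\theta})$ is a convex combination of these vertex maps for every $\hat{\theta}\in\hat{\Theta}$, invariance extends to all admissible $\hat{\theta}$; together with the additional rows $F_{u}K_{i}$ packed into $F_{c}$ in \eqref{eq:cons} this also guarantees $K_{f}(x;\hat{\theta})\in\mathbb{U}$ throughout $\Omega_{f}$, so the terminal region is common to every scenario realization.

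Assumption~\ref{assp:s1} is exactly what the proposition in Section~\ref{sec:terminal} provides: applying Schur complements to \eqref{eq:proof} in reverse yields
\begin{equation*}
x^{\mathrm{T}}\hat{A}_{c}(\hat{\theta})^{\mathrm{T}}P(\hat{\theta}^{+})\hat{A}_{c}(\hat{\theta})x - x^{\mathrm{T}}P(\hat{\theta})x \leq -x^{\mathrm{T}}Qx - x^{\mathrm{T}}K(\hat{\theta})^{\mathrm{T}}RK(\hat{\theta})x
\end{equation*}
for all $\hat{\theta},\hat{\theta}^{+}\in\hat{\Theta}$, i.e.\ \eqref{cond:lstage} holds on $\Omega_{f}$, with the implicit condition $f\in\Omega_{f}$ granted by the invariance already established. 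Assumption~\ref{assp:s2} then follows from positivity: with $Q,R\succ 0$ one has $\ell(x,u)\geq\lambda_{\min}(Q)\|x\|^{2}$, and since each $P_{i}\succ 0$ with $\hat{\theta}_{i}\geq 0$ and $\sum_{i}\hat{\theta}_{i}=1$, one has $V_{f}^{j}(x)\leq\bigl(\max_{i}\lambda_{\max}(P_{i})\bigr)\|x\|^{2}$, which supplies the $\mathcal{K}_{\infty}$ envelopes required by Lemma~\ref{lemma:smpc}.

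The main obstacle I expect is reconciling the two uncertainty representations used throughout the paper: the scenario tree in \eqref{eq:smpc} lives in the original $\theta$-coordinates (K-means clusters of $A(\theta),B(\theta)$ plus two worst-case branches), while the terminal ingredients are designed for the affine vertex representation $(\hat{A}_{i},\hat{B}_{i})$ obtained after the learned coordinate transformation $\mathcal{T}$. One has to argue that every scenario matrix $A_{k}^{r(j)}$ is a convex combination of the $\hat{A}_{i}$ (and analogously for $B$), so that the invariance of $\Omega_{\max}$ and the Lyapunov descent proved vertex-wise transfer to every branch of the tree; this is where the extreme-realization construction of Section~\ref{sec:terminal} and the containment property of Lemma~\ref{lemma:contain} must be brought to bear. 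With this bridge in place, Proposition~\ref{propo} and Lemma~\ref{lemma:smpc} deliver recursive feasibility and asymptotic stability of the closed loop under the SMPC law \eqref{eq:ctrl-input} on the domain $C_{N}$ produced by Algorithm~\ref{alg:doa}.
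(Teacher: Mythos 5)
Your proposal follows essentially the same route as the paper's proof: verify Assumptions~\ref{assp:rf1}--\ref{assp:s2} for the affine-LPV terminal ingredients of Section~\ref{sec:terminal} (with the same $\mathcal{K}_{\infty}$ bounds $\lambda_{\min}(Q)\|x\|^{2}$ and $\lambda_{\max}(P_{j})\|x\|^{2}$), then invoke Proposition~\ref{propo} and Lemma~\ref{lemma:smpc}, and finally bridge back to the true system via the containment property of Lemma~\ref{lemma:contain}. Your version is in fact more explicit than the paper's at the two points it leaves terse — the vertex-wise-to-convex-hull invariance argument for the common terminal region, and the need to justify that every scenario matrix lies in the convex hull of the extreme realizations $(\hat{A}_{i},\hat{B}_{i})$ — both of which the paper disposes of in a single sentence each.
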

\begin{proof}
Obviously, the LPV model with \eqref{eq:lpv-affine}, the considered stage cost \eqref{eq:stagecost}, and the terminal cost \eqref{eq:term_cost} fulfill Assumption \ref{assp:rf1}. Assumption \ref{assp:rf2} also holds, as the terminal set $\Omega_{f}$ computed in Section \ref{sec:terminal} is polyhedral and thus closed while $\mathbb{X}$ and $\mathbb{U}$ are assumed to be PC-sets. Moreover, $\Omega_{f}$ by Algorithm \ref{alg:rpi} is control invariant for arbitrary scheduling variables under the terminal controller \eqref{eq:gain} and thus Assumption \ref{assp:rf3} holds. Furthermore, Assumption \ref{assp:s1} holds, as the designed terminal controller satisfies the sufficient conditions in Section \ref{sec:terminal}. Additionally, the quadratic stage cost \eqref{eq:stagecost} and the poly-quadratic terminal cost function \eqref{eq:term_cost} satisfy Assumption \ref{assp:s2} with $\alpha_{1}(|x|) = \lambda_{\min}(Q)\|x\|^{2}$ and $\alpha_{2}^{j}(|x|)=\lambda_{\max}(P_{j})\|x\|^{2}$ where $\lambda$ denotes the eigenvalue of a matrix. Hence, the origin is asymptotically stable with a region of attraction $C_{N}$ for the LPV model with \eqref{eq:lpv-affine} by Lemma \ref{lemma:smpc}. Moreover, the LPV model with \eqref{eq:lpv-affine} is transformed from the BNN model whose behaviors contain the behaviors of the system by Lemma \ref{lemma:contain}. Therefore, the system is stabilized with a high probability that Lemma \ref{lemma:contain} is fulfilled.
\end{proof}

Furthermore, using the scenario generation approach described in Sections \ref{sec:scenario_generate} and \ref{sec:pbs}, the certificate of safety under the SMPC law can be formalized as follows.
\begin{theorem}[Learning-based SMPC safety]
Let the hypotheses of Assumption \ref{assp:lpv} and Lemma \ref{lemma:contain} be satisfied. Then, the system under the SMPC law \eqref{eq:ctrl-input} is $\delta$-safe. 
\end{theorem}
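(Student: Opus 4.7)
The plan is to chain together three ingredients already at hand: the BNN sample-envelope guarantee of Lemma~\ref{lemma:contain}, the worst-case augmentation of the scenario tree constructed in Section~\ref{sec:scenario_generate}, and the recursive feasibility of the SMPC established in the preceding theorem. Since $\delta$-safety is the condition (\ref{cond:delta_safe}) on a trajectory-wise event, the proof reduces to showing that, with probability at least $1-\delta$, the applied input $\kappa(x(k))=u^{\star}(0)$ lies in $\mathbb{U}$ and the successor state $\Phi_{\kappa}(x(k),\theta(k),k)$ lies in $\mathbb{X}$ for every $k\in\mathbb{N}$.

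First I would invoke Lemma~\ref{lemma:contain} under Assumption~\ref{assp:lpv}: for the measured scheduling signal $\theta$ and the initial state $x(0)$, with probability at least $1-\delta$ the true closed-loop trajectory is contained componentwise in the envelope of the $N_{\mathrm{MC}}$ BNN-sampled trajectories for every $k$. Next I would use the construction of Section~\ref{sec:scenario_generate}: at every branching node the $C$ K-means scenarios are augmented with the two worst-case scenarios $\hat{\mu}_{\mathrm{M}}\pm\beta_{\mathrm{M}}\hat{\sigma}_{\mathrm{M}}$, with $\beta_{\mathrm{M}}$ chosen via the probabilistic-safety procedure so that $P_{\mathrm{safe}}([\Theta;\mathbb{X};\mathbb{U}];\mathbb{X})\ge 1-\delta$. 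Because the system matrices enter (\ref{eq:bnn}) linearly in $x$ and $u$, the extremal element-wise bounds on $A$ (and on $B$, if modelled) propagate to an element-wise tube on the one-step successor, so the feasible set enforced by (\ref{eq:smpc}) over the worst-case scenarios over-approximates the admissible region of the true one-step dynamics on the event given by Lemma~\ref{lemma:contain}.

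Then I would combine these with the hard constraints of the SMPC: problem (\ref{eq:smpc}) enforces $x^{j}(i)\in\mathbb{X}$ and $u^{j}(i)\in\mathbb{U}$ for \emph{every} scenario $j$, including the two worst-case ones, at every prediction stage $i$. By the preceding recursive-feasibility theorem, an admissible solution exists at every time step, and by the non-anticipativity constraint (\ref{eq:anti}) the applied $u^{\star}(0)$ is common to all scenarios. Therefore, on the event of Lemma~\ref{lemma:contain}, the true successor $\Phi_{\kappa}(x(k),\theta(k),k)$ belongs to the tube of the augmented scenarios, which lies in $\mathbb{X}$, and the applied input belongs to $\mathbb{U}$; induction on $k$ extends this to the whole trajectory. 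Since the complementary event has probability at most $\delta$, condition (\ref{cond:delta_safe}) follows and the system is $\delta$-safe.

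The main obstacle I anticipate is the bookkeeping of the probability budget: Lemma~\ref{lemma:contain}, the choice of $\beta_{\mathrm{M}}$ from the probabilistic-safety procedure, and the K-means quantisation error are each parameterised by (or would naturally consume) $\delta$, and a fully rigorous bound would require either a union-bound redistribution of $\delta$ or an explicit argument that the events are coupled through the same Monte Carlo draw so that a single $1-\delta$ suffices. The paper implicitly takes the latter view, and a clean proof would record this coupling when combining the two $1-\delta$ guarantees into the single overall safety certificate.
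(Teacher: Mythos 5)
Your proof follows essentially the same route as the paper's own, which is in fact far terser: it simply asserts that by Assumption~\ref{assp:lpv} and Lemma~\ref{lemma:contain} the behaviors of the generated scenarios contain those of the system, and that recursive feasibility (Proposition~\ref{propo}) then yields $\delta$-safety by Definition~\ref{def:safe}. Your closing remark about the probability budget flags a genuine looseness that the paper's proof does not address either --- it implicitly treats the containment event of Lemma~\ref{lemma:contain} and the $\beta_{\mathrm{M}}$-calibration event as a single $1-\delta$ event rather than union-bounding them --- so your more explicit accounting is, if anything, more careful than the original.
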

\begin{proof}
By Assumption \ref{assp:lpv} and Lemma \ref{lemma:contain},  the behaviors of the generated scenarios based on the $N_{\text{MC}}$ sampled models from the BNN model contain the behaviors of the system. Furthermore, by Proposition \ref{propo}, the SMPC is recursively feasible, which proves the system is $\delta$-safe by Definition \ref{def:safe}.
\end{proof}

Additionally, Fig. \ref{fig:block} shows the block diagram of the closed-loop learning-based SMPC scheme.
\begin{figure}
    \centering
    \includegraphics[width=0.7\textwidth]{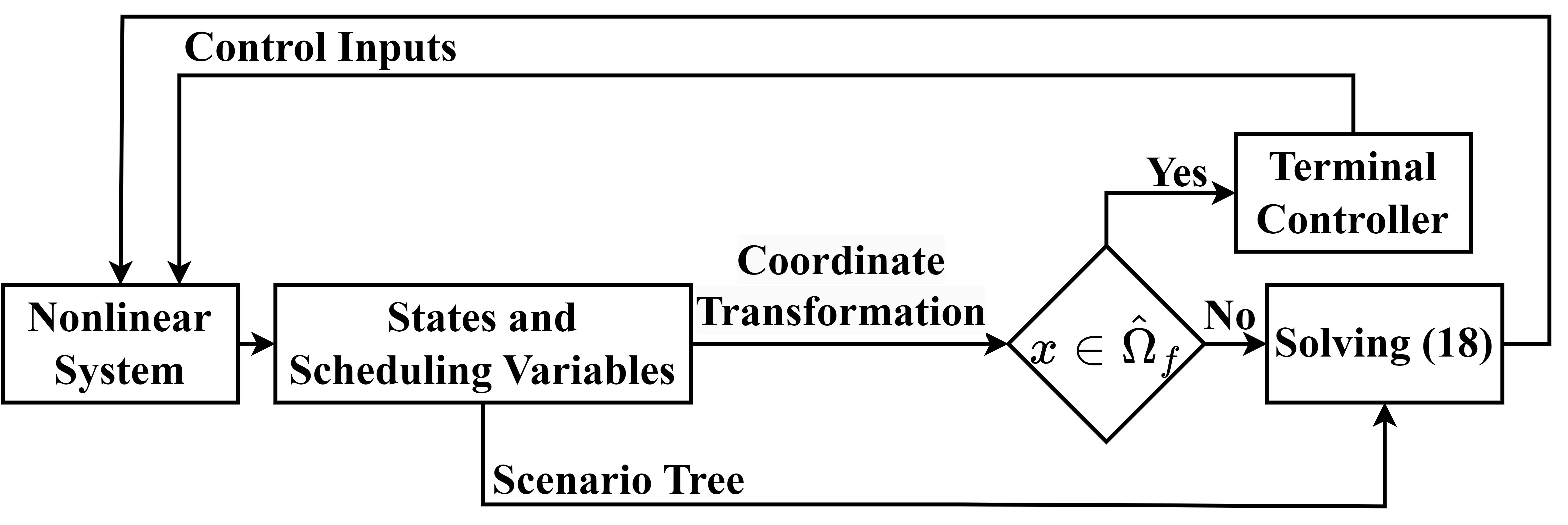}
    \caption{The block diagram of the closed-loop learning-based SMPC scheme.}
    \label{fig:block}
\end{figure}

\section{Numerical Results}

In this section, the proposed methods of this work are validated on a parameter-varying double integrator system model \citep{hanema2020heterogeneously}, as well as a parameter-varying multiple-input multiple-output (MIMO) system with complex nonlinear scheduling dependency. 

\subsection{Parameter-varying Double Integrator}

The LPV-SS representation of the system is assumed to be
\begin{equation}
\label{eq:lpve}
    \begin{split}
    x(k+1)&=\left( 
\begin{bmatrix}
1 & 1 \\
0 & 1 
\end{bmatrix} +  \begin{bmatrix}
0.1 & 0 \\
0 & 0.1 
\end{bmatrix}\theta_{1}(k) + \begin{bmatrix}
0.5 & 0.5 \\
0 & 0 
\end{bmatrix}\theta_{2}(k) \right. \\ & \left. + \begin{bmatrix}
0 & 0 \\
0 & 0.2 
\end{bmatrix}\theta_{3}(k)\right)x(k)+ \begin{bmatrix}
0.5  \\
1  
\end{bmatrix}u(k),
    \end{split}
\end{equation}
with constraints and scheduling sets as
\begin{align*}
    &\mathbb{X}  =\{x\in\mathbb{R}^{2}|\|x\|_{\infty}\leq 6\},
    \mathbb{U}  = \{u\in\mathbb{R}|\left|u\right|\leq 1\} \\
    &\Theta  = \{\theta \in \mathbb{R}^{3}|\|\theta\|_{\infty}\leq 1\}.
\end{align*}
In (\ref{eq:lpve}), $A(\cdot)$ is an affine function of the scheduling variables and $B$ is constant.

\subsubsection{System Identification} 

We use slowly-varying trajectories for the scheduling variables in Fig. \ref{fig:data}(a) to collect observations $\mathcal{D}=\{(\theta (t), x(t),u(t)), x(t+1)\}$ for model identification. Pseudo random binary sequences (PRBS) input signal with a scale of $0.01$ as shown in Fig. \ref{fig:data}(b) is used to excite the system, and the generated state sequence with initial state $x(0)=[2.7;0]$ is shown in Fig. \ref{fig:data}(c), (d). Furthermore, $500$ samples are collected and split into training and testing sets with a ratio of 80\%/20\%.

\begin{figure}[!htbp]
\centering
    \subfigure[Scheduling trajectories.]{\includegraphics[width=0.48\textwidth]{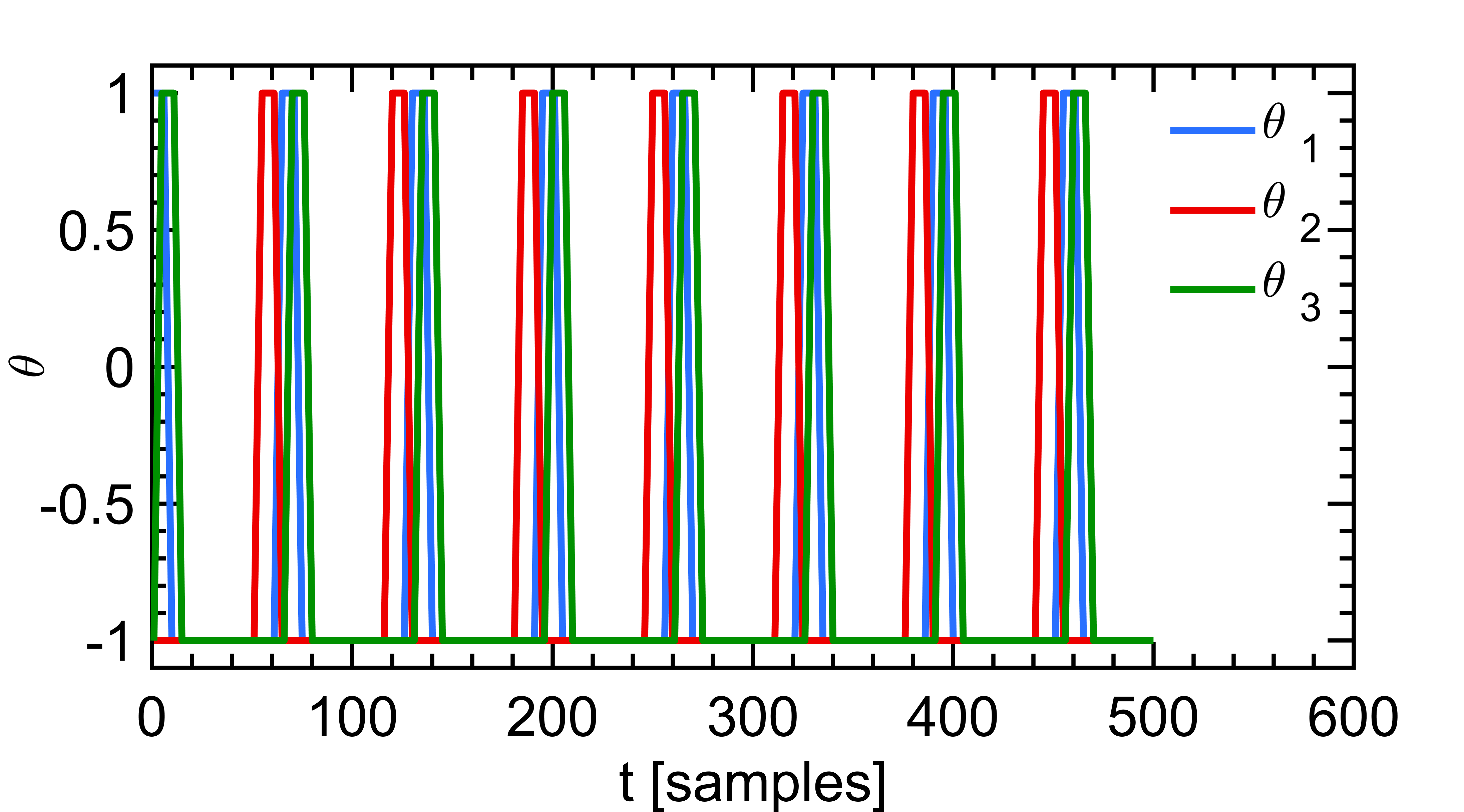}}\label{fig:sched}
    \subfigure[Inputs to the system.]{\includegraphics[width=0.48\textwidth]{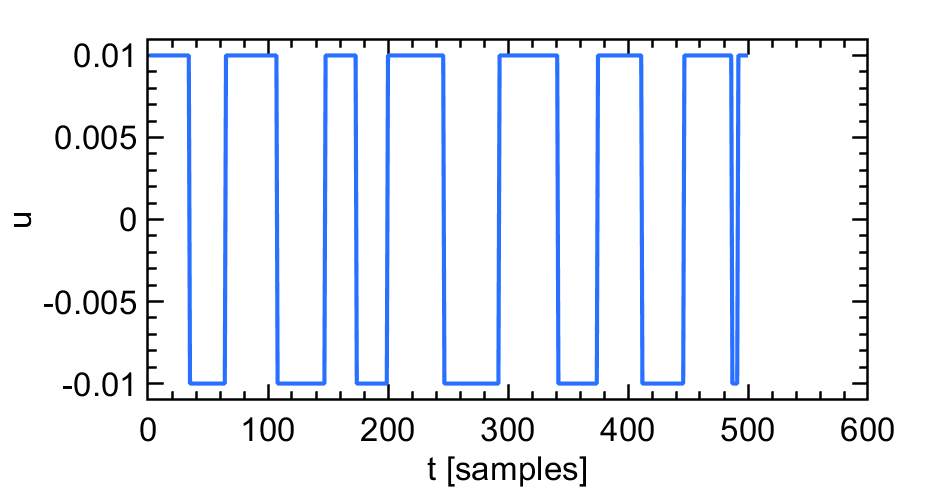}}\label{fig:inp}
    \subfigure[Sequence of $x _{1}$.]{\includegraphics[width=0.48\textwidth]{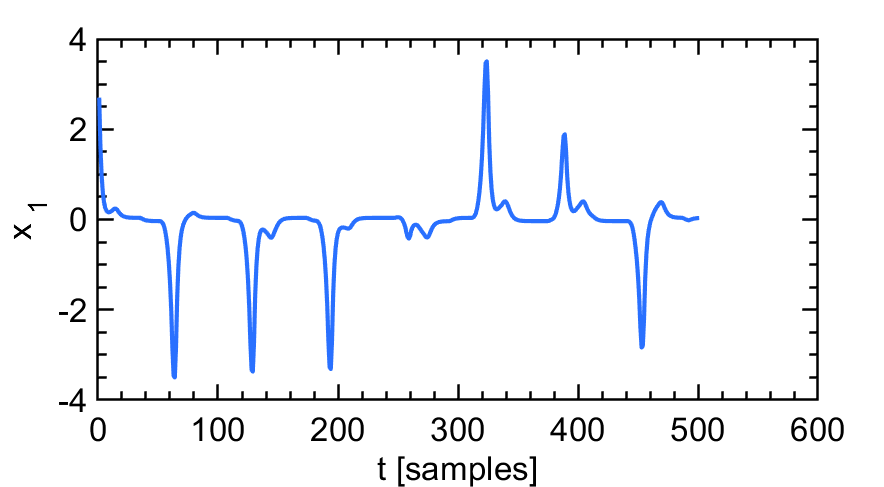}}\label{fig:s1}
    \subfigure[Sequence of $x _{2}$.]{\includegraphics[width=0.48\textwidth]{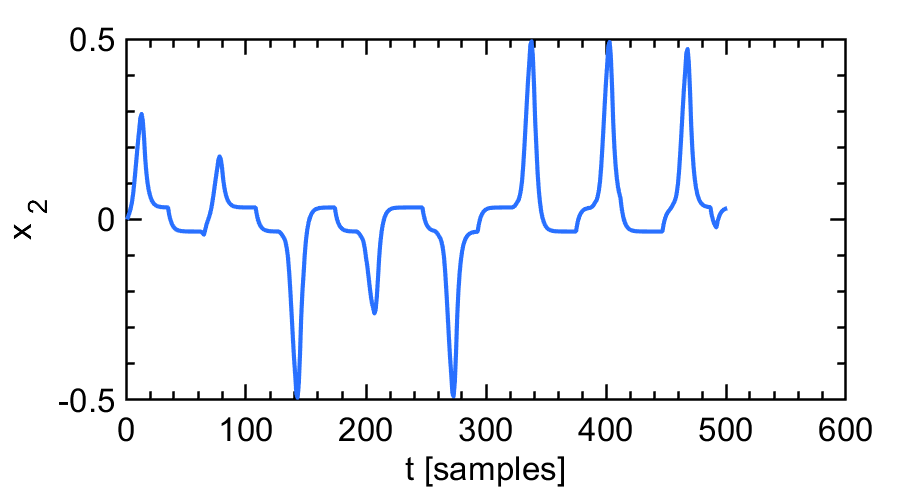}}\label{fig:s2}
    \caption{\label{fig:data}Data generated for model identification purposes.}
\end{figure}

We use one DenseVariational layer with $4$ hidden units to represent $A(\cdot)$ and one Dense layer with $2$ hidden units to represent $B$. Neither of the layers uses activation functions and the Dense layer further does not use bias, which aims to exactly represent the class of models to which (\ref{eq:lpve}) belongs. The tuning parameters in (\ref{eq:priors}) are determined as $\sigma_{1}=0.3, \sigma_{2}=0.1$. Adam optimizer is used with a learning rate set to $0.01$ and other hyper-parameters as default. Moreover, using the transfer learning approach \citep{bao2020cdc}, we first trained an ANN model with the same architecture as the BNN model, used the trained ANN weights to initialize the BNN model, and then trained the BNN model for $1,000$ epochs. The validation results are shown in Fig. \ref{fig:val_1}. 

\begin{figure}[!htbp]
\centering
    \subfigure{\includegraphics[width=0.48\textwidth]{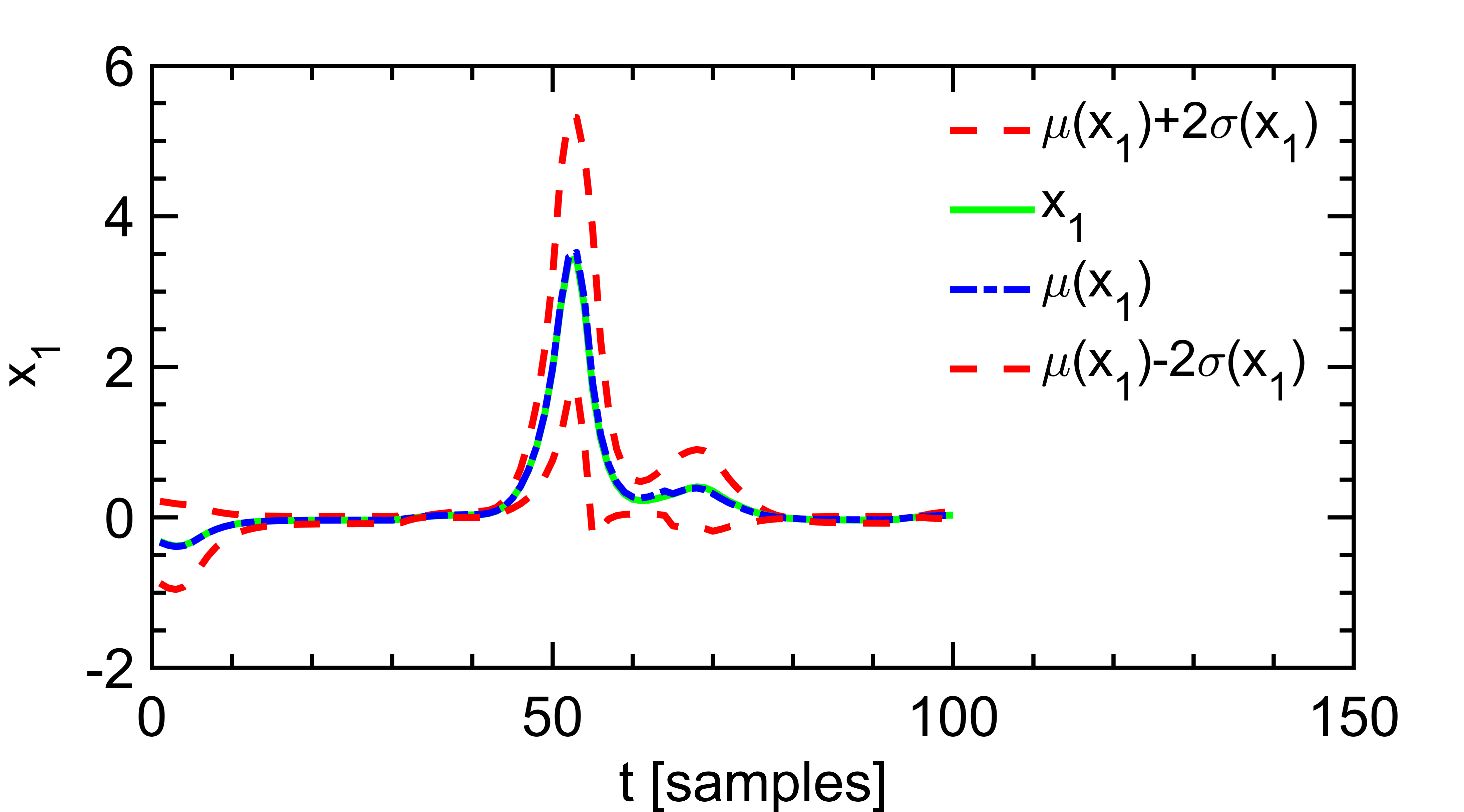}}
    \subfigure{\includegraphics[width=0.48\textwidth]{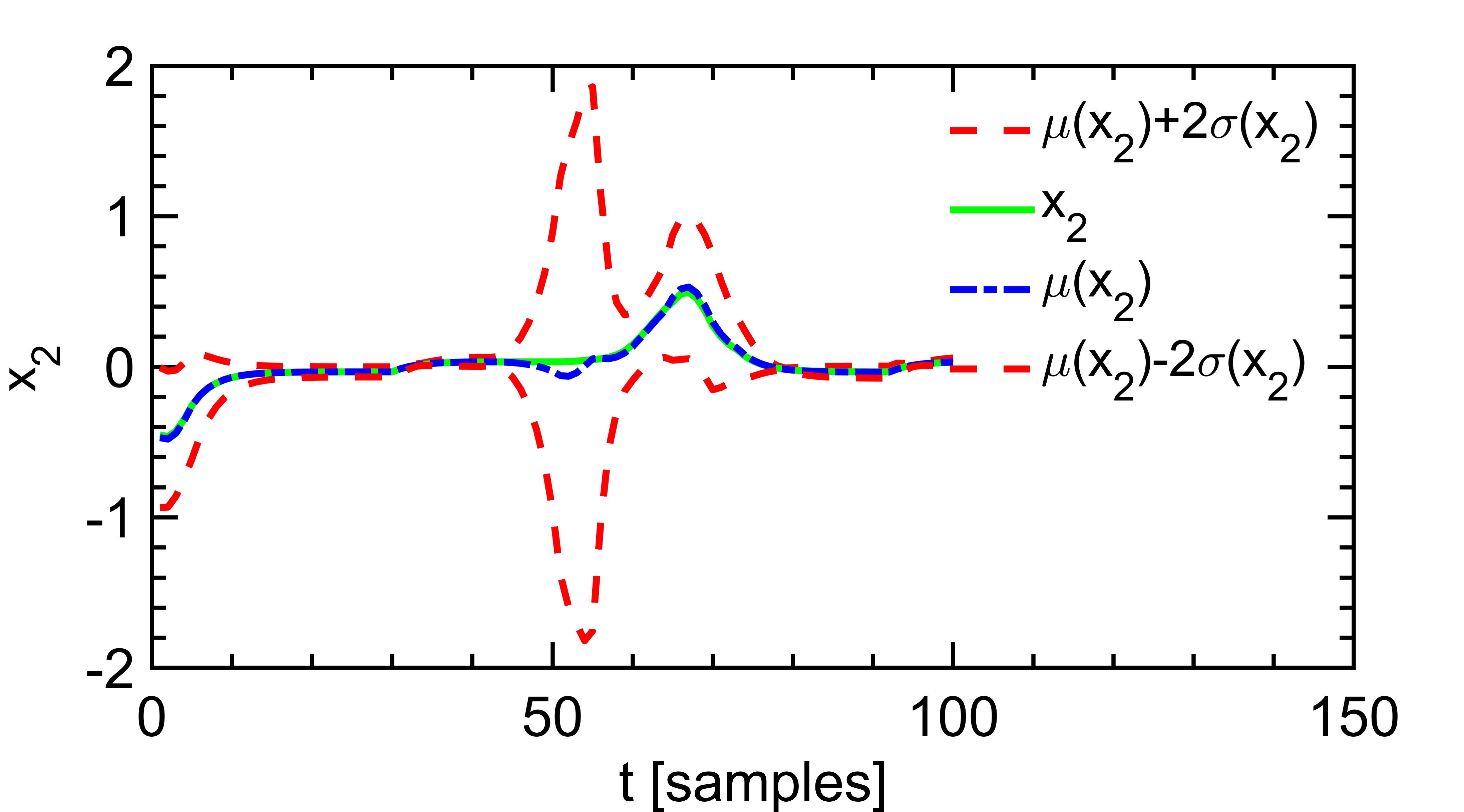}}
    \caption{\label{fig:val_1}Validation results for the identified BNN model. The area between the two dashed lines is within $2$ estimated standard deviations of the estimated mean, which is about $95\%$ confidence interval.}
\end{figure}

It is noted that the best fit ratio $\textup{BFR}=100\%\cdot \max\left(1-\frac{\|x-\hat{x}\|_{2}}{\|x-\bar{x}\|_{2}},0\right)=[96.70\%;87.04\%]$ using the estimated mean as predictions for outputs. None of the samples are out of $2\sigma_{x}$. By increasing $\beta\sigma_{x}$, the true states are guaranteed to lie in the interval $[\mu_{x}-\beta\sigma_{x}, \mu_{x}+\beta\sigma_{x}]$ almost surely.

\subsubsection{Validation of The Proposed Approach}

Without extra knowledge on the evolution of the scheduling variables beyond the scheduling sets, we randomly sample $500$ $\theta$'s from the uniform distribution over $\Theta$ and evaluate $A(\cdot)$ for $N_{\text{MC}}=500$ times using the dynamic functions sampled from the BNN model for each $\theta$. Then, we apply K-means to the evaluated $A$'s to generate the scenarios. The number of clusters is assumed to be 3. Also, $\beta_{\text{M}}=1, \text{M}=A,B$ is considered here for the estimation of extreme realizations. Therefore, 5 scenarios were used including $\mu_{\text{M}}\pm \beta_{\text{M}}\sigma_{\text{M}}$. It is worth noting that the scenarios are fixed within the robust horizon of the tree generation in this case due to the limited time-invariant knowledge of $\theta$. When further information (e.g., a bounded ROV \citep{casavola2008predictive}) is known, we can generate time-varying scenarios for each step within the robust horizon. The probability of the 5 scenarios is $\mathbf{p}=[0.3560; 0.3383; 0.3037; 0.0010; 0.0010]$ using the moment matching method. Additionally, in our experiments, $Q=I_{2\times 2},R=1$ for the stage cost $\ell$ in (\ref{eq:stagecost}). The prediction horizon is set to 10 and the robust horizon to 1. We computed RPI sets and 10-step robustly controlled positively invariant (RCPI) sets based on the system model (\ref{eq:lpv}) and the BNN model (\ref{eq:lpv-affine}), respectively. 

\begin{figure}[H]
    \centering
    \includegraphics[width=0.45\textwidth]{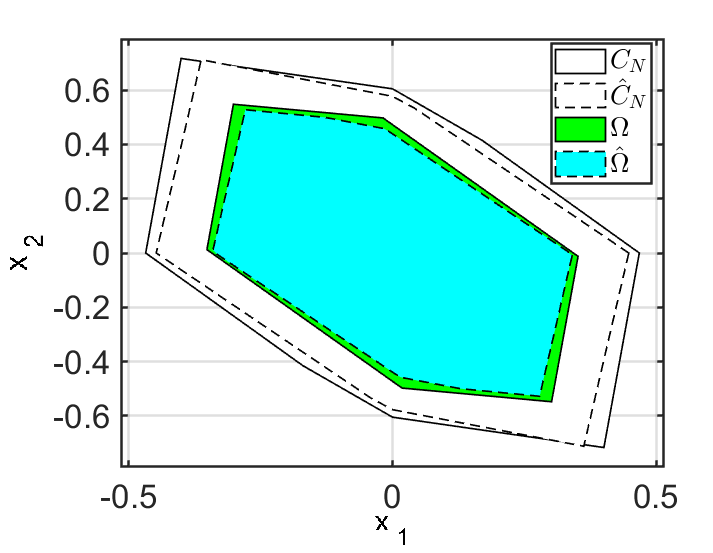}
    \caption{\label{fig:rpi} RPI sets: $\Omega_{f}$ for system and $\hat{\Omega}_{f}$ for BNN model; 10-step RCPI sets: $C_{N}$ for system and $\hat{C}_{N}$ for BNN model.}

\end{figure}

\textbf{Results and Discussion:} As shown in Fig. \ref{fig:rpi}, the estimated sets are smaller than the system sets due to the conservativeness introduced to guarantee safety. The sets can be enlarged by numerical methods, which will be investigated in the future work.  

\begin{figure}[!htbp]
\centering
    \subfigure[Scheduling signals used for control.]{\includegraphics[width=0.48\textwidth]{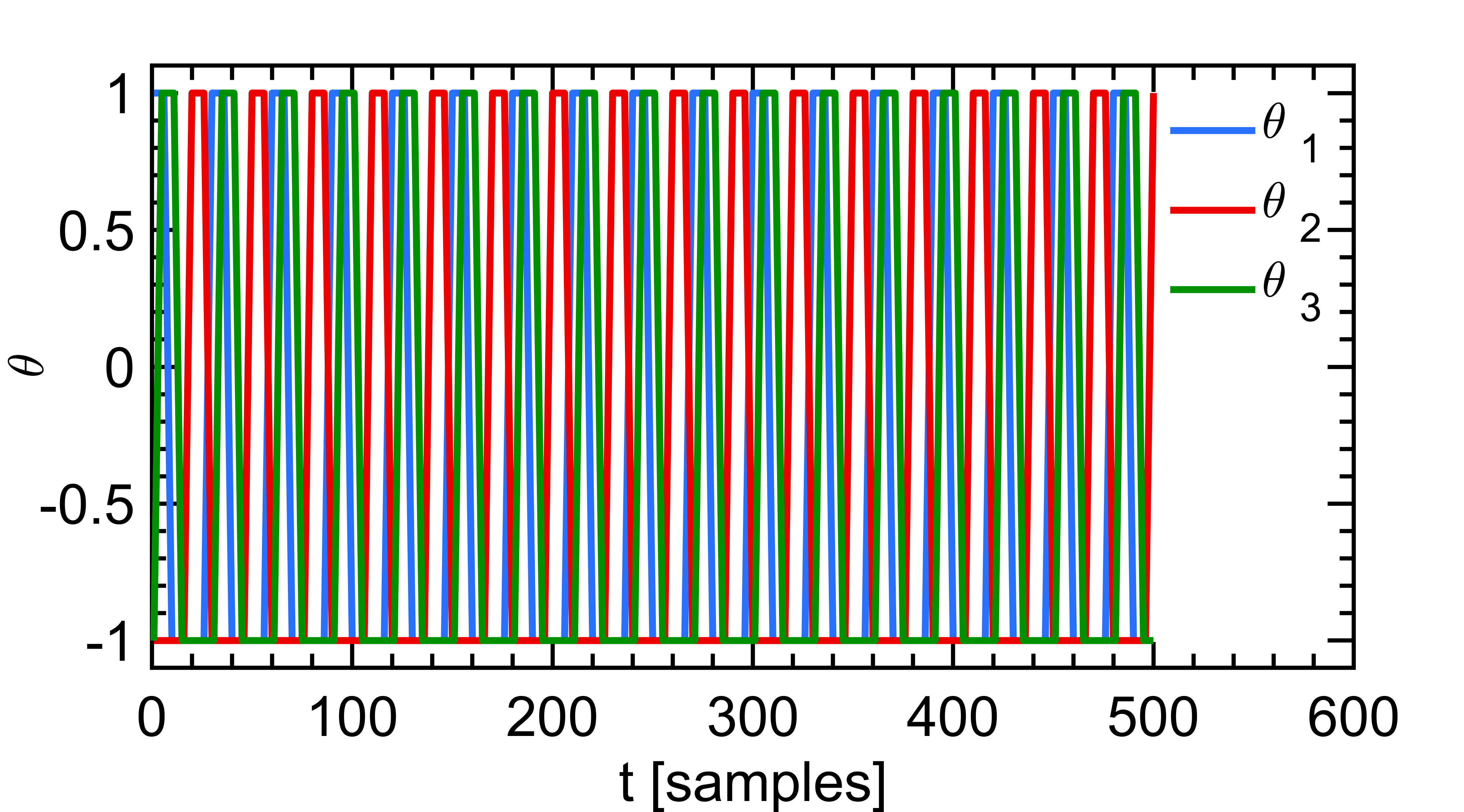}}\label{fig:schedCC}
    \vfill
    \subfigure[Control results without using terminal cost and terminal set.]{\includegraphics[width=0.48\textwidth]{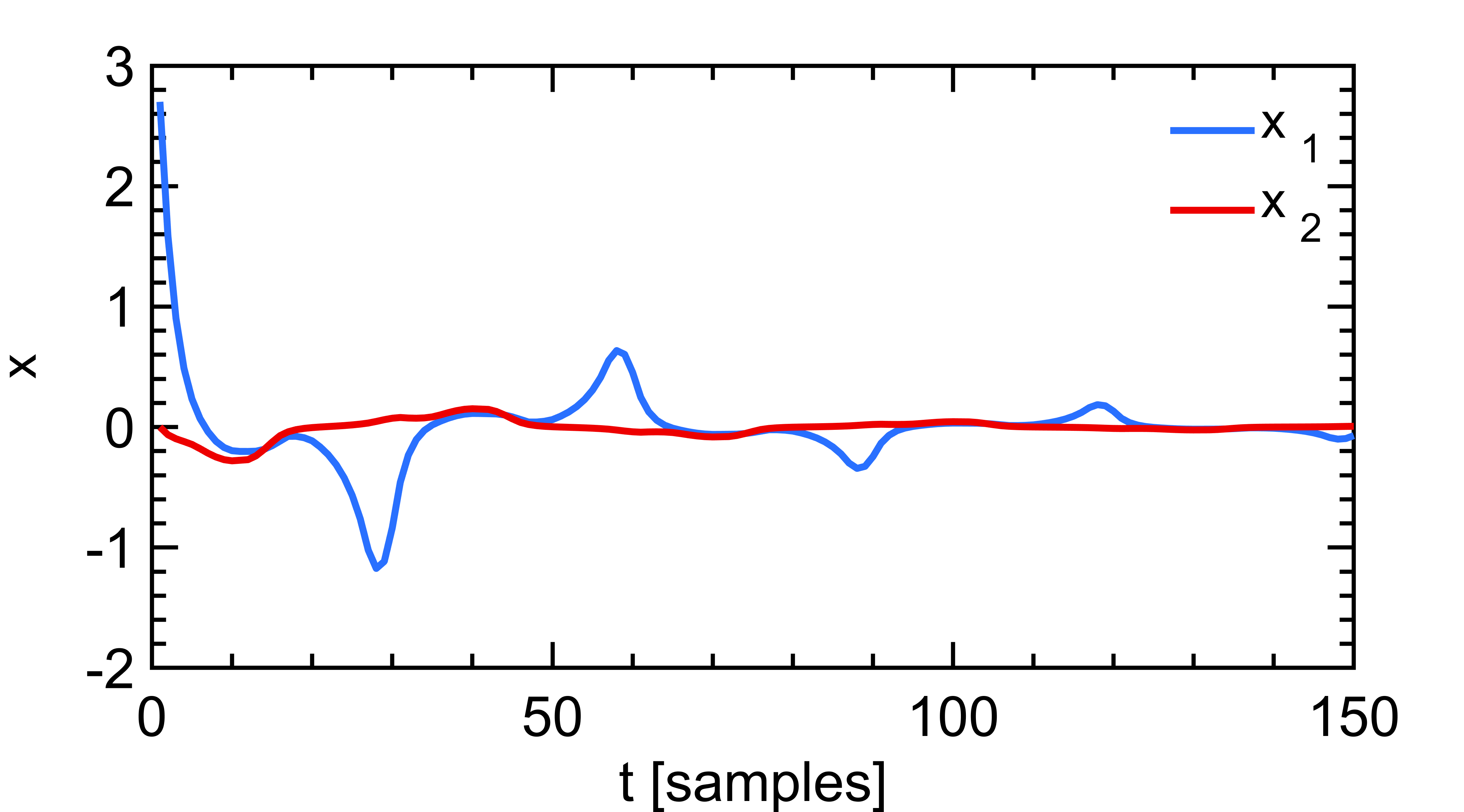}}\label{fig:3cluster}
    \subfigure[Control results using terminal cost and terminal set.]{\includegraphics[width=0.48\textwidth]{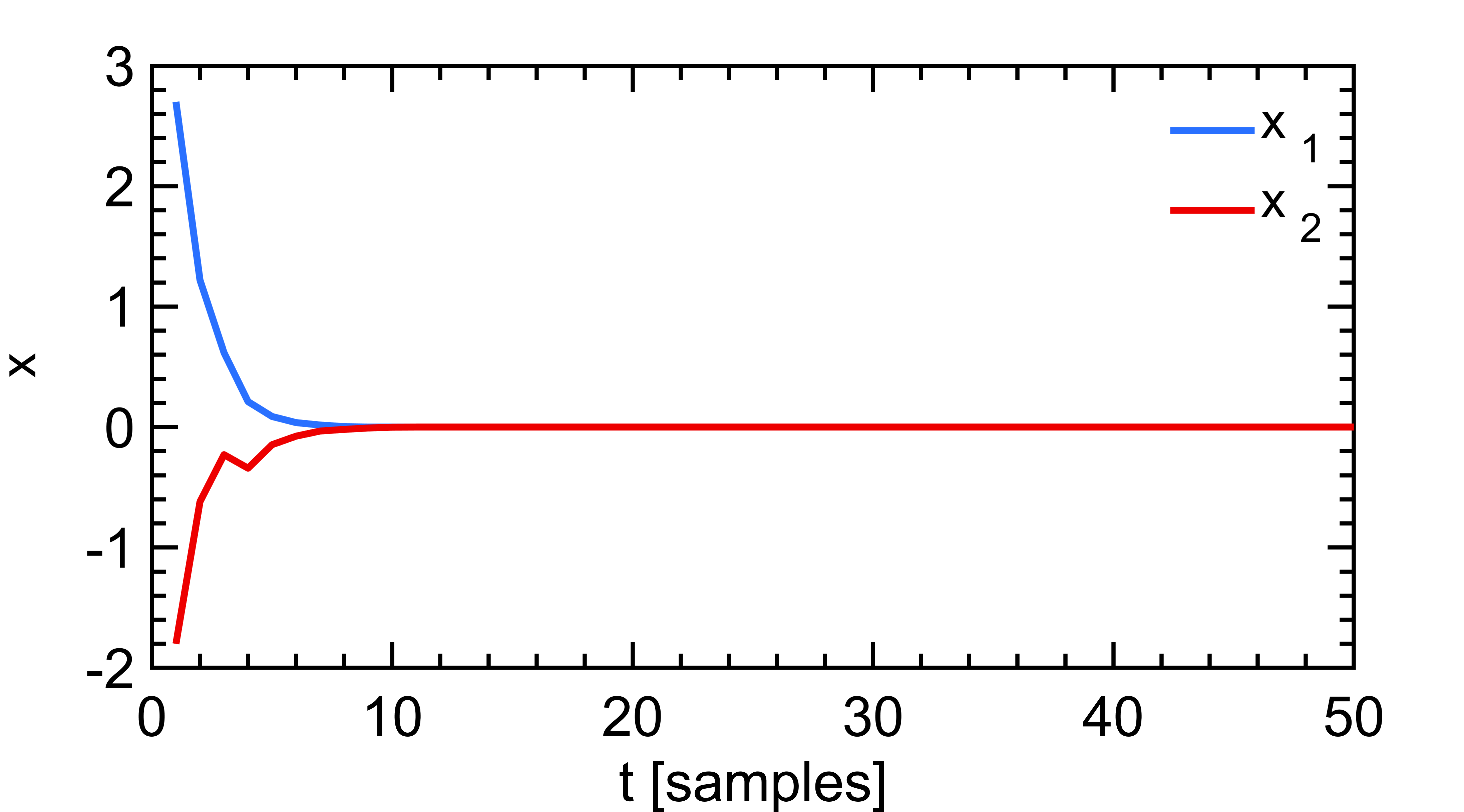}}\label{fig:3cluster_terminal}
\caption{\label{fig:clusterControl}Control results using K-means to generate scenarios.}
\end{figure}

The scheduling signals for control are shown in Fig. \ref{fig:clusterControl}(a), which vary faster than the signals used for model identification in Fig. \ref{fig:data}(a). The control results are shown in Fig. \ref{fig:clusterControl}, where Fig. \ref{fig:clusterControl}(b)-(c) demonstrate that using the terminal cost and terminal set can increase the convergence rate. 

\begin{figure}[H]
\centering
\subfigure{\includegraphics[width=0.48\textwidth]{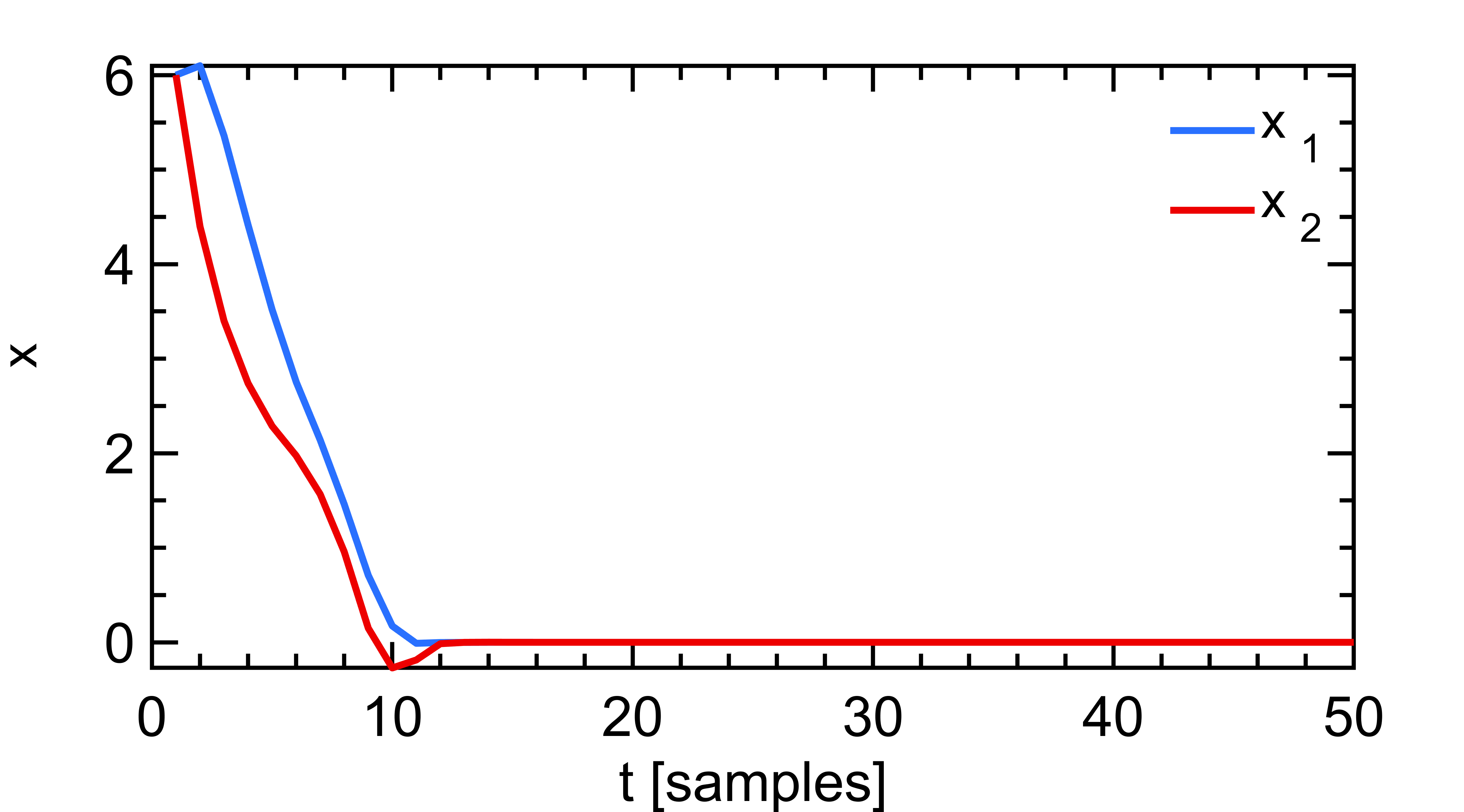}}
\subfigure{\includegraphics[width=0.48\textwidth]{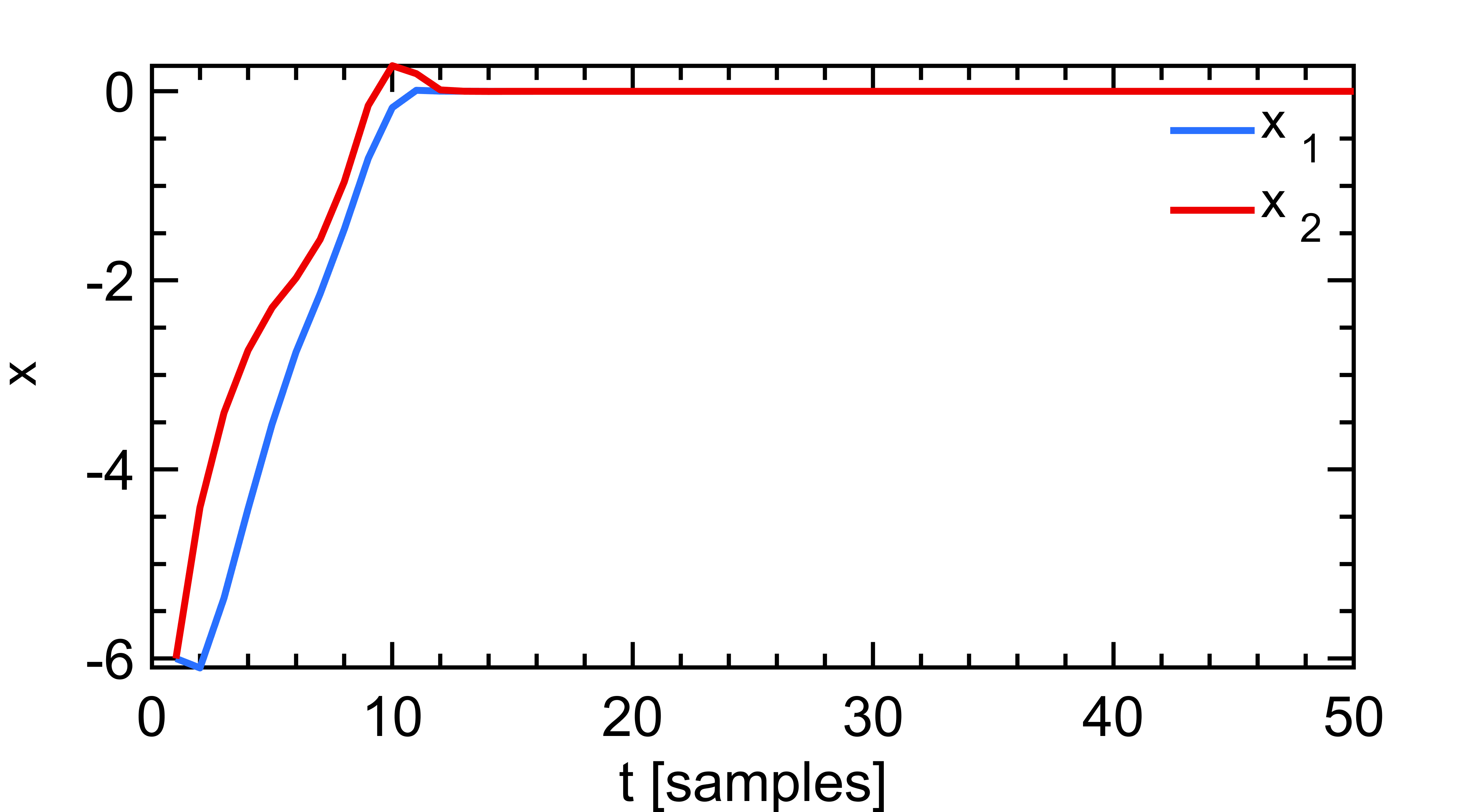}}
\subfigure{\includegraphics[width=0.48\textwidth]{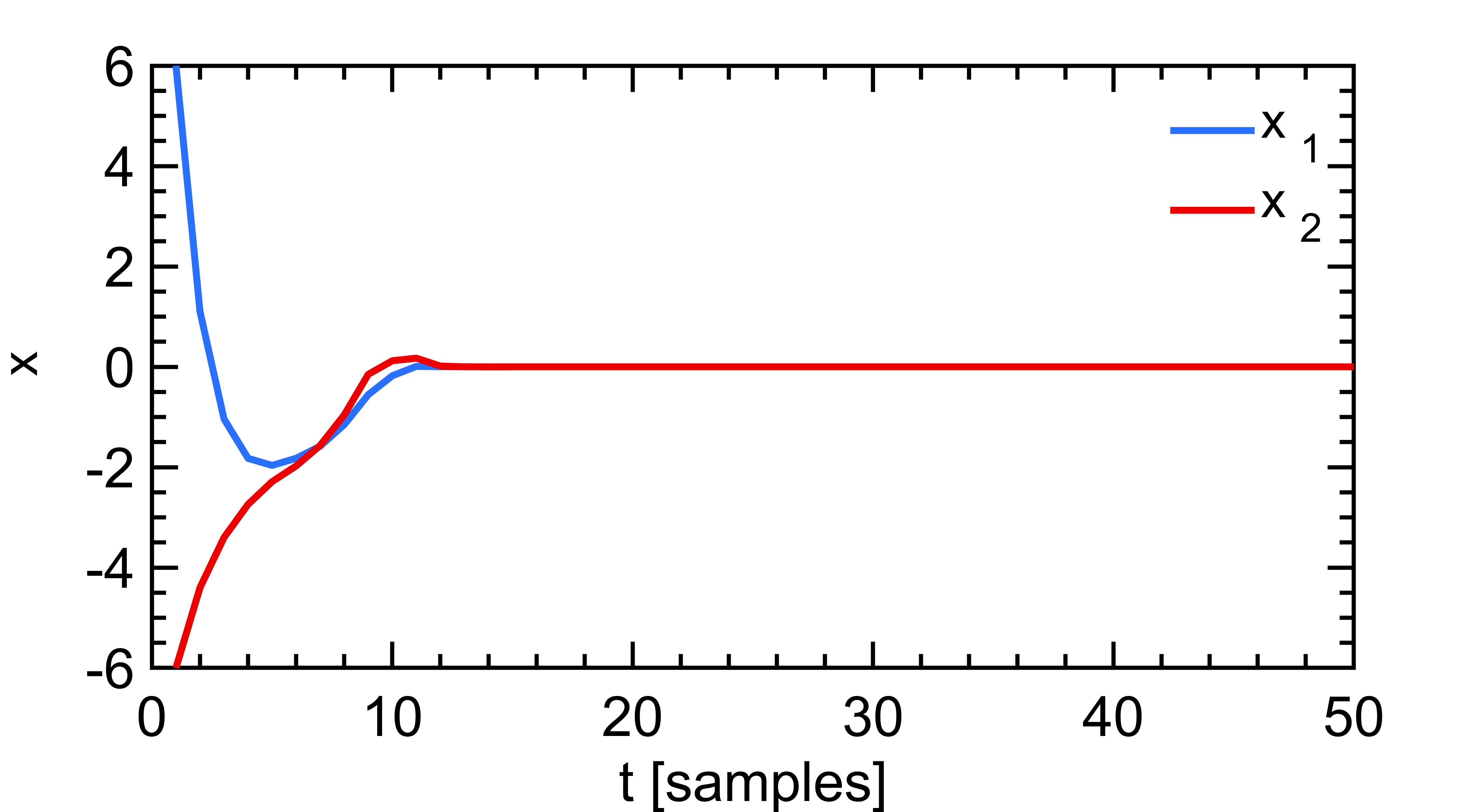}}
\subfigure{\includegraphics[width=0.48\textwidth]{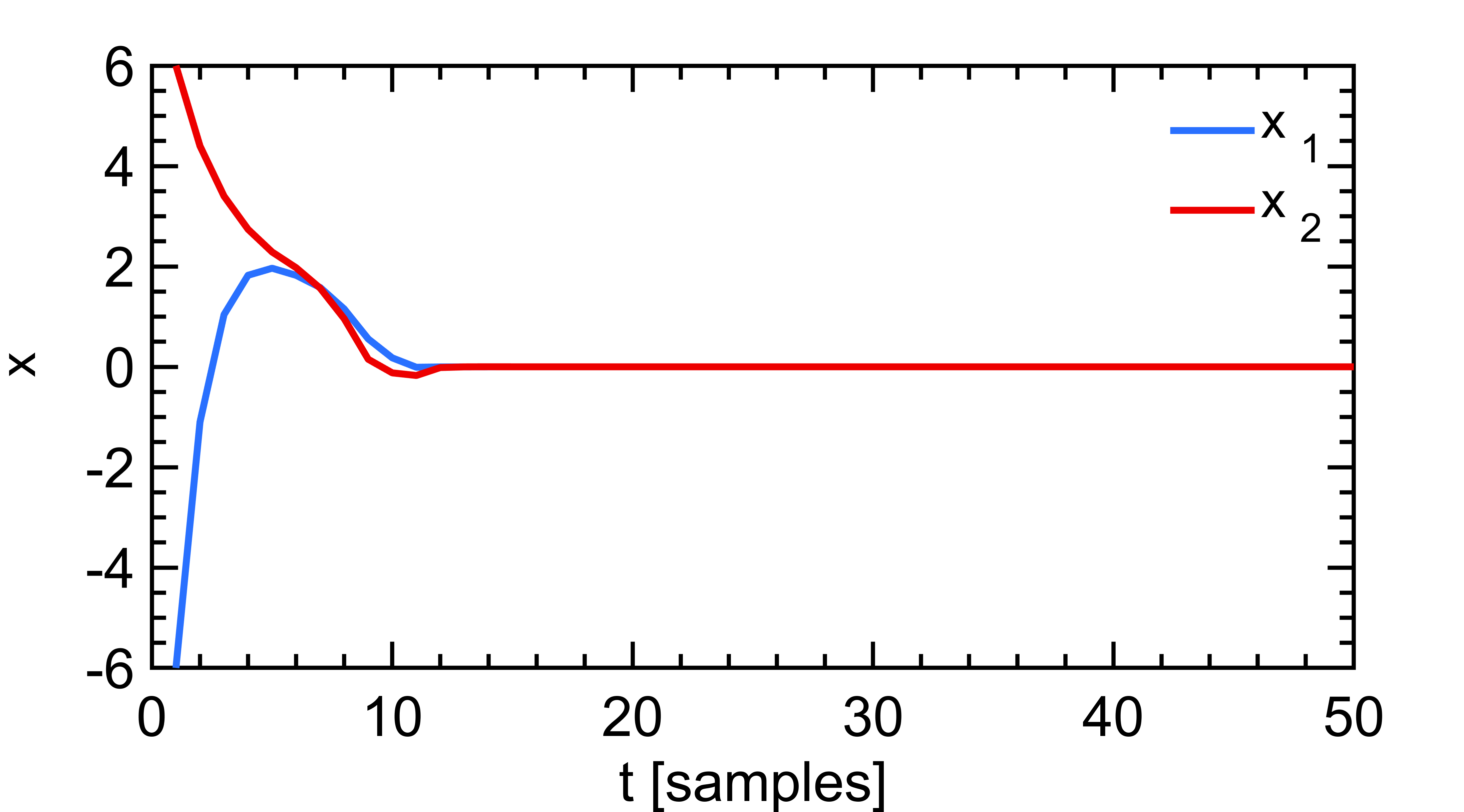}}
\caption{\label{fig:x_v}Control results using K-means to generate scenarios and terminal cost control when the initial states $x_{0}$ are at the vertices of the constraint set $\mathbb{X}$.}
\end{figure}
    
Fig. \ref{fig:x_v} shows that the designed MPC can achieve high control performance even when the initial states are at the vertices of the state constraint set; this is something that was not demonstrated using the approach developed in \citep{hanema2020heterogeneously}. 

\begin{figure}[H]
    \centering
    \subfigure[Random scheduling trajectories.]{\includegraphics[width=0.48\textwidth]{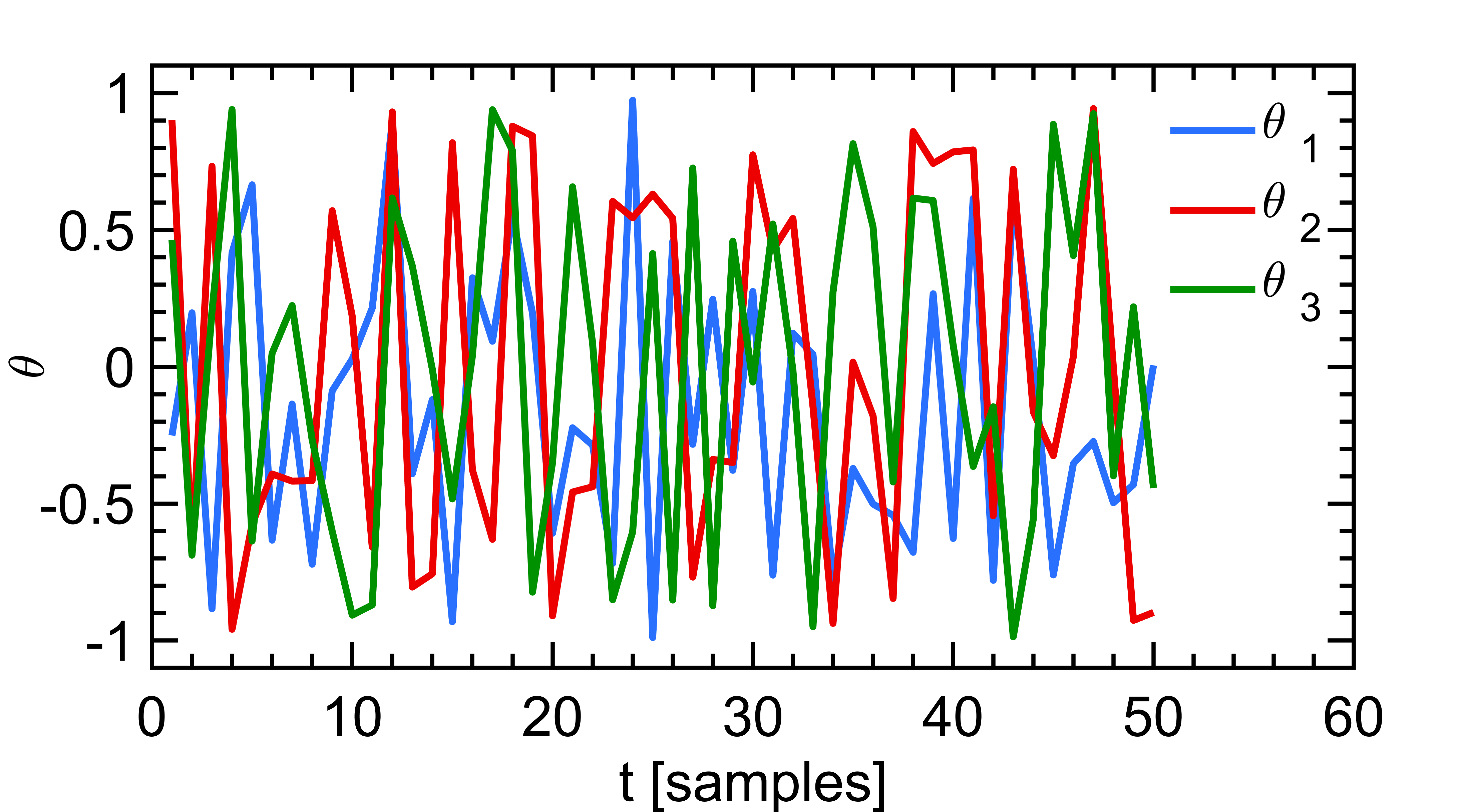}}\label{fig:theta_s_sc}
    \subfigure[State $x_{1}$ trajectory of systems and scenarios.]{\includegraphics[width=0.48\textwidth]{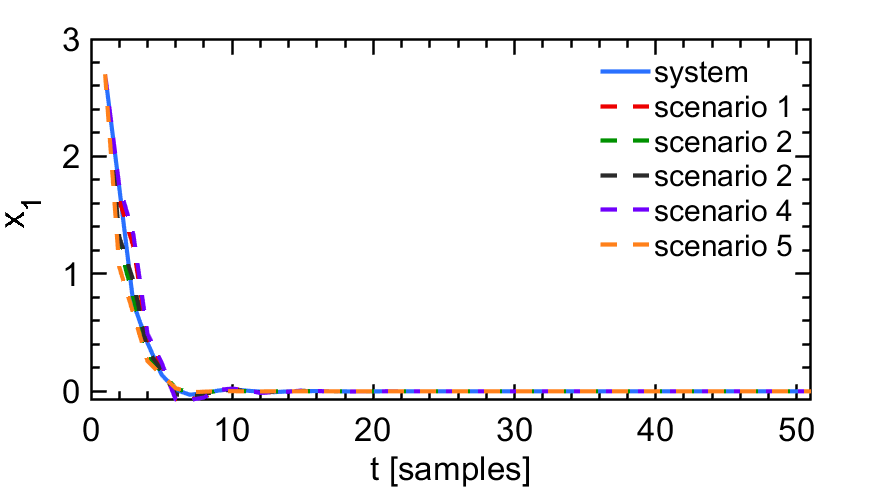}}\label{fig:x1_s_sc}
    \subfigure[State $x_{2}$ trajectory of systems and scenarios.]{\includegraphics[width=0.48\textwidth]{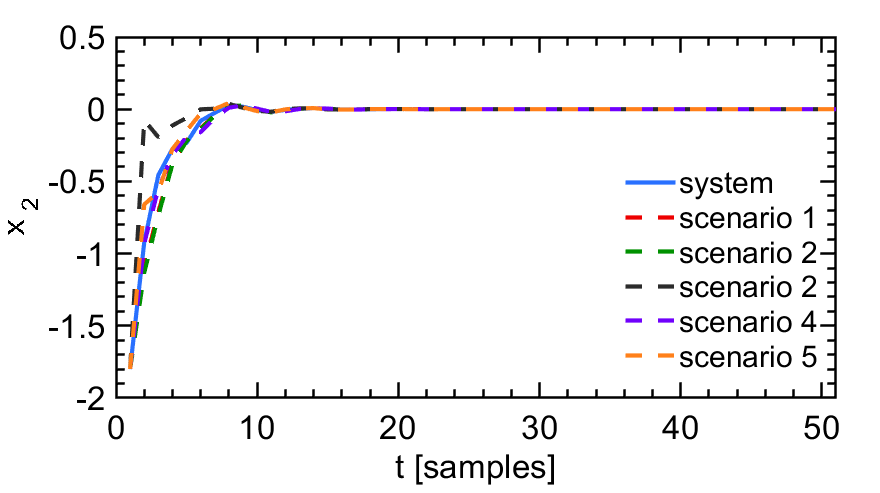}}\label{fig:x2_s_sc}
    \subfigure[Control inputs.]{\includegraphics[width=0.48\textwidth]{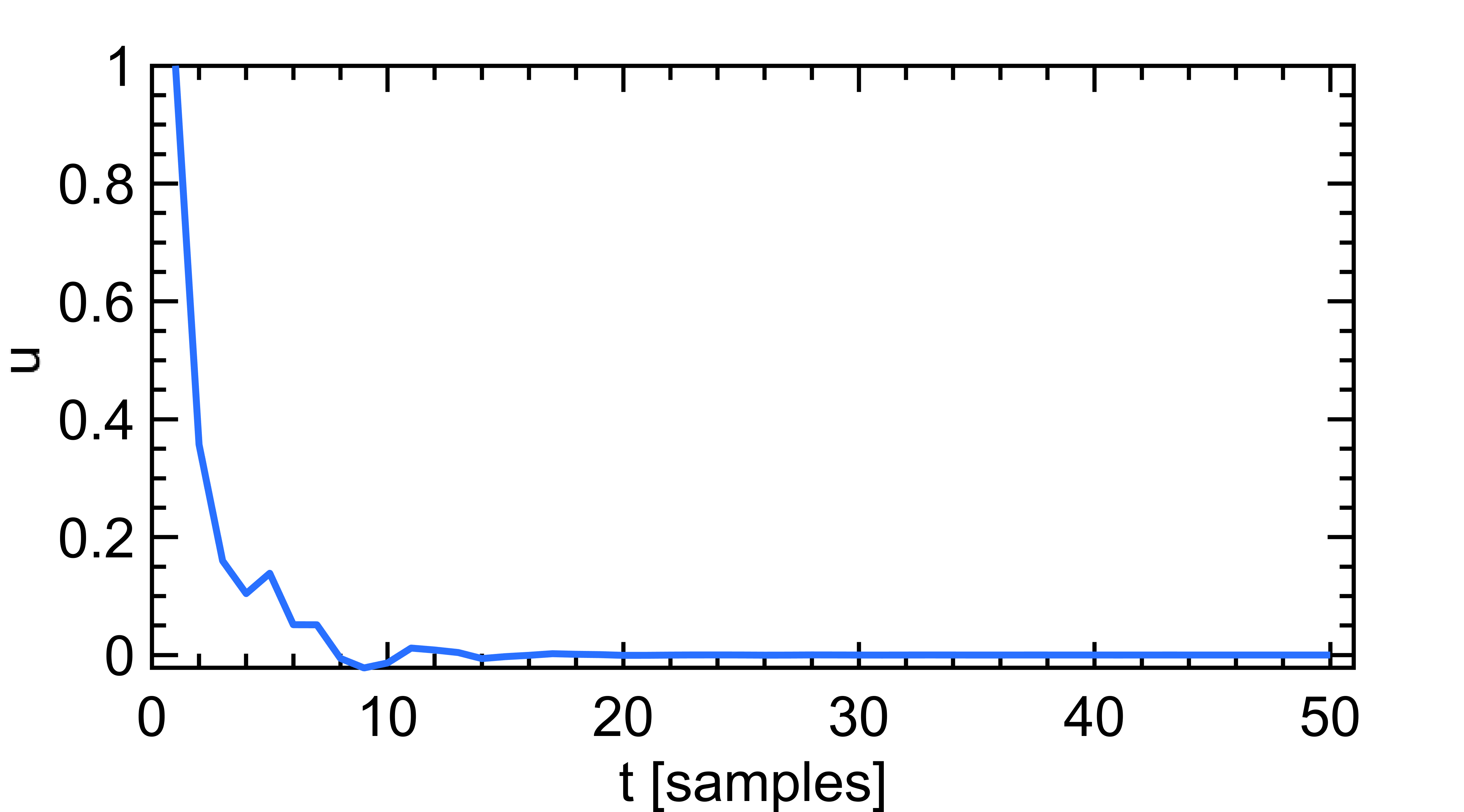}}
    \caption{\label{fig:theta_rand}Control results using terminal cost control when $x_{0}=[2.7;-1.8]$ and a random scheduling trajectory.}

\end{figure}
Additionally, Fig. \ref{fig:theta_rand} shows that the designed MPC is robust against the evolution of the scheduling variables in Fig. \ref{fig:theta_rand}(a) and the real state trajectory is contained among the trajectories of the scenarios (See Fig. \ref{fig:theta_rand}(b)-(c).). 

\subsection{Parameter-varying MIMO System}

The LPV-SS representation of the system is assumed to be
\begin{equation}
\label{eq:lpv-mimo}
    \begin{split}
    x(k+1)&= 
\begin{bmatrix}
\sin(\theta_{1}) & \theta_{1}^{2}+\theta_{1}\theta_{2} \\
\theta_{2}^{3} & \cos(\theta_{1}+\theta_{2}) 
\end{bmatrix}x(k)+ \begin{bmatrix}
\theta_{2}^{4} & \cos(\theta_{2})  \\
\sin(\theta_{1}+\theta_{2}) & \theta_{1}^{3} 
\end{bmatrix}u(k),
    \end{split}
\end{equation}
with constraints and scheduling sets as
\begin{align*}
    &\mathbb{X}  =\{x\in\mathbb{R}^{2}|\|x\|_{\infty}\leq 6\},
    \mathbb{U}  = \{u\in\mathbb{R}^{2}|\left|u\right|_{\infty}\leq 1\} \\
    &\Theta  = \{\theta \in \mathbb{R}^{2}|\|\theta\|_{\infty}\leq 1\}.
\end{align*}
Here, both $A(\cdot)$ and $B(\cdot)$ are nonlinear functions of the scheduling variables.
\subsubsection{Model Identification}\label{sec:sysid} 
We use $\theta_{1}(k)=\sin(0.3k)$ and $\theta_{2}(k)=\sin(0.7k)$ in Fig. \ref{fig:data2}(a) to collect observations $\mathcal{D}=\{(\theta(k), x(k), u(k)), x(k+1)\}$ for model identification purposes. Input signals in Fig. \ref{fig:data2}(b) drawn from the uniform distribution $\mathcal{U}(-0.45,0.45)$ are used to excite the system, and the generated state sequence with initial state $x(0)=[0;0]$ is shown in Fig. \ref{fig:data2}(c). Additionally, $1100$ samples are collected and split into 800 and 300 samples as training and testing sets, respectively.
\begin{figure}[H]
\centering
    \subfigure[Scheduling trajectories.]{\includegraphics[width=0.48\textwidth]{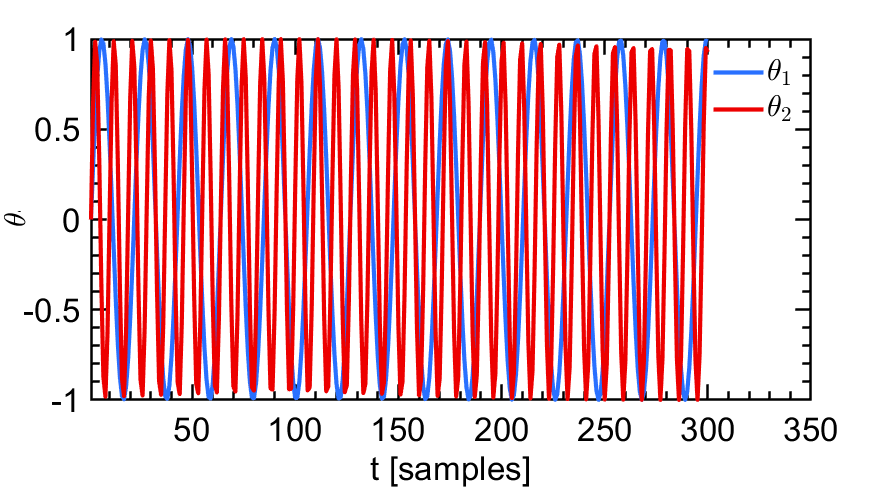}}\label{fig:sched2}
    \vfill
    \subfigure[Inputs to the system.]{\includegraphics[width=0.48\textwidth]{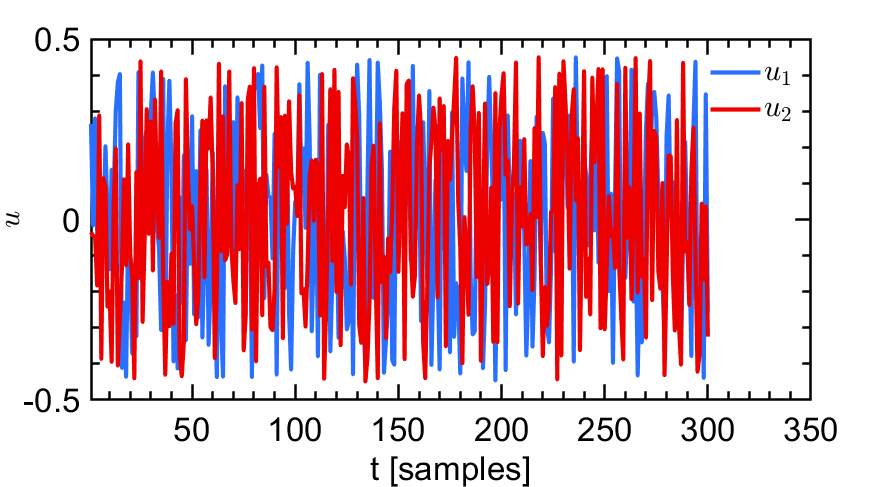}}\label{fig:inp2}
    \subfigure[Sequence of $x$.]{\includegraphics[width=0.48\textwidth]{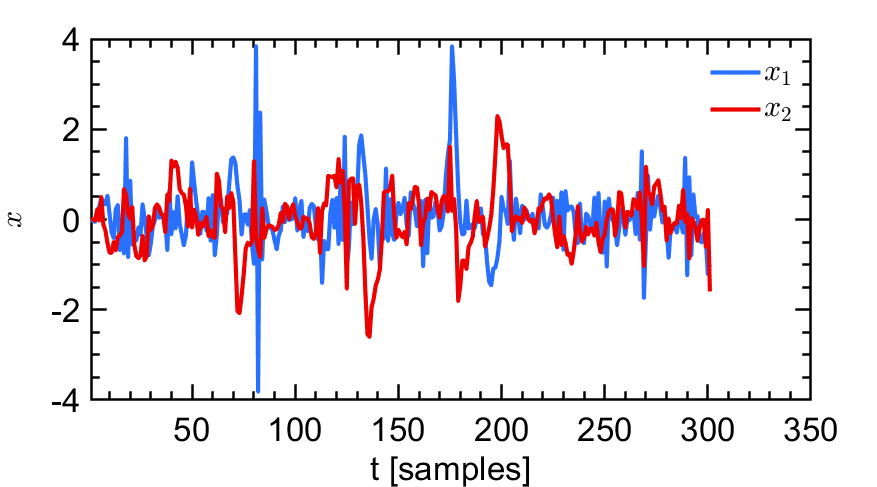}}\label{fig:s_2}
    \caption{\label{fig:data2}Data generated for model identification purposes. For the sake of clarity, only the first 300 training data points are shown here.}
\end{figure}
We use a DenseVariational layer connected to a three-layer fully-connected ANN to represent $A(\cdot)$ and another DenseVariational layer connected to another three-layer fully-connected ANN to represent $B(\cdot)$. All the hidden layers have $32$ hidden units with the Exponential Linear Unit (ELU) activation functions \citep{clevert2015fast} while the output layers have $4$ hidden units without activation functions. The tuning parameters in (\ref{eq:priors}) are determined as $\sigma_{1}=0.3, \sigma_{2}=0.1$. Adam optimizer is used with a learning rate set to $0.001$ and other hyper-parameters as default. Moreover, we first trained an ANN model with the same architecture as the BNN model, used the trained ANN weights to initialize the BNN model, and then trained the BNN model for $10,000$ epochs. The validation results are shown in Fig. \ref{fig:val}. 
\begin{figure}[H]
\centering
    \subfigure{\includegraphics[width=0.48\textwidth]{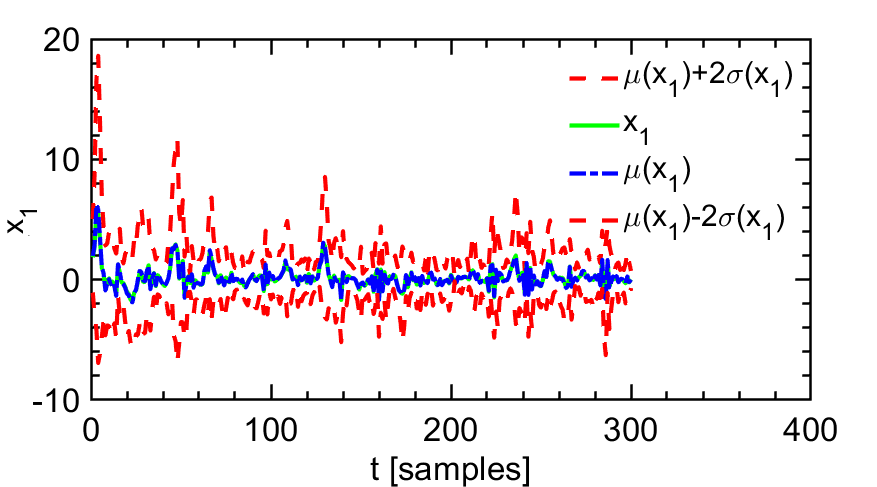}}
    \subfigure{\includegraphics[width=0.48\textwidth]{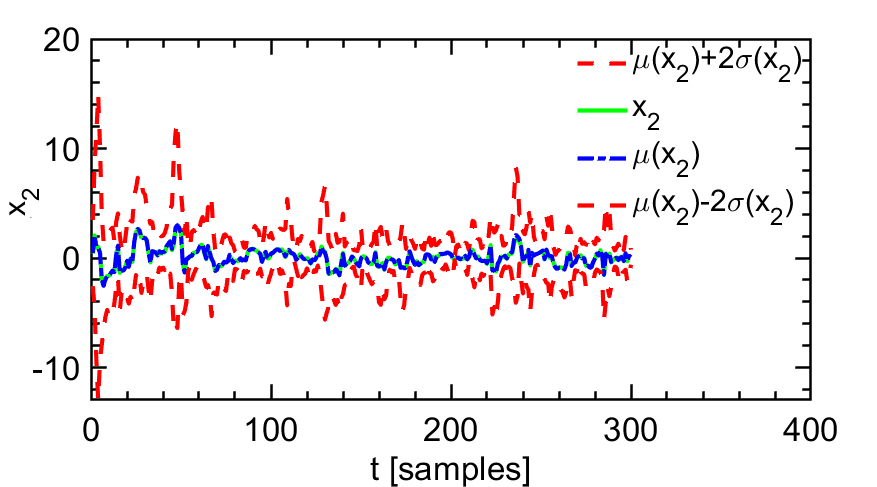}}
    \caption{\label{fig:val}Validation results for the identified BNN model. The $\textup{BFR}=[93.14\%;92.47\%]$ using the estimated mean as predictions for outputs. None of the samples are out of $2\sigma_{x}$.}
\end{figure}

\subsubsection{Validation of The Proposed Approach}

Without assuming extra knowledge on the evolution of the scheduling variables beyond the scheduling sets, we randomly sample $100$ $\theta$'s from the uniform distribution over $\Theta$ and then evaluate both $A(\cdot)$ and $B(\cdot)$ for $N_{\text{MC}}=500$ times using the dynamic functions sampled from the BNN model for each $\theta$. Then, we apply K-means to the concatenations of the vectorized $A$'s and $B$'s to generate the scenarios. The number of clusters is assumed to be 3. Also, $\beta_{\text{M}}=2, \text{M}=A,B$ is considered here for the worst-case scenarios. Therefore, 5 scenarios were used including $\mu_{\text{M}}\pm \beta_{\text{M}}\sigma_{\text{M}}$. Additionally, the scenarios are fixed within the robust horizon of the tree generation in this case due to the limited time-invariant knowledge of $\theta$. The probability of the 5 scenarios is $\mathbf{p}=[0.26; 0.30; 0.26; 0.09; 0.09]$ using the moment matching method. Moreover, in our experiments, $Q=I_{2\times 2},R=I_{2\times 2}$ for the stage cost $\ell$ in (\ref{eq:stagecost}). The prediction horizon is set to 10 and the robust horizon to 1. The RPI set was computed based on the BNN model. 

\begin{figure}[!htbp]
\centering
    \subfigure[RPI set based on the BNN model.]{\includegraphics[width=0.4\textwidth]{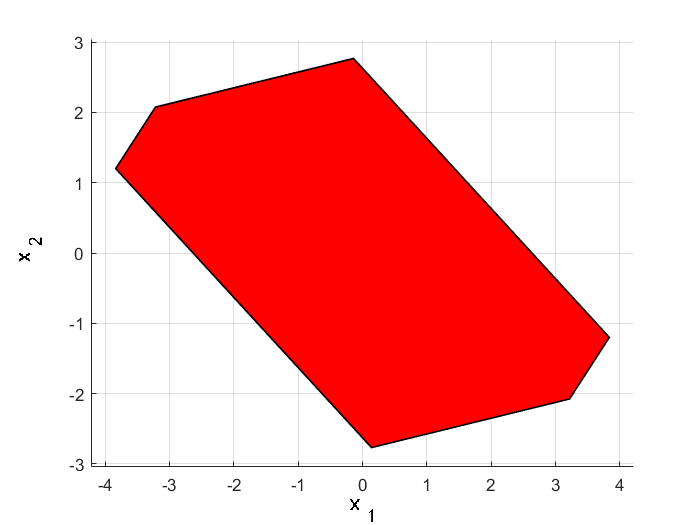}}\label{fig:rpi_2} 
    \subfigure[Random scheduling signals for control.]{\includegraphics[width=0.48\textwidth]{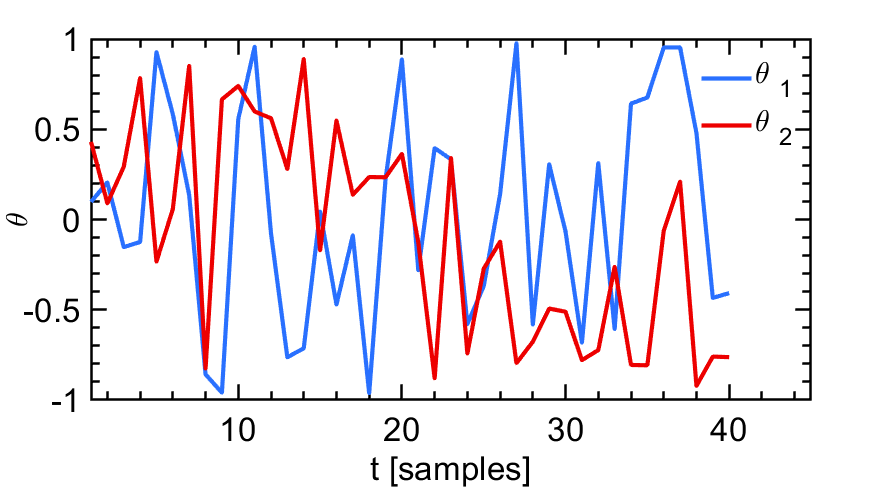}}\label{fig:2_schedule_random}
    \subfigure
    []{\includegraphics[width=0.48\textwidth]{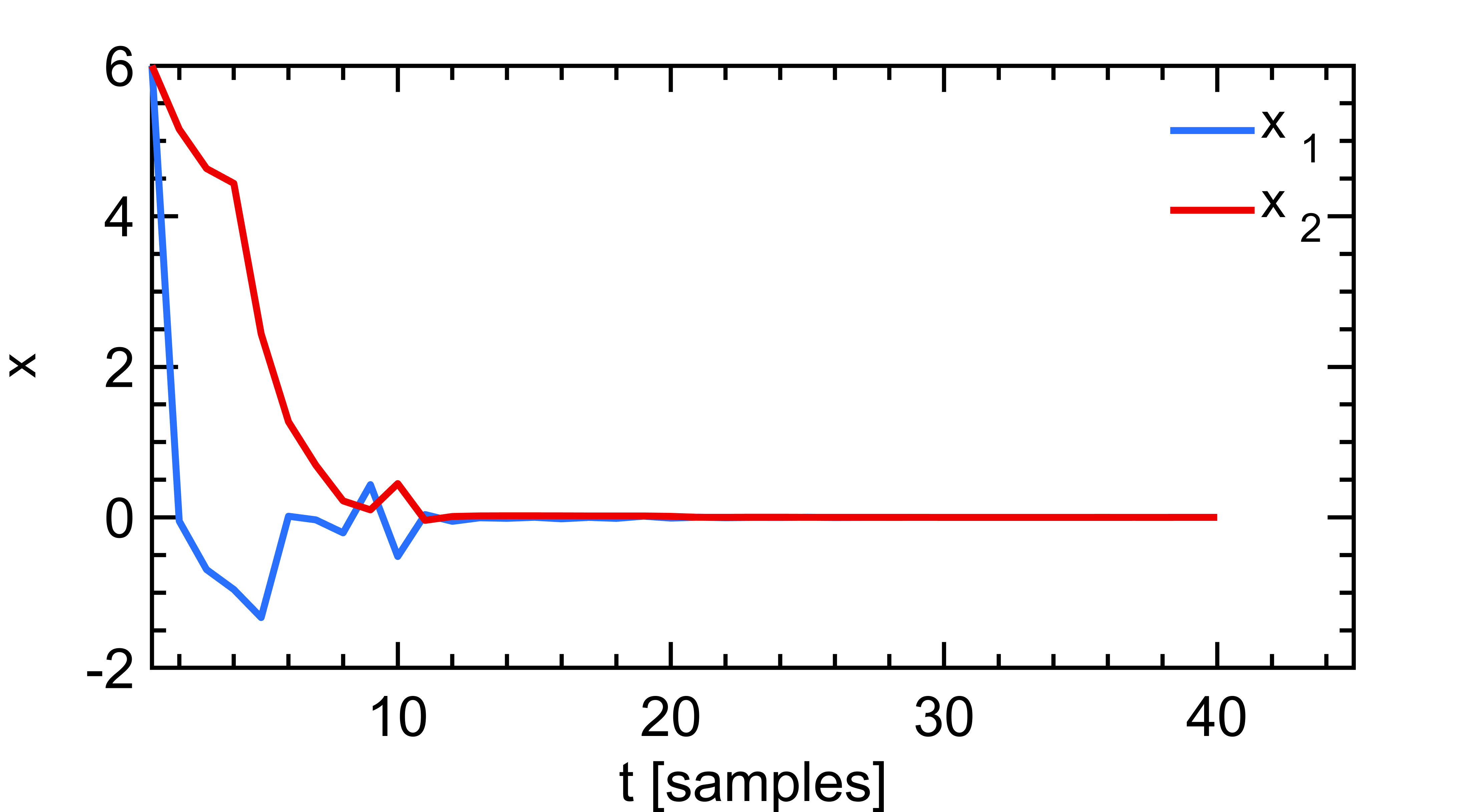}}\label{fig:example_2_66}
    \subfigure
    []{\includegraphics[width=0.48\textwidth]{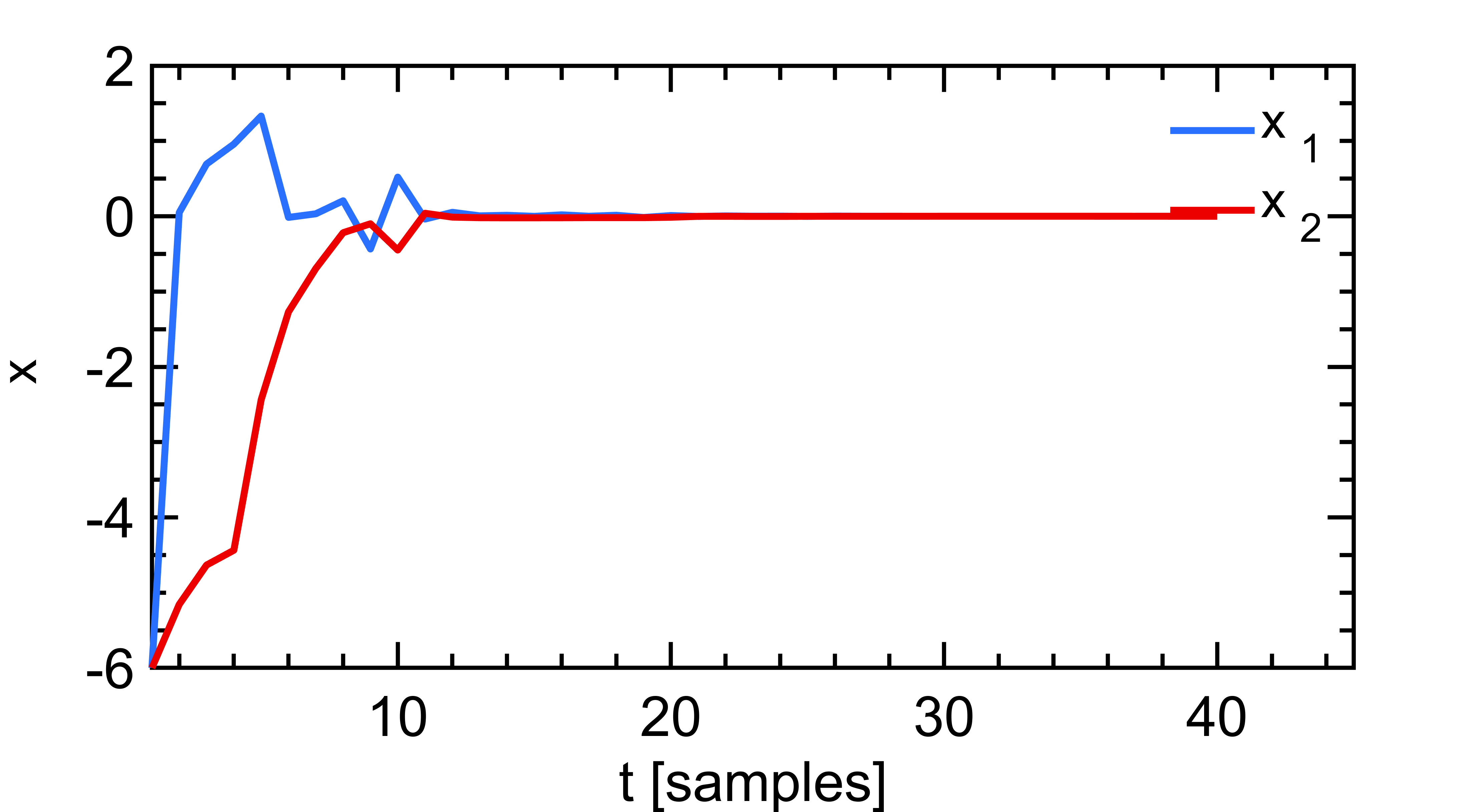}}\label{fig:example_2_-6-6}
    \subfigure
    []{\includegraphics[width=0.48\textwidth]{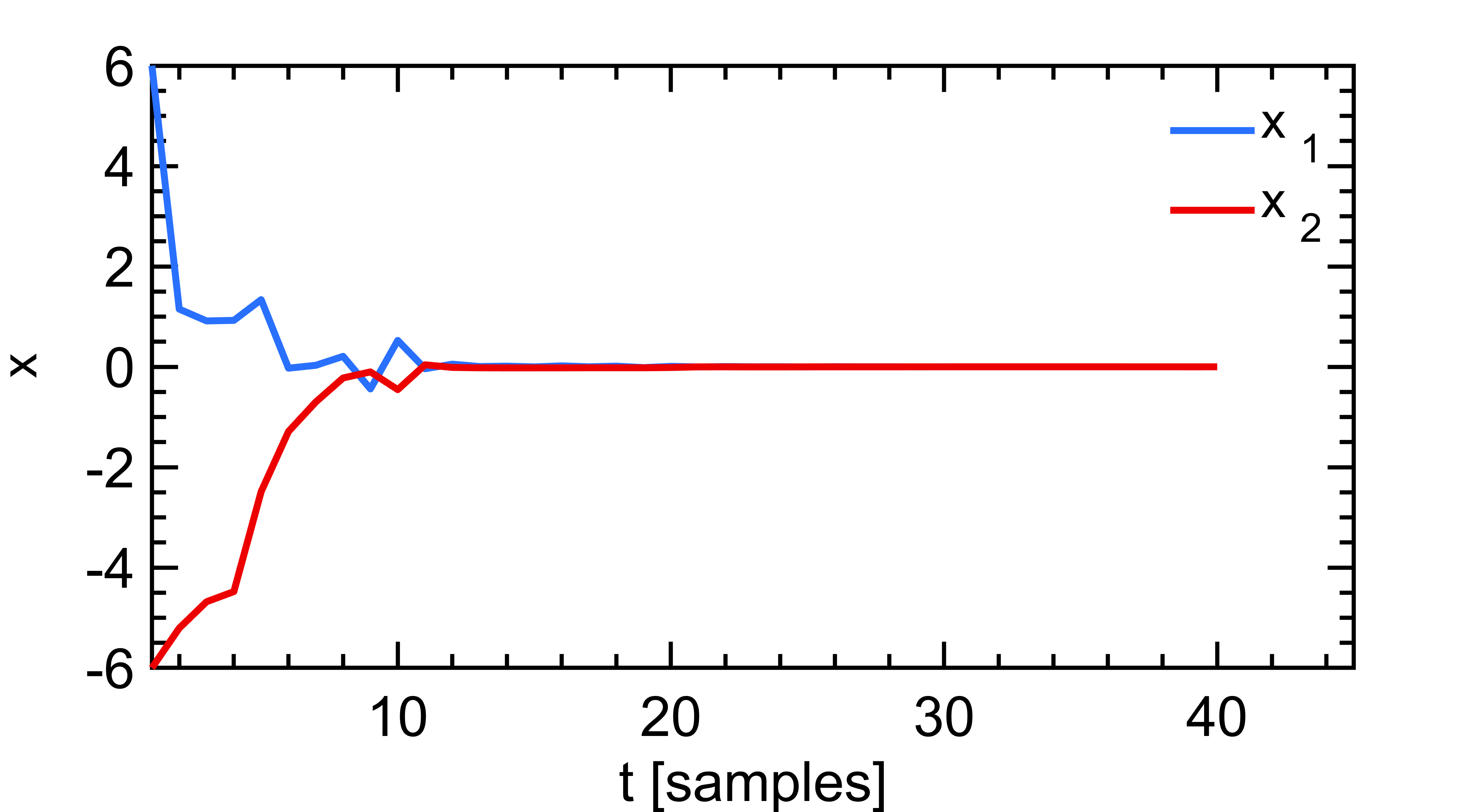}}\label{fig:example_2_6-6}
    \subfigure
    []{\includegraphics[width=0.48\textwidth]{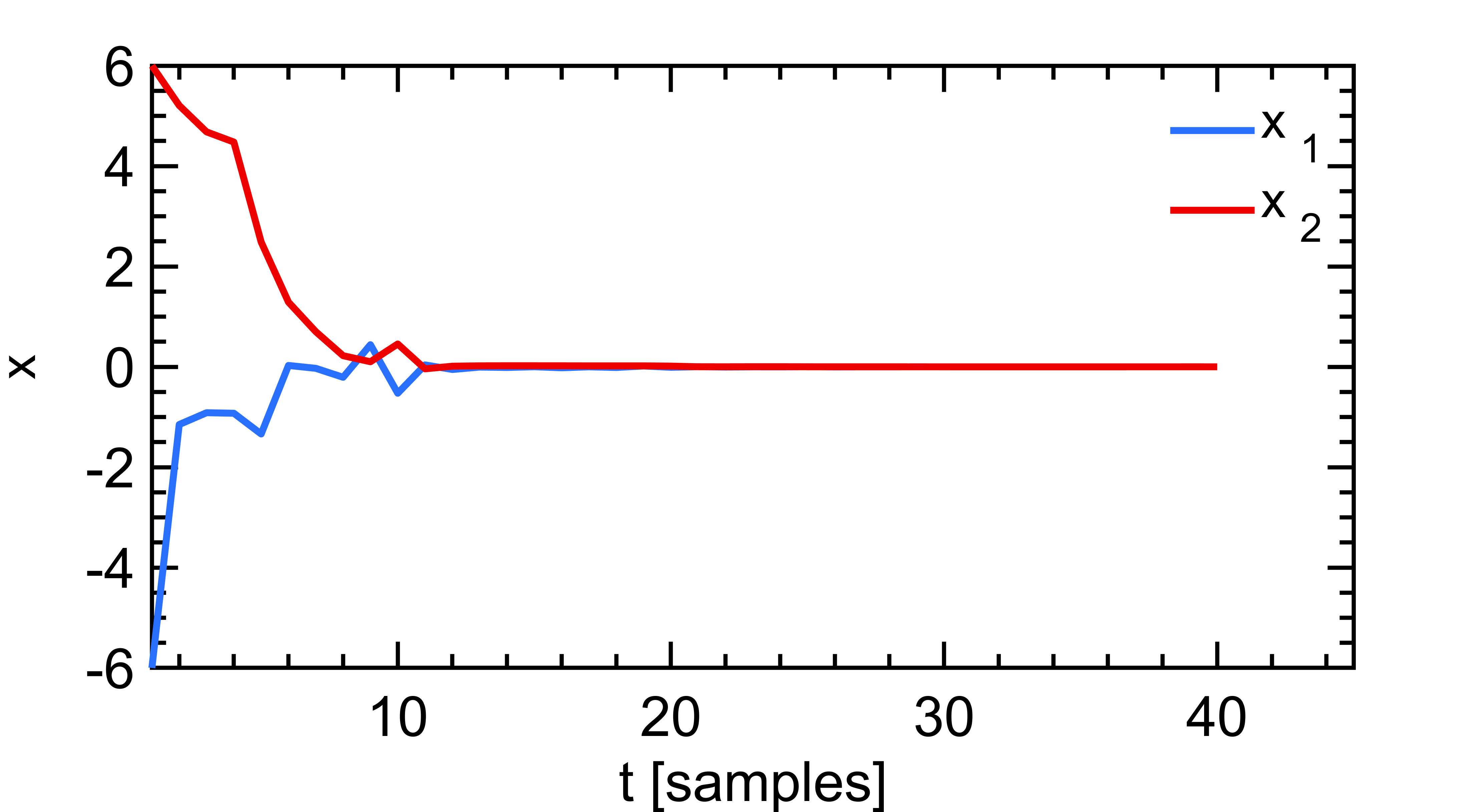}}\label{fig:example_2_-66}
    
    \caption{\label{fig:2_clusterControl}Control results using the proposed approach.}
\end{figure}

\textbf{Results and Discussion:} The computed RPI set based on the BNN model is shown in Fig. \ref{fig:2_clusterControl}(a). The scheduling signal for control is random, as shown in Fig. \ref{fig:2_clusterControl}(b), which varies faster than the signal for identification in Fig. \ref{fig:data2}(a), to demonstrate that the designed MPC is robust against the evolution of the scheduling variable. The control results in Fig. \ref{fig:2_clusterControl}(c-f) show that the designed MPC can achieve good control performance even when the initial states are at the vertices of the state constraint set.

\subsection{Two-tank System}
The cascaded two-tank system \citep{hanema2021stabilizing} can be described by
\begin{subequations} \label{eq:twotank}
\begin{align}
    \rho S_{1} \dot{h}_{1} & = -\rho A_{1}\sqrt{2gh_{1}} + u, \\
    \rho S_{2}\dot{h}_{2} &= \rho A_{1}\sqrt{2gh_{1}}-\rho A_{2}\sqrt{2gh_{2}},
\end{align}
\end{subequations}
where $u$ is the flow of liquid with density $\rho=0.001 ~\textup{kgcm}^{-3}$ pumped into the upper tank. $S_{1} = 2500~\textup{cm}^{2}$, $S_{2} = 1600~\textup{cm}^{2}$, $A_{1} = 9~\textup{cm}^{2}$, and $A_{2} = 4~\textup{cm}^{2}$ denote the cross-sectional areas of the upper tank, the lower tank, the pipe through which the liquid flows into the lower tank, and the pipe through which the liquid flows out, respectively. The control objective is to regulate the levels $h_{1}$ and $h_{2}$ at a given set point. $u$ is available as a control input and subject to the constraint $\mathbb{U} = \{u | 0~\textup{kgs}^{-1}\leq u\leq 4~\textup{kgs}^{-1}\}$. Additionally, the liquid levels satisfy the bounds $\mathbb{X}=\{x=[h_{1},h_{2}]^{\mathrm{T}}|1 ~\textup{cm}\leq h_{1}\leq 35~\textup{cm}, 10~\textup{cm} \leq h_{2} \leq 200~\textup{cm}\}$. The system model \eqref{eq:twotank} is assumed to be unknown for control design and only used for simulation. In the simulation, the goal is to reach a reference value $h_{2}^{*} = 115~\textup{cm}$ of the lower tank. Moreover, the translated state and input variables $\Tilde{x}=x-[22.72,115]^{\mathrm{T}}$ and $\Tilde{u}=u-1.90$ are introduced to convert the problem into a stabilization problem.    
\subsubsection{System Identification}
We apply a random input signal drawn from uniform distribution $U[0,4]$ to collect observations $\mathcal{D}=\{(x(t),u(t)), x(t+1)\}$ for model identification. The sampling time is $0.9$ seconds. The input and the collected state sequences are shown in Fig. \ref{fig:data_twotank}. 
Furthermore, $1000$ samples are collected and split into training and testing sets with a ratio of 65\%/35\%.
\begin{figure}[!htbp]
\centering
    \subfigure[Inputs to the system.]{\includegraphics[width=0.48\textwidth]{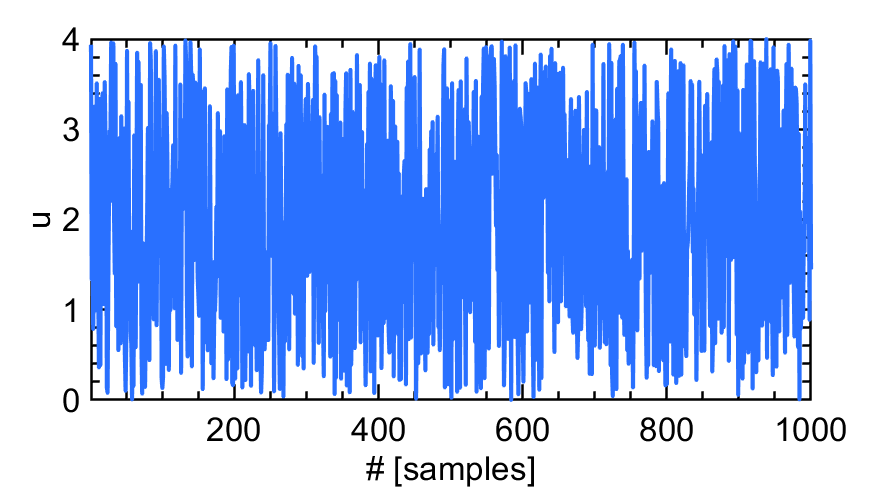}}
    \vfill
    \subfigure[Sequences of $x_{1}=h_{1}$.]{\includegraphics[width=0.48\textwidth]{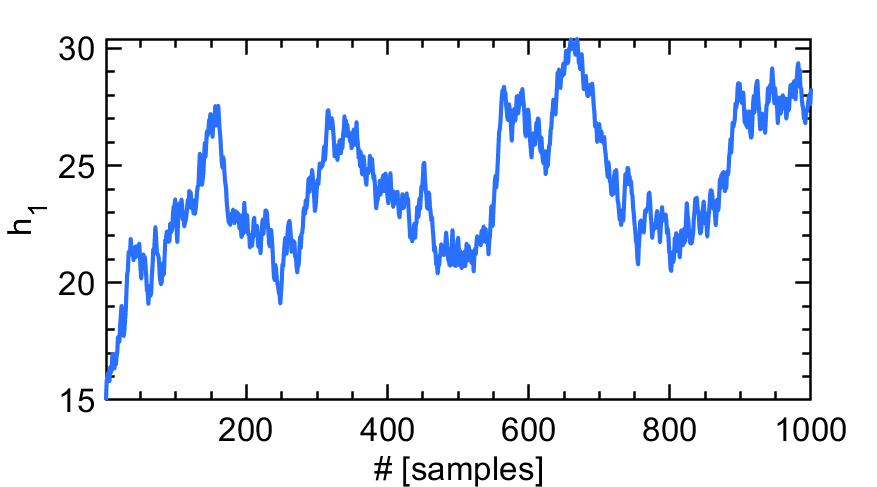}}
    \subfigure[Sequences of $x_{2}=h_{2}$.]{\includegraphics[width=0.48\textwidth]{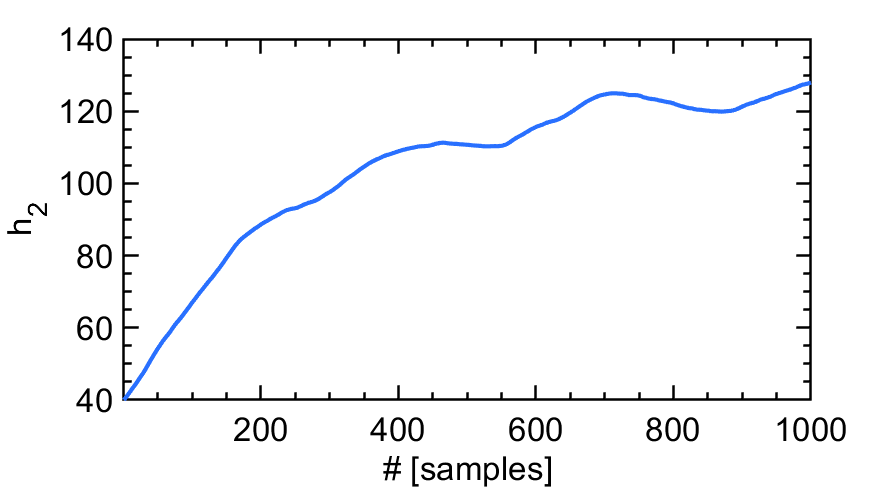}}
    \caption{\label{fig:data_twotank}Data generated for system identification purposes.}
\end{figure}
Since we assume \eqref{eq:twotank} is unknown, we cannot choose the scheduling variables and transform \eqref{eq:twotank} into an exact LPV embedding as \cite{hanema2021stabilizing}, and thus we cannot use the approach in \cite{hanema2021stabilizing} for control design. Instead, we simply use the states as the scheduling variables to learn a model in the form of \eqref{eq:bnn} but treat the scheduling variables as free variables in the prediction horizon of SMPC. In particular, we use a DenseVariational layer connected to a three-layer fully-connected ANN to represent $A(\cdot)$. All the hidden layers have $32$ hidden units with ELU activation functions while the output layers have $4$ hidden units without activation functions. Moreover, we use one Dense layer with $2$ hidden units to represent $B(\cdot)$ and the dense layer does not use bias. The tuning parameters in (\ref{eq:priors}) are determined as $\sigma_{1}=1.5, \sigma_{2}=0.1$. Adam optimizer is used with a learning rate set to $0.001$ and other hyper-parameters as default. Moreover, we first trained an ANN model with the same architecture as the BNN model, used the trained ANN weights to initialize the BNN model, and then trained the BNN model for $50,000$ epochs. The validation results are shown in Fig. \ref{fig:val_twotank}. 
\begin{figure}[H]
\centering
    \subfigure{\includegraphics[width=0.48\textwidth]{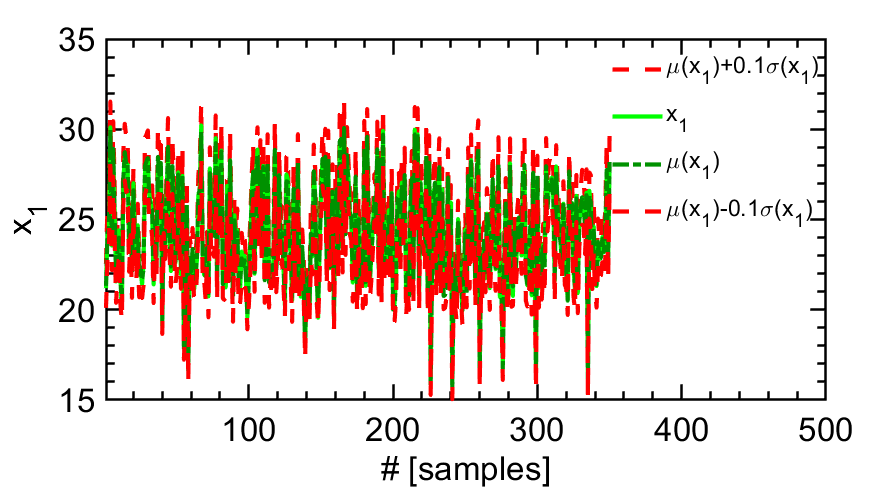}}
    \subfigure{\includegraphics[width=0.48\textwidth]{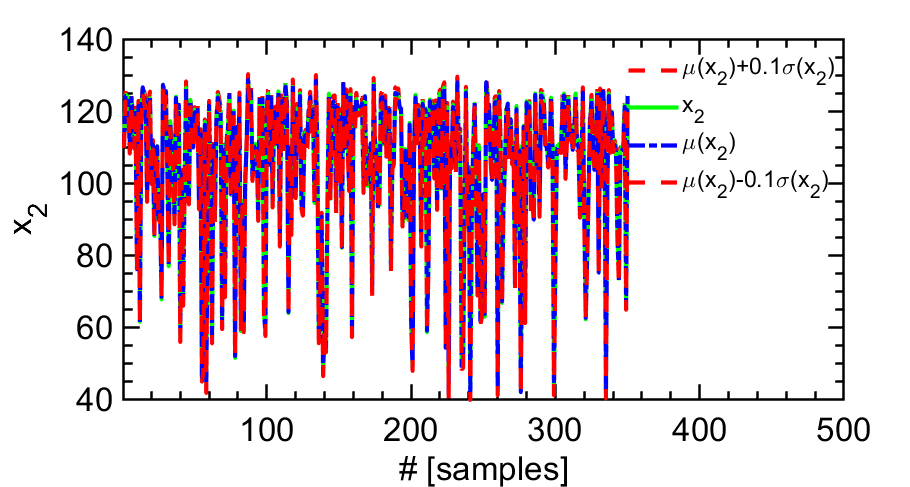}}
    \caption{\label{fig:val_twotank}Validation results for the identified BNN model. The $\textup{BFR}=[92.04\%;97.52\%]$ using the estimated mean as predictions for outputs. 96\% of the states are within 0.1 standard deviations of the average predictions.}
\end{figure}

\subsubsection{Validation of The Proposed Approach}

Without assuming extra knowledge on the evolution of the scheduling variables beyond the scheduling sets, we randomly sample $1000$ $\theta$'s from the uniform distribution over $\Theta=\mathbb{X}$ and then evaluate $A(\cdot)$ for $N_{\text{MC}}=5000$ times using the dynamic functions sampled from the BNN model for each $\theta$. Then, we apply K-means to the vectorized $A$'s to generate the scenarios. The number of clusters is assumed to be 3. Also, $\beta_{A}=0.1$ is considered here for the worst-case scenarios. Therefore, 5 scenarios were used including $\mu_{A}\pm \beta_{A}\sigma_{A}$. Additionally, the scenarios are fixed within the robust horizon of the tree generation in this case due to the limited time-invariant knowledge of $\theta$. The probability of the 5 scenarios is $\mathbf{p}=[0.12; 0.81; 0.07; 0.00; 0.00]$ using the moment matching method. Moreover, in our experiments, $Q=I_{2\times 2},R=10$ for the stage cost $\ell$ in (\ref{eq:stagecost}). The prediction horizon is set to 4 and the robust horizon to 1. Additionally, We use a 4-layer fully-connected NN with 4 and 8 units in the 2 hidden layers to model the coordinate transformation from the scheduling variable in \eqref{eq:bnn} to the scheduling variable in \eqref{eq:lpv-affine}. The RPI set shown in Figure \ref{fig:twotank_clusterControl} (a) was computed based on the BNN model. 
\begin{figure}[!htbp]
\centering
    \subfigure[RPI set based on the BNN model.]{\includegraphics[width=0.4\textwidth]{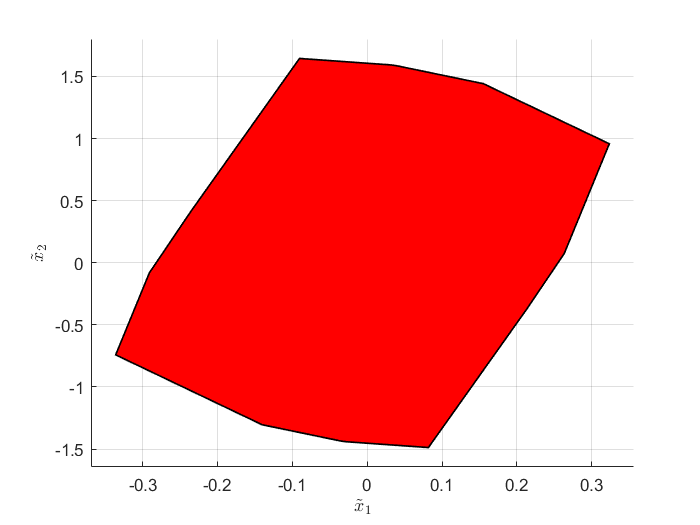}}
    \subfigure
    [$x_{1}$ profile.]{\includegraphics[width=0.48\textwidth]{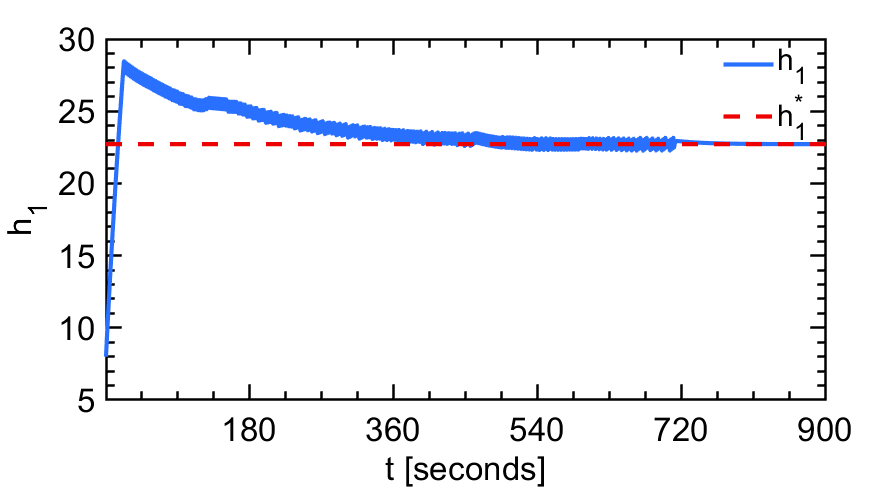}}
    \subfigure
    [$x_{2}$ profile.]{\includegraphics[width=0.48\textwidth]{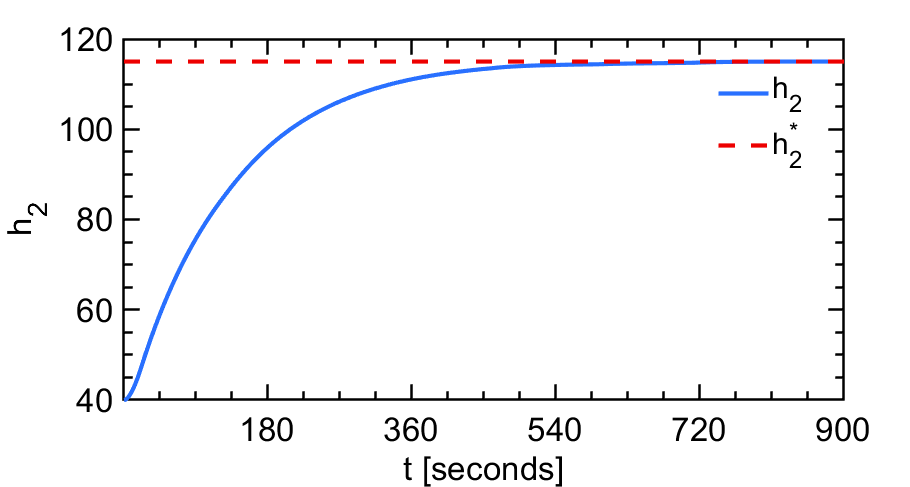}}
    \subfigure[Control inputs.]
    {\includegraphics[width=0.48\textwidth]{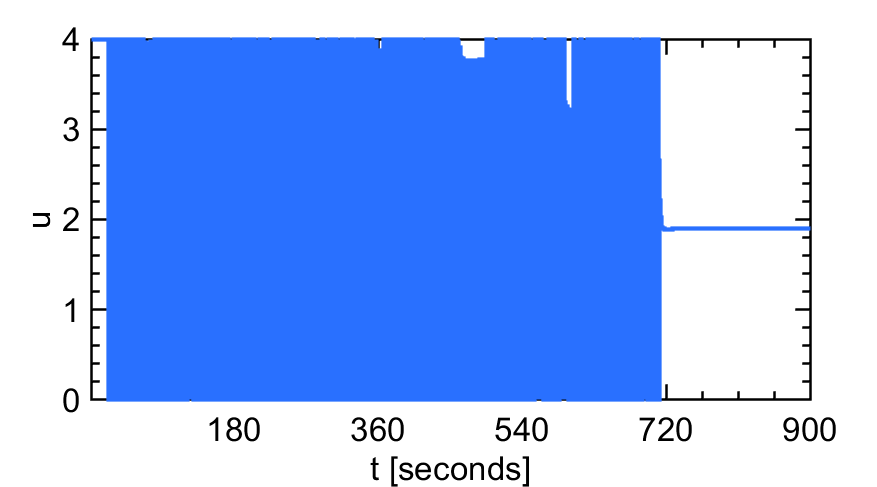}}
    \caption{\label{fig:twotank_clusterControl}Control results using the proposed approach.}
\end{figure}

The control results in Fig. \ref{fig:twotank_clusterControl}(c) show that the designed SMPC can bring the liquid level $h_{2}$ of the lower tank to the reference value while satisfying the system constraints. The control inputs in Fig. \ref{fig:twotank_clusterControl}(d) fluctuate between the limits in the early stages of the control process, which may result from the conservatives of the BNN model and the generated scenarios. Moreover, it is noted that the proposed approach reached the set point slower than the approach that assumes a known system model and uses the exact LPV embedding in \cite{hanema2021stabilizing}, as the data-driven model can be conservative, compared with the exact LPV model of the system. However, the data-driven model can be refined using the closed-loop data to improve the control performance, which will be investigated in the future work.  

\section{Concluding Remarks}

In this paper, a learning-based MPC design approach was proposed for systems described in the LPV framework. BNNs were used to learn from input-output data an LPV-SS model with epistemic uncertainty quantification. Then, the epistemic uncertainty from the system identification and imprecise knowledge of the future scheduling variables were jointly considered for control design with safety guarantees. SMPC was proposed to consider safety when generating scenarios. K-means clustering and moment matching were used to generate scenarios with probabilities that can retain the stochastic properties of the joint uncertainty of the model and the scheduling variables. To guarantee closed-loop stability, parameter-dependent terminal cost, and controller were designed, which can improve the control performance, together with a terminal RPI set. Numerical experiments and simulations were used to show that the proposed approach can ensure safety and achieve the desired control performance.

In our future work, we plan to consider the effects of measurement noise of scheduling variables on the proposed approach, as exact measurements of these parameters can be impractical in real applications. Moreover, we will improve the proposed approaches to evaluate the probabilistic safety of BNN models and develop online adaptation approaches to reduce the conservativeness of BNN models using closed-loop data. 


\section*{Disclosure Statement}

No potential conflict of interest was reported by the authors.

\section*{Funding}

This work was financially supported by the United States National Science Foundation under award \#1912757. The second author's work was funded by the Deutsche Forschungsgemeinschaft (DFG, German Research Foundation) under project \#419290163.
\bibliographystyle{apacite}
\bibliography{ref}

\begin{thebibliography}{}

\bibitem [\protect \citeauthoryear {%
Abbas%
\ \protect \BOthers {.}}{%
Abbas%
\ \protect \BOthers {.}}{%
{\protect \APACyear {2021}}%
}]{%
abbas2021lpv}
\APACinsertmetastar {%
abbas2021lpv}%
\begin{APACrefauthors}%
Abbas, H\BPBI S.%
, T{\'o}th, R.%
, Petreczky, M.%
, Meskin, N.%
, Mohammadpour~Velni, J.%
\BCBL {}\ \BBA {} Koelewijn, P\BPBI J.%
\end{APACrefauthors}%
\unskip\
\newblock
\APACrefYearMonthDay{2021}{}{}.
\newblock
{\BBOQ}\APACrefatitle {LPV modeling of nonlinear systems: A multi-path feedback
  linearization approach} {Lpv modeling of nonlinear systems: A multi-path
  feedback linearization approach}.{\BBCQ}
\newblock
\APACjournalVolNumPages{International Journal of Robust and Nonlinear
  Control}{31}{18}{9436--9465}.
\PrintBackRefs{\CurrentBib}

\bibitem [\protect \citeauthoryear {%
Aswani%
, Gonzalez%
, Sastry%
\BCBL {}\ \BBA {} Tomlin%
}{%
Aswani%
\ \protect \BOthers {.}}{%
{\protect \APACyear {2013}}%
}]{%
aswani2013provably}
\APACinsertmetastar {%
aswani2013provably}%
\begin{APACrefauthors}%
Aswani, A.%
, Gonzalez, H.%
, Sastry, S\BPBI S.%
\BCBL {}\ \BBA {} Tomlin, C.%
\end{APACrefauthors}%
\unskip\
\newblock
\APACrefYearMonthDay{2013}{}{}.
\newblock
{\BBOQ}\APACrefatitle {Provably safe and robust learning-based model predictive
  control} {Provably safe and robust learning-based model predictive
  control}.{\BBCQ}
\newblock
\APACjournalVolNumPages{Automatica}{49}{5}{1216--1226}.
\PrintBackRefs{\CurrentBib}

\bibitem [\protect \citeauthoryear {%
Bao%
, Chan%
, Mesbah%
\BCBL {}\ \BBA {} Velni%
}{%
Bao%
\ \protect \BOthers {.}}{%
{\protect \APACyear {2022}}%
}]{%
bao2022acc}
\APACinsertmetastar {%
bao2022acc}%
\begin{APACrefauthors}%
Bao, Y.%
, Chan, K\BPBI J.%
, Mesbah, A.%
\BCBL {}\ \BBA {} Velni, J\BPBI M.%
\end{APACrefauthors}%
\unskip\
\newblock
\APACrefYearMonthDay{2022}{}{}.
\newblock
{\BBOQ}\APACrefatitle {Learning-based Adaptive-Scenario-Tree Model Predictive
  Control with Probabilistic Safety Guarantees Using Bayesian Neural Networks}
  {Learning-based adaptive-scenario-tree model predictive control with
  probabilistic safety guarantees using bayesian neural networks}.{\BBCQ}
\newblock
\BIn{} \APACrefbtitle {2022 American Control Conference (ACC)} {2022 american
  control conference (acc)}\ (\BPG~3260-3265).
\newblock
\begin{APACrefDOI} \doi{10.23919/ACC53348.2022.9867798} \end{APACrefDOI}
\PrintBackRefs{\CurrentBib}

\bibitem [\protect \citeauthoryear {%
Bao%
, Chan%
, Mesbah%
\BCBL {}\ \BBA {} Velni%
}{%
Bao%
\ \protect \BOthers {.}}{%
{\protect \APACyear {2023}}%
}]{%
bao2023learning}
\APACinsertmetastar {%
bao2023learning}%
\begin{APACrefauthors}%
Bao, Y.%
, Chan, K\BPBI J.%
, Mesbah, A.%
\BCBL {}\ \BBA {} Velni, J\BPBI M.%
\end{APACrefauthors}%
\unskip\
\newblock
\APACrefYearMonthDay{2023}{}{}.
\newblock
{\BBOQ}\APACrefatitle {Learning-based adaptive-scenario-tree model predictive
  control with improved probabilistic safety using robust Bayesian neural
  networks} {Learning-based adaptive-scenario-tree model predictive control
  with improved probabilistic safety using robust bayesian neural
  networks}.{\BBCQ}
\newblock
\APACjournalVolNumPages{International Journal of Robust and Nonlinear
  Control}{33}{5}{3312--3333}.
\PrintBackRefs{\CurrentBib}

\bibitem [\protect \citeauthoryear {%
{Bao}%
, {Mohammadpour Velni}%
\BCBL {}\ \BBA {} {Shahbakhti}%
}{%
{Bao}%
\ \protect \BOthers {.}}{%
{\protect \APACyear {2021}}%
}]{%
bao2020cdc}
\APACinsertmetastar {%
bao2020cdc}%
\begin{APACrefauthors}%
{Bao}, Y.%
, {Mohammadpour Velni}, J.%
\BCBL {}\ \BBA {} {Shahbakhti}, M.%
\end{APACrefauthors}%
\unskip\
\newblock
\APACrefYearMonthDay{2021}{}{}.
\newblock
{\BBOQ}\APACrefatitle {Epistemic Uncertainty Quantification in State-Space
  {LPV} Model Identification Using {B}ayesian Neural Networks} {Epistemic
  uncertainty quantification in state-space {LPV} model identification using
  {B}ayesian neural networks}.{\BBCQ}
\newblock
\APACjournalVolNumPages{IEEE Control Systems Letters}{5}{2}{719-724}.
\PrintBackRefs{\CurrentBib}

\bibitem [\protect \citeauthoryear {%
Bao%
\ \BBA {} Mohammadpour~Velni%
}{%
Bao%
\ \BBA {} Mohammadpour~Velni%
}{%
{\protect \APACyear {2022}}%
}]{%
bao2022safe}
\APACinsertmetastar {%
bao2022safe}%
\begin{APACrefauthors}%
Bao, Y.%
\BCBT {}\ \BBA {} Mohammadpour~Velni, J.%
\end{APACrefauthors}%
\unskip\
\newblock
\APACrefYearMonthDay{2022}{}{}.
\newblock
{\BBOQ}\APACrefatitle {Safe control of nonlinear systems in {LPV} framework
  using model-based reinforcement learning} {Safe control of nonlinear systems
  in {LPV} framework using model-based reinforcement learning}.{\BBCQ}
\newblock
\APACjournalVolNumPages{International Journal of Control}{}{}{1--12}.
\PrintBackRefs{\CurrentBib}

\bibitem [\protect \citeauthoryear {%
Bao%
, Mohammadpour~Velni%
\BCBL {}\ \BBA {} Shahbakhti%
}{%
Bao%
, Mohammadpour~Velni%
\BCBL {}\ \BBA {} Shahbakhti%
}{%
{\protect \APACyear {2020}}%
}]{%
bao2020online}
\APACinsertmetastar {%
bao2020online}%
\begin{APACrefauthors}%
Bao, Y.%
, Mohammadpour~Velni, J.%
\BCBL {}\ \BBA {} Shahbakhti, M.%
\end{APACrefauthors}%
\unskip\
\newblock
\APACrefYearMonthDay{2020}{}{}.
\newblock
{\BBOQ}\APACrefatitle {An online transfer learning approach for identification
  and predictive control design with application to {RCCI} engines} {An online
  transfer learning approach for identification and predictive control design
  with application to {RCCI} engines}.{\BBCQ}
\newblock
\BIn{} \APACrefbtitle {Dynamic Systems and Control Conference} {Dynamic systems
  and control conference}\ (\BVOL\ 84270, \BPG~V001T21A003).
\PrintBackRefs{\CurrentBib}

\bibitem [\protect \citeauthoryear {%
Bao%
, Velni%
, Basina%
\BCBL {}\ \BBA {} Shahbakhti%
}{%
Bao%
, Velni%
\BCBL {}\ \protect \BOthers {.}}{%
{\protect \APACyear {2020}}%
}]{%
bao2020ifac}
\APACinsertmetastar {%
bao2020ifac}%
\begin{APACrefauthors}%
Bao, Y.%
, Velni, J\BPBI M.%
, Basina, A.%
\BCBL {}\ \BBA {} Shahbakhti, M.%
\end{APACrefauthors}%
\unskip\
\newblock
\APACrefYearMonthDay{2020}{}{}.
\newblock
{\BBOQ}\APACrefatitle {Identification of state-space linear parameter-varying
  models using artificial neural networks} {Identification of state-space
  linear parameter-varying models using artificial neural networks}.{\BBCQ}
\newblock
\APACjournalVolNumPages{IFAC-PapersOnLine}{53}{2}{5286--5291}.
\PrintBackRefs{\CurrentBib}

\bibitem [\protect \citeauthoryear {%
Blei%
, Kucukelbir%
\BCBL {}\ \BBA {} McAuliffe%
}{%
Blei%
\ \protect \BOthers {.}}{%
{\protect \APACyear {2017}}%
}]{%
Blei_2017}
\APACinsertmetastar {%
Blei_2017}%
\begin{APACrefauthors}%
Blei, D\BPBI M.%
, Kucukelbir, A.%
\BCBL {}\ \BBA {} McAuliffe, J\BPBI D.%
\end{APACrefauthors}%
\unskip\
\newblock
\APACrefYearMonthDay{2017}{Feb}{}.
\newblock
{\BBOQ}\APACrefatitle {Variational Inference: A Review for Statisticians}
  {Variational inference: A review for statisticians}.{\BBCQ}
\newblock
\APACjournalVolNumPages{Journal of the American Statistical
  Association}{112}{518}{859–877}.
\newblock
\begin{APACrefURL} \url{http://dx.doi.org/10.1080/01621459.2017.1285773}
  \end{APACrefURL}
\newblock
\begin{APACrefDOI} \doi{10.1080/01621459.2017.1285773} \end{APACrefDOI}
\PrintBackRefs{\CurrentBib}

\bibitem [\protect \citeauthoryear {%
Blundell%
, Cornebise%
, Kavukcuoglu%
\BCBL {}\ \BBA {} Wierstra%
}{%
Blundell%
\ \protect \BOthers {.}}{%
{\protect \APACyear {2015}}%
}]{%
blundell2015weight}
\APACinsertmetastar {%
blundell2015weight}%
\begin{APACrefauthors}%
Blundell, C.%
, Cornebise, J.%
, Kavukcuoglu, K.%
\BCBL {}\ \BBA {} Wierstra, D.%
\end{APACrefauthors}%
\unskip\
\newblock
\APACrefYearMonthDay{2015}{}{}.
\newblock
{\BBOQ}\APACrefatitle {Weight uncertainty in neural network} {Weight
  uncertainty in neural network}.{\BBCQ}
\newblock
\BIn{} (\BPGS\ 1613--1622).
\PrintBackRefs{\CurrentBib}

\bibitem [\protect \citeauthoryear {%
Bonzanini%
, Paulson%
\BCBL {}\ \BBA {} Mesbah%
}{%
Bonzanini%
\ \protect \BOthers {.}}{%
{\protect \APACyear {2020}}%
}]{%
9304310}
\APACinsertmetastar {%
9304310}%
\begin{APACrefauthors}%
Bonzanini, A\BPBI D.%
, Paulson, J\BPBI A.%
\BCBL {}\ \BBA {} Mesbah, A.%
\end{APACrefauthors}%
\unskip\
\newblock
\APACrefYearMonthDay{2020}{}{}.
\newblock
{\BBOQ}\APACrefatitle {Safe Learning-based Model Predictive Control under
  State- and Input-dependent Uncertainty using Scenario Trees} {Safe
  learning-based model predictive control under state- and input-dependent
  uncertainty using scenario trees}.{\BBCQ}
\newblock
\BIn{} (\BPG~2448-2454).
\newblock
\begin{APACrefDOI} \doi{10.1109/CDC42340.2020.9304310} \end{APACrefDOI}
\PrintBackRefs{\CurrentBib}

\bibitem [\protect \citeauthoryear {%
Calafiore%
\ \BBA {} Fagiano%
}{%
Calafiore%
\ \BBA {} Fagiano%
}{%
{\protect \APACyear {2013}}%
}]{%
calafiore2013stochastic}
\APACinsertmetastar {%
calafiore2013stochastic}%
\begin{APACrefauthors}%
Calafiore, G\BPBI C.%
\BCBT {}\ \BBA {} Fagiano, L.%
\end{APACrefauthors}%
\unskip\
\newblock
\APACrefYearMonthDay{2013}{}{}.
\newblock
{\BBOQ}\APACrefatitle {Stochastic model predictive control of {LPV} systems via
  scenario optimization} {Stochastic model predictive control of {LPV} systems
  via scenario optimization}.{\BBCQ}
\newblock
\APACjournalVolNumPages{Automatica}{49}{6}{1861--1866}.
\PrintBackRefs{\CurrentBib}

\bibitem [\protect \citeauthoryear {%
Casavola%
, Famularo%
\BCBL {}\ \BBA {} Franze%
}{%
Casavola%
\ \protect \BOthers {.}}{%
{\protect \APACyear {2008}}%
}]{%
casavola2008predictive}
\APACinsertmetastar {%
casavola2008predictive}%
\begin{APACrefauthors}%
Casavola, A.%
, Famularo, D.%
\BCBL {}\ \BBA {} Franze, G.%
\end{APACrefauthors}%
\unskip\
\newblock
\APACrefYearMonthDay{2008}{}{}.
\newblock
{\BBOQ}\APACrefatitle {A predictive control strategy for norm-bounded {LPV}
  discrete-time systems with bounded rates of parameter change} {A predictive
  control strategy for norm-bounded {LPV} discrete-time systems with bounded
  rates of parameter change}.{\BBCQ}
\newblock
\APACjournalVolNumPages{International Journal of Robust and Nonlinear Control:
  IFAC-Affiliated Journal}{18}{7}{714--740}.
\PrintBackRefs{\CurrentBib}

\bibitem [\protect \citeauthoryear {%
Clevert%
, Unterthiner%
\BCBL {}\ \BBA {} Hochreiter%
}{%
Clevert%
\ \protect \BOthers {.}}{%
{\protect \APACyear {2015}}%
}]{%
clevert2015fast}
\APACinsertmetastar {%
clevert2015fast}%
\begin{APACrefauthors}%
Clevert, D\BHBI A.%
, Unterthiner, T.%
\BCBL {}\ \BBA {} Hochreiter, S.%
\end{APACrefauthors}%
\unskip\
\newblock
\APACrefYearMonthDay{2015}{}{}.
\newblock
{\BBOQ}\APACrefatitle {Fast and accurate deep network learning by exponential
  linear units (elus)} {Fast and accurate deep network learning by exponential
  linear units (elus)}.{\BBCQ}
\newblock
\APACjournalVolNumPages{arXiv preprint arXiv:1511.07289}{}{}{}.
\PrintBackRefs{\CurrentBib}

\bibitem [\protect \citeauthoryear {%
Cox%
}{%
Cox%
}{%
{\protect \APACyear {2018}}%
}]{%
cox2018towards}
\APACinsertmetastar {%
cox2018towards}%
\begin{APACrefauthors}%
Cox, P\BPBI B.%
\end{APACrefauthors}%
\unskip\
\newblock
\APACrefYear{2018}.
\unskip\
\newblock
\APACrefbtitle {Towards efficient identification of linear parameter-varying
  state-space models} {Towards efficient identification of linear
  parameter-varying state-space models}\ \APACtypeAddressSchool {\BUPhD}{}{}.
\unskip\
\newblock
\APACaddressSchool {}{Eindhoven University of Technology}.
\PrintBackRefs{\CurrentBib}

\bibitem [\protect \citeauthoryear {%
Defourny%
}{%
Defourny%
}{%
{\protect \APACyear {2010}}%
}]{%
defourny2010machine}
\APACinsertmetastar {%
defourny2010machine}%
\begin{APACrefauthors}%
Defourny, B.%
\end{APACrefauthors}%
\unskip\
\newblock
\APACrefYearMonthDay{2010}{}{}.
\newblock
{\BBOQ}\APACrefatitle {Machine learning solution methods for multistage
  stochastic programming} {Machine learning solution methods for multistage
  stochastic programming}.{\BBCQ}
\newblock
\APACjournalVolNumPages{PhD diss., University of Liege. https://www.lehigh.
  edu/defourny/PhDthesis\_B\_Defourny. pdf}{}{}{}.
\PrintBackRefs{\CurrentBib}

\bibitem [\protect \citeauthoryear {%
Ellis%
, Liu%
\BCBL {}\ \BBA {} Christofides%
}{%
Ellis%
\ \protect \BOthers {.}}{%
{\protect \APACyear {2017}}%
}]{%
ellis2017economic}
\APACinsertmetastar {%
ellis2017economic}%
\begin{APACrefauthors}%
Ellis, M.%
, Liu, J.%
\BCBL {}\ \BBA {} Christofides, P\BPBI D.%
\end{APACrefauthors}%
\unskip\
\newblock
\APACrefYearMonthDay{2017}{}{}.
\newblock
{\BBOQ}\APACrefatitle {Economic model predictive control} {Economic model
  predictive control}.{\BBCQ}
\newblock
\APACjournalVolNumPages{Springer}{5}{7}{65}.
\PrintBackRefs{\CurrentBib}

\bibitem [\protect \citeauthoryear {%
Gilbert%
\ \BBA {} Tan%
}{%
Gilbert%
\ \BBA {} Tan%
}{%
{\protect \APACyear {1991}}%
}]{%
gilbert1991linear}
\APACinsertmetastar {%
gilbert1991linear}%
\begin{APACrefauthors}%
Gilbert, E\BPBI G.%
\BCBT {}\ \BBA {} Tan, K\BPBI T.%
\end{APACrefauthors}%
\unskip\
\newblock
\APACrefYearMonthDay{1991}{}{}.
\newblock
{\BBOQ}\APACrefatitle {Linear systems with state and control constraints: The
  theory and application of maximal output admissible sets} {Linear systems
  with state and control constraints: The theory and application of maximal
  output admissible sets}.{\BBCQ}
\newblock
\APACjournalVolNumPages{IEEE Transactions on Automatic
  control}{36}{9}{1008--1020}.
\PrintBackRefs{\CurrentBib}

\bibitem [\protect \citeauthoryear {%
Gowal%
\ \protect \BOthers {.}}{%
Gowal%
\ \protect \BOthers {.}}{%
{\protect \APACyear {2018}}%
}]{%
gowal2018effectiveness}
\APACinsertmetastar {%
gowal2018effectiveness}%
\begin{APACrefauthors}%
Gowal, S.%
, Dvijotham, K.%
, Stanforth, R.%
, Bunel, R.%
, Qin, C.%
, Uesato, J.%
\BDBL {}Kohli, P.%
\end{APACrefauthors}%
\unskip\
\newblock
\APACrefYearMonthDay{2018}{}{}.
\newblock
{\BBOQ}\APACrefatitle {On the effectiveness of interval bound propagation for
  training verifiably robust models} {On the effectiveness of interval bound
  propagation for training verifiably robust models}.{\BBCQ}
\newblock
\APACjournalVolNumPages{arXiv preprint arXiv:1810.12715}{}{}{}.
\PrintBackRefs{\CurrentBib}

\bibitem [\protect \citeauthoryear {%
Hanema%
}{%
Hanema%
}{%
{\protect \APACyear {2018}}%
}]{%
hanema2018anticipative}
\APACinsertmetastar {%
hanema2018anticipative}%
\begin{APACrefauthors}%
Hanema, J.%
\end{APACrefauthors}%
\unskip\
\newblock
\APACrefYear{2018}.
\unskip\
\newblock
\APACrefbtitle {Anticipative model predictive control for linear
  parameter-varying systems} {Anticipative model predictive control for linear
  parameter-varying systems}\ \APACtypeAddressSchool {\BUPhD}{}{}.
\unskip\
\newblock
\APACaddressSchool {}{Technische Universiteit Eindhoven, Eindhoven, The
  Netherlands}.
\PrintBackRefs{\CurrentBib}

\bibitem [\protect \citeauthoryear {%
Hanema%
, Lazar%
\BCBL {}\ \BBA {} T{\'o}th%
}{%
Hanema%
\ \protect \BOthers {.}}{%
{\protect \APACyear {2020}}%
}]{%
hanema2020heterogeneously}
\APACinsertmetastar {%
hanema2020heterogeneously}%
\begin{APACrefauthors}%
Hanema, J.%
, Lazar, M.%
\BCBL {}\ \BBA {} T{\'o}th, R.%
\end{APACrefauthors}%
\unskip\
\newblock
\APACrefYearMonthDay{2020}{}{}.
\newblock
{\BBOQ}\APACrefatitle {Heterogeneously parameterized tube model predictive
  control for {LPV} systems} {Heterogeneously parameterized tube model
  predictive control for {LPV} systems}.{\BBCQ}
\newblock
\APACjournalVolNumPages{Automatica}{111}{}{108622}.
\PrintBackRefs{\CurrentBib}

\bibitem [\protect \citeauthoryear {%
Hanema%
, T{\'o}th%
\BCBL {}\ \BBA {} Lazar%
}{%
Hanema%
\ \protect \BOthers {.}}{%
{\protect \APACyear {2021}}%
}]{%
hanema2021stabilizing}
\APACinsertmetastar {%
hanema2021stabilizing}%
\begin{APACrefauthors}%
Hanema, J.%
, T{\'o}th, R.%
\BCBL {}\ \BBA {} Lazar, M.%
\end{APACrefauthors}%
\unskip\
\newblock
\APACrefYearMonthDay{2021}{}{}.
\newblock
{\BBOQ}\APACrefatitle {Stabilizing non-linear model predictive control using
  linear parameter-varying embeddings and tubes} {Stabilizing non-linear model
  predictive control using linear parameter-varying embeddings and
  tubes}.{\BBCQ}
\newblock
\APACjournalVolNumPages{IET Control Theory \&
  Applications}{15}{10}{1404--1421}.
\PrintBackRefs{\CurrentBib}

\bibitem [\protect \citeauthoryear {%
Hastie%
, Tibshirani%
\BCBL {}\ \BBA {} Friedman%
}{%
Hastie%
\ \protect \BOthers {.}}{%
{\protect \APACyear {2009}}%
}]{%
hastie2009model}
\APACinsertmetastar {%
hastie2009model}%
\begin{APACrefauthors}%
Hastie, T.%
, Tibshirani, R.%
\BCBL {}\ \BBA {} Friedman, J.%
\end{APACrefauthors}%
\unskip\
\newblock
\APACrefYearMonthDay{2009}{}{}.
\newblock
{\BBOQ}\APACrefatitle {Model assessment and selection} {Model assessment and
  selection}.{\BBCQ}
\newblock
\BIn{} \APACrefbtitle {The elements of statistical learning} {The elements of
  statistical learning}\ (\BPGS\ 219--259).
\newblock
\APACaddressPublisher{}{Springer}.
\PrintBackRefs{\CurrentBib}

\bibitem [\protect \citeauthoryear {%
Hewing%
, Wabersich%
, Menner%
\BCBL {}\ \BBA {} Zeilinger%
}{%
Hewing%
\ \protect \BOthers {.}}{%
{\protect \APACyear {2020}}%
}]{%
hewing2020learning}
\APACinsertmetastar {%
hewing2020learning}%
\begin{APACrefauthors}%
Hewing, L.%
, Wabersich, K\BPBI P.%
, Menner, M.%
\BCBL {}\ \BBA {} Zeilinger, M\BPBI N.%
\end{APACrefauthors}%
\unskip\
\newblock
\APACrefYearMonthDay{2020}{}{}.
\newblock
{\BBOQ}\APACrefatitle {Learning-based model predictive control: Toward safe
  learning in control} {Learning-based model predictive control: Toward safe
  learning in control}.{\BBCQ}
\newblock
\APACjournalVolNumPages{Annual Review of Control, Robotics, and Autonomous
  Systems}{3}{}{269--296}.
\PrintBackRefs{\CurrentBib}

\bibitem [\protect \citeauthoryear {%
H{\o}yland%
, Kaut%
\BCBL {}\ \BBA {} Wallace%
}{%
H{\o}yland%
\ \protect \BOthers {.}}{%
{\protect \APACyear {2003}}%
}]{%
hoyland2003heuristic}
\APACinsertmetastar {%
hoyland2003heuristic}%
\begin{APACrefauthors}%
H{\o}yland, K.%
, Kaut, M.%
\BCBL {}\ \BBA {} Wallace, S\BPBI W.%
\end{APACrefauthors}%
\unskip\
\newblock
\APACrefYearMonthDay{2003}{}{}.
\newblock
{\BBOQ}\APACrefatitle {A heuristic for moment-matching scenario generation} {A
  heuristic for moment-matching scenario generation}.{\BBCQ}
\newblock
\APACjournalVolNumPages{Computational optimization and
  applications}{24}{2}{169--185}.
\PrintBackRefs{\CurrentBib}

\bibitem [\protect \citeauthoryear {%
H{\o}yland%
\ \BBA {} Wallace%
}{%
H{\o}yland%
\ \BBA {} Wallace%
}{%
{\protect \APACyear {2001}}%
}]{%
hoyland2001generating}
\APACinsertmetastar {%
hoyland2001generating}%
\begin{APACrefauthors}%
H{\o}yland, K.%
\BCBT {}\ \BBA {} Wallace, S\BPBI W.%
\end{APACrefauthors}%
\unskip\
\newblock
\APACrefYearMonthDay{2001}{}{}.
\newblock
{\BBOQ}\APACrefatitle {Generating scenario trees for multistage decision
  problems} {Generating scenario trees for multistage decision
  problems}.{\BBCQ}
\newblock
\APACjournalVolNumPages{Management science}{47}{2}{295--307}.
\PrintBackRefs{\CurrentBib}

\bibitem [\protect \citeauthoryear {%
Ji%
, Zhu%
, Wang%
\BCBL {}\ \BBA {} Zhang%
}{%
Ji%
\ \protect \BOthers {.}}{%
{\protect \APACyear {2005}}%
}]{%
ji2005stochastic}
\APACinsertmetastar {%
ji2005stochastic}%
\begin{APACrefauthors}%
Ji, X.%
, Zhu, S.%
, Wang, S.%
\BCBL {}\ \BBA {} Zhang, S.%
\end{APACrefauthors}%
\unskip\
\newblock
\APACrefYearMonthDay{2005}{}{}.
\newblock
{\BBOQ}\APACrefatitle {A stochastic linear goal programming approach to
  multistage portfolio management based on scenario generation via linear
  programming} {A stochastic linear goal programming approach to multistage
  portfolio management based on scenario generation via linear
  programming}.{\BBCQ}
\newblock
\APACjournalVolNumPages{Iie transactions}{37}{10}{957--969}.
\PrintBackRefs{\CurrentBib}

\bibitem [\protect \citeauthoryear {%
Koller%
, Berkenkamp%
, Turchetta%
, Boedecker%
\BCBL {}\ \BBA {} Krause%
}{%
Koller%
\ \protect \BOthers {.}}{%
{\protect \APACyear {2019}}%
}]{%
koller2019learningbased}
\APACinsertmetastar {%
koller2019learningbased}%
\begin{APACrefauthors}%
Koller, T.%
, Berkenkamp, F.%
, Turchetta, M.%
, Boedecker, J.%
\BCBL {}\ \BBA {} Krause, A.%
\end{APACrefauthors}%
\unskip\
\newblock
\APACrefYearMonthDay{2019}{}{}.
\newblock
\APACrefbtitle {Learning-based Model Predictive Control for Safe Exploration
  and Reinforcement Learning.} {Learning-based model predictive control for
  safe exploration and reinforcement learning.}
\PrintBackRefs{\CurrentBib}

\bibitem [\protect \citeauthoryear {%
Koller%
, Berkenkamp%
, Turchetta%
\BCBL {}\ \BBA {} Krause%
}{%
Koller%
\ \protect \BOthers {.}}{%
{\protect \APACyear {2018}}%
}]{%
koller2018learning}
\APACinsertmetastar {%
koller2018learning}%
\begin{APACrefauthors}%
Koller, T.%
, Berkenkamp, F.%
, Turchetta, M.%
\BCBL {}\ \BBA {} Krause, A.%
\end{APACrefauthors}%
\unskip\
\newblock
\APACrefYearMonthDay{2018}{}{}.
\newblock
{\BBOQ}\APACrefatitle {Learning-based model predictive control for safe
  exploration} {Learning-based model predictive control for safe
  exploration}.{\BBCQ}
\newblock
\BIn{} (\BPGS\ 6059--6066).
\PrintBackRefs{\CurrentBib}

\bibitem [\protect \citeauthoryear {%
Lee%
\ \BBA {} Yu%
}{%
Lee%
\ \BBA {} Yu%
}{%
{\protect \APACyear {1997}}%
}]{%
LEE1997763}
\APACinsertmetastar {%
LEE1997763}%
\begin{APACrefauthors}%
Lee, J.%
\BCBT {}\ \BBA {} Yu, Z.%
\end{APACrefauthors}%
\unskip\
\newblock
\APACrefYearMonthDay{1997}{}{}.
\newblock
{\BBOQ}\APACrefatitle {Worst-case formulations of model predictive control for
  systems with bounded parameters} {Worst-case formulations of model predictive
  control for systems with bounded parameters}.{\BBCQ}
\newblock
\APACjournalVolNumPages{Automatica}{33}{5}{763-781}.
\newblock
\begin{APACrefURL}
  \url{https://www.sciencedirect.com/science/article/pii/S0005109896002555}
  \end{APACrefURL}
\newblock
\begin{APACrefDOI} \doi{https://doi.org/10.1016/S0005-1098(96)00255-5}
  \end{APACrefDOI}
\PrintBackRefs{\CurrentBib}

\bibitem [\protect \citeauthoryear {%
Liu%
, Ong%
, Shen%
\BCBL {}\ \BBA {} Cai%
}{%
Liu%
\ \protect \BOthers {.}}{%
{\protect \APACyear {2020}}%
}]{%
liu2020gaussian}
\APACinsertmetastar {%
liu2020gaussian}%
\begin{APACrefauthors}%
Liu, H.%
, Ong, Y\BHBI S.%
, Shen, X.%
\BCBL {}\ \BBA {} Cai, J.%
\end{APACrefauthors}%
\unskip\
\newblock
\APACrefYearMonthDay{2020}{}{}.
\newblock
{\BBOQ}\APACrefatitle {When {G}aussian process meets big data: A review of
  scalable {GP}s} {When {G}aussian process meets big data: A review of scalable
  {GP}s}.{\BBCQ}
\newblock
\APACjournalVolNumPages{IEEE transactions on neural networks and learning
  systems}{31}{11}{4405--4423}.
\PrintBackRefs{\CurrentBib}

\bibitem [\protect \citeauthoryear {%
Lloyd%
}{%
Lloyd%
}{%
{\protect \APACyear {1982}}%
}]{%
lloyd1982least}
\APACinsertmetastar {%
lloyd1982least}%
\begin{APACrefauthors}%
Lloyd, S.%
\end{APACrefauthors}%
\unskip\
\newblock
\APACrefYearMonthDay{1982}{}{}.
\newblock
{\BBOQ}\APACrefatitle {Least squares quantization in PCM} {Least squares
  quantization in pcm}.{\BBCQ}
\newblock
\APACjournalVolNumPages{IEEE transactions on information
  theory}{28}{2}{129--137}.
\PrintBackRefs{\CurrentBib}

\bibitem [\protect \citeauthoryear {%
Lucia%
, Finkler%
\BCBL {}\ \BBA {} Engell%
}{%
Lucia%
\ \protect \BOthers {.}}{%
{\protect \APACyear {2013}}%
}]{%
lucia2013multi}
\APACinsertmetastar {%
lucia2013multi}%
\begin{APACrefauthors}%
Lucia, S.%
, Finkler, T.%
\BCBL {}\ \BBA {} Engell, S.%
\end{APACrefauthors}%
\unskip\
\newblock
\APACrefYearMonthDay{2013}{}{}.
\newblock
{\BBOQ}\APACrefatitle {Multi-stage nonlinear model predictive control applied
  to a semi-batch polymerization reactor under uncertainty} {Multi-stage
  nonlinear model predictive control applied to a semi-batch polymerization
  reactor under uncertainty}.{\BBCQ}
\newblock
\APACjournalVolNumPages{Journal of process control}{23}{9}{1306--1319}.
\PrintBackRefs{\CurrentBib}

\bibitem [\protect \citeauthoryear {%
Luo%
, Nguyen%
, Fleming%
\BCBL {}\ \BBA {} Zhang%
}{%
Luo%
\ \protect \BOthers {.}}{%
{\protect \APACyear {2021}}%
}]{%
9374094}
\APACinsertmetastar {%
9374094}%
\begin{APACrefauthors}%
Luo, Q.%
, Nguyen, A\BHBI T.%
, Fleming, J.%
\BCBL {}\ \BBA {} Zhang, H.%
\end{APACrefauthors}%
\unskip\
\newblock
\APACrefYearMonthDay{2021}{}{}.
\newblock
{\BBOQ}\APACrefatitle {Unknown Input Observer Based Approach for Distributed
  Tube-Based Model Predictive Control of Heterogeneous Vehicle Platoons}
  {Unknown input observer based approach for distributed tube-based model
  predictive control of heterogeneous vehicle platoons}.{\BBCQ}
\newblock
\APACjournalVolNumPages{IEEE Transactions on Vehicular
  Technology}{70}{4}{2930-2944}.
\newblock
\begin{APACrefDOI} \doi{10.1109/TVT.2021.3064680} \end{APACrefDOI}
\PrintBackRefs{\CurrentBib}

\bibitem [\protect \citeauthoryear {%
Maiworm%
, B{\"a}thge%
\BCBL {}\ \BBA {} Findeisen%
}{%
Maiworm%
\ \protect \BOthers {.}}{%
{\protect \APACyear {2015}}%
}]{%
maiworm2015scenario}
\APACinsertmetastar {%
maiworm2015scenario}%
\begin{APACrefauthors}%
Maiworm, M.%
, B{\"a}thge, T.%
\BCBL {}\ \BBA {} Findeisen, R.%
\end{APACrefauthors}%
\unskip\
\newblock
\APACrefYearMonthDay{2015}{}{}.
\newblock
{\BBOQ}\APACrefatitle {Scenario-based model predictive control: Recursive
  feasibility and stability} {Scenario-based model predictive control:
  Recursive feasibility and stability}.{\BBCQ}
\newblock
\APACjournalVolNumPages{IFAC-PapersOnLine}{48}{8}{50--56}.
\PrintBackRefs{\CurrentBib}

\bibitem [\protect \citeauthoryear {%
Mayne%
, Rawlings%
, Rao%
\BCBL {}\ \BBA {} Scokaert%
}{%
Mayne%
\ \protect \BOthers {.}}{%
{\protect \APACyear {2000}}%
}]{%
mayne2000constrained}
\APACinsertmetastar {%
mayne2000constrained}%
\begin{APACrefauthors}%
Mayne, D\BPBI Q.%
, Rawlings, J\BPBI B.%
, Rao, C\BPBI V.%
\BCBL {}\ \BBA {} Scokaert, P\BPBI O.%
\end{APACrefauthors}%
\unskip\
\newblock
\APACrefYearMonthDay{2000}{}{}.
\newblock
{\BBOQ}\APACrefatitle {Constrained model predictive control: Stability and
  optimality} {Constrained model predictive control: Stability and
  optimality}.{\BBCQ}
\newblock
\APACjournalVolNumPages{Automatica}{36}{6}{789--814}.
\PrintBackRefs{\CurrentBib}

\bibitem [\protect \citeauthoryear {%
Mesbah%
}{%
Mesbah%
}{%
{\protect \APACyear {2018}}%
}]{%
mesbah2018stochastic}
\APACinsertmetastar {%
mesbah2018stochastic}%
\begin{APACrefauthors}%
Mesbah, A.%
\end{APACrefauthors}%
\unskip\
\newblock
\APACrefYearMonthDay{2018}{}{}.
\newblock
{\BBOQ}\APACrefatitle {Stochastic model predictive control with active
  uncertainty learning: A survey on dual control} {Stochastic model predictive
  control with active uncertainty learning: A survey on dual control}.{\BBCQ}
\newblock
\APACjournalVolNumPages{Annual Reviews in Control}{45}{}{107--117}.
\PrintBackRefs{\CurrentBib}

\bibitem [\protect \citeauthoryear {%
Morato%
, Normey-Rico%
\BCBL {}\ \BBA {} Sename%
}{%
Morato%
\ \protect \BOthers {.}}{%
{\protect \APACyear {2020}}%
}]{%
morato2020model}
\APACinsertmetastar {%
morato2020model}%
\begin{APACrefauthors}%
Morato, M\BPBI M.%
, Normey-Rico, J\BPBI E.%
\BCBL {}\ \BBA {} Sename, O.%
\end{APACrefauthors}%
\unskip\
\newblock
\APACrefYearMonthDay{2020}{}{}.
\newblock
{\BBOQ}\APACrefatitle {Model predictive control design for linear parameter
  varying systems: A survey} {Model predictive control design for linear
  parameter varying systems: A survey}.{\BBCQ}
\newblock
\APACjournalVolNumPages{Annual Reviews in Control}{49}{}{64--80}.
\PrintBackRefs{\CurrentBib}

\bibitem [\protect \citeauthoryear {%
Nguyen%
}{%
Nguyen%
}{%
{\protect \APACyear {2014}}%
}]{%
Nguyen2014}
\APACinsertmetastar {%
Nguyen2014}%
\begin{APACrefauthors}%
Nguyen, H\BHBI N.%
\end{APACrefauthors}%
\unskip\
\newblock
\APACrefYearMonthDay{2014}{}{}.
\newblock
{\BBOQ}\APACrefatitle {Set Theoretic Methods in Control} {Set theoretic methods
  in control}.{\BBCQ}
\newblock
\BIn{} (\BPGS\ 7--42).
\newblock
\APACaddressPublisher{Cham}{Springer International Publishing}.
\newblock
\begin{APACrefURL} \url{https://doi.org/10.1007/978-3-319-02827-9_2}
  \end{APACrefURL}
\newblock
\begin{APACrefDOI} \doi{10.1007/978-3-319-02827-9_2} \end{APACrefDOI}
\PrintBackRefs{\CurrentBib}

\bibitem [\protect \citeauthoryear {%
Pandey%
\ \BBA {} de Oliveira%
}{%
Pandey%
\ \BBA {} de Oliveira%
}{%
{\protect \APACyear {2017}}%
}]{%
pandey2017quadratic}
\APACinsertmetastar {%
pandey2017quadratic}%
\begin{APACrefauthors}%
Pandey, A.%
\BCBT {}\ \BBA {} de Oliveira, M\BPBI C.%
\end{APACrefauthors}%
\unskip\
\newblock
\APACrefYearMonthDay{2017}{}{}.
\newblock
{\BBOQ}\APACrefatitle {Quadratic and poly-quadratic discrete-time
  stabilizability of linear parameter-varying systems} {Quadratic and
  poly-quadratic discrete-time stabilizability of linear parameter-varying
  systems}.{\BBCQ}
\newblock
\APACjournalVolNumPages{IFAC-PapersOnLine}{50}{1}{8624--8629}.
\PrintBackRefs{\CurrentBib}

\bibitem [\protect \citeauthoryear {%
Rizvi%
, Mohammadpour~Velni%
, Abbasi%
, T{\'o}th%
\BCBL {}\ \BBA {} Meskin%
}{%
Rizvi%
\ \protect \BOthers {.}}{%
{\protect \APACyear {2018}}%
}]{%
rizvi2018state}
\APACinsertmetastar {%
rizvi2018state}%
\begin{APACrefauthors}%
Rizvi, S\BPBI Z.%
, Mohammadpour~Velni, J.%
, Abbasi, F.%
, T{\'o}th, R.%
\BCBL {}\ \BBA {} Meskin, N.%
\end{APACrefauthors}%
\unskip\
\newblock
\APACrefYearMonthDay{2018}{}{}.
\newblock
{\BBOQ}\APACrefatitle {State-space {LPV} model identification using kernelized
  machine learning} {State-space {LPV} model identification using kernelized
  machine learning}.{\BBCQ}
\newblock
\APACjournalVolNumPages{Automatica}{88}{}{38--47}.
\PrintBackRefs{\CurrentBib}

\bibitem [\protect \citeauthoryear {%
Shapiro%
}{%
Shapiro%
}{%
{\protect \APACyear {2003}}%
}]{%
shapiro2003monte}
\APACinsertmetastar {%
shapiro2003monte}%
\begin{APACrefauthors}%
Shapiro, A.%
\end{APACrefauthors}%
\unskip\
\newblock
\APACrefYearMonthDay{2003}{}{}.
\newblock
{\BBOQ}\APACrefatitle {Monte Carlo sampling methods} {Monte carlo sampling
  methods}.{\BBCQ}
\newblock
\APACjournalVolNumPages{Handbooks in operations research and management
  science}{10}{}{353--425}.
\PrintBackRefs{\CurrentBib}

\bibitem [\protect \citeauthoryear {%
Wicker%
, Laurenti%
, Patane%
\BCBL {}\ \BBA {} Kwiatkowska%
}{%
Wicker%
\ \protect \BOthers {.}}{%
{\protect \APACyear {2020}}%
}]{%
wicker2020probabilistic}
\APACinsertmetastar {%
wicker2020probabilistic}%
\begin{APACrefauthors}%
Wicker, M.%
, Laurenti, L.%
, Patane, A.%
\BCBL {}\ \BBA {} Kwiatkowska, M.%
\end{APACrefauthors}%
\unskip\
\newblock
\APACrefYearMonthDay{2020}{}{}.
\newblock
{\BBOQ}\APACrefatitle {Probabilistic Safety for {B}ayesian Neural Networks}
  {Probabilistic safety for {B}ayesian neural networks}.{\BBCQ}
\newblock
\APACjournalVolNumPages{arXiv preprint arXiv:2004.10281}{}{}{}.
\PrintBackRefs{\CurrentBib}

\bibitem [\protect \citeauthoryear {%
Wills%
\ \BBA {} Ninness%
}{%
Wills%
\ \BBA {} Ninness%
}{%
{\protect \APACyear {2012}}%
}]{%
wills2012system}
\APACinsertmetastar {%
wills2012system}%
\begin{APACrefauthors}%
Wills, A.%
\BCBT {}\ \BBA {} Ninness, B.%
\end{APACrefauthors}%
\unskip\
\newblock
\APACrefYearMonthDay{2012}{}{}.
\newblock
{\BBOQ}\APACrefatitle {System identification of linear parameter varying
  state-space models} {System identification of linear parameter varying
  state-space models}.{\BBCQ}
\newblock
\BIn{} \APACrefbtitle {Linear parameter-varying system identification: new
  developments and trends} {Linear parameter-varying system identification: new
  developments and trends}\ (\BPGS\ 295--315).
\newblock
\APACaddressPublisher{}{World Scientific}.
\PrintBackRefs{\CurrentBib}

\bibitem [\protect \citeauthoryear {%
Xu%
, Chen%
\BCBL {}\ \BBA {} Yang%
}{%
Xu%
\ \protect \BOthers {.}}{%
{\protect \APACyear {2012}}%
}]{%
xu2012scenario}
\APACinsertmetastar {%
xu2012scenario}%
\begin{APACrefauthors}%
Xu, D.%
, Chen, Z.%
\BCBL {}\ \BBA {} Yang, L.%
\end{APACrefauthors}%
\unskip\
\newblock
\APACrefYearMonthDay{2012}{}{}.
\newblock
{\BBOQ}\APACrefatitle {Scenario tree generation approaches using {K}-means and
  {LP} moment matching methods} {Scenario tree generation approaches using
  {K}-means and {LP} moment matching methods}.{\BBCQ}
\newblock
\APACjournalVolNumPages{Journal of Computational and Applied
  Mathematics}{236}{17}{4561--4579}.
\PrintBackRefs{\CurrentBib}

\bibitem [\protect \citeauthoryear {%
Zhang%
, Weng%
, Chen%
, Hsieh%
\BCBL {}\ \BBA {} Daniel%
}{%
Zhang%
\ \protect \BOthers {.}}{%
{\protect \APACyear {2018}}%
}]{%
zhang2018efficient}
\APACinsertmetastar {%
zhang2018efficient}%
\begin{APACrefauthors}%
Zhang, H.%
, Weng, T\BHBI W.%
, Chen, P\BHBI Y.%
, Hsieh, C\BHBI J.%
\BCBL {}\ \BBA {} Daniel, L.%
\end{APACrefauthors}%
\unskip\
\newblock
\APACrefYearMonthDay{2018}{}{}.
\newblock
{\BBOQ}\APACrefatitle {Efficient neural network robustness certification with
  general activation functions} {Efficient neural network robustness
  certification with general activation functions}.{\BBCQ}
\newblock
\APACjournalVolNumPages{Advances in neural information processing
  systems}{31}{}{}.
\PrintBackRefs{\CurrentBib}

\end{thebibliography}

\end{document}